\DeclareMathOperator{\tr}{Tr}
\newcommand{\norm}[1]{\left\lVert#1\right\rVert}
\newcommand{\id}{\mathbb{1}}
\newcommand{\inp}[2]{\langle#1,#2\rangle}
\newcommand{\dens}[1]{|#1\rangle\!\langle#1|}
\newcommand{\ddens}[1]{|#1\rangle\!\rangle\!\langle\!\langle#1|}
\newcommand{\ket}[1]{|#1\rangle\!\rangle}
\newcommand{\bra}[1]{\langle\!\langle#1|}
\newcommand{\mc}[1]{\mathcal{#1}}
\newcommand{\md}[1]{\mathds{#1}}
\newcommand{\ct}{^\dagger}
\newcommand{\tn}[1]{^{\otimes #1}}
\newcommand{\gr}[1]{\ensuremath{{\bf \mathsf{#1}}}}
\DeclareMathOperator*{\avg}{ \raisebox{-3pt}{\text{\LARGE $\md{E}$}}}
\newtheorem{theorem}{Theorem}
\newtheorem{definition}{Definition}
\newtheorem{lemma}{Lemma}
\newcounter{notecounter}
\newcommand{\M}{\mathrm{M}_{2^q}}
\newcommand{\CNOT}{\ensuremath{\mathrm{CNOT}}}
\newcommand{\bsq}{\boldsymbol{ \sigma }_q}
\date{\today}
\begin{document}
\title{A new class of efficient randomized benchmarking protocols}
\author{Jonas Helsen}
\email{j.helsen@tudelft.nl (corresponding author)}
\affiliation{QuTech, Delft University of Technology, Lorentzweg 1, 2628 CJ Delft, The Netherlands}
\author{Xiao Xue}
\email{x.xue@tudelft.nl}
\affiliation{QuTech, Delft University of Technology, Lorentzweg 1, 2628 CJ Delft, The Netherlands}
\affiliation{Kavli Institute of Nanoscience, Delft University of Technology, 2600 GA Delft, The Netherlands}
\author{Lieven M.K. Vandersypen}
\email{l.m.k.vandersypen@tudelft.nl}
\affiliation{QuTech, Delft University of Technology, Lorentzweg 1, 2628 CJ Delft, The Netherlands}
\affiliation{Kavli Institute of Nanoscience, Delft University of Technology, 2600 GA Delft, The Netherlands}
\author{Stephanie Wehner}
\email{s.d.c.wehner@tudelft.nl}
\affiliation{QuTech, Delft University of Technology, Lorentzweg 1, 2628 CJ Delft, The Netherlands}

\begin{abstract}

\noindent Randomized benchmarking is a technique for estimating the average fidelity of a set of quantum gates. However, if this gateset is not the multi-qubit Clifford group, robustly extracting the average fidelity is difficult. Here we propose a new method based on representation theory that has little experimental overhead and robustly extracts the average fidelity for a broad class of gatesets. We apply our method to a multi-qubit gateset that includes the $T$-gate, and propose a new interleaved benchmarking protocol that extracts the average fidelity of a two-qubit Clifford gate using only single-qubit Clifford gates as reference.

\end{abstract}
 \maketitle
 \section*{Introduction}
Randomized benchmarking~\cite{dankert2006c,Magesan_2012,Emerson2005,Chow2009,Gaebler2012,Granade2014,Epstein2014} is arguably the most prominent experimental technique for assessing the quality of quantum operations in experimental quantum computing devices~\cite{Knill2008,Asaad2016,Chow2009,Barends2014,DiCarlo2009,o2015qubit,sheldon2016characterizing}. Key to the wide adoption of randomized benchmarking are its scalability with respect to the number of qubits and its insensitivity to errors in state preparation and measurement. It has also recently been shown to be insensitive to variations in the error associated to different implemented gates~\cite{wallman2018randomized,proctor2017randomized,merkel2018randomized}.

The randomized benchmarking protocol is defined with respect to a gateset $\gr{G}$, a discrete collection of quantum gates. Usually this gateset is a group, such as the Clifford group~\cite{Magesan_2012}. The goal of randomized benchmarking is to estimate the average fidelity~\cite{Nielsen2011} of this gateset.

Randomized benchmarking is performed by randomly sampling a sequence of gates of a fixed length $m$ from the gateset $\gr{G}$. This sequence is applied to an initial state $\rho$, followed by a global inversion gate such that in the absence of noise the system is returned to the starting state. Then the overlap between the output state and the initial state is estimated by measuring a two-component POVM $\{Q,\id-Q\}$. This is repeated for many sequences of the same length $m$ and the outputs are averaged, yielding a single average survival probability $p_m$. Repeating this procedure for various sequence lengths $m$ yields a list of probabilities $\{p_m\}_m$.

Usually $\gr{G}$ is chosen to be the Clifford group. It can then be shown (under the assumption of gate-independent CPTP noise)~\cite{Magesan2012a} that the data $\{p_m\}_m$ can be fitted to a single exponential decay of the form 
\begin{equation}\label{eq:rand_bench_fit}
p_m \approx_{\mathrm{fit}} A+ Bf^m
\end{equation}
where $A,B$ depend on state preparation and measurement, and the quality parameter $f$ only depends on how well the gates in the gateset $\gr{G}$ are implemented. This parameter $f$ can then be straightforwardly related to the average fidelity $F_{\mathrm{avg}}$~\cite{Magesan_2012}. The fitting relation \cref{eq:rand_bench_fit} holds intuitively because averaging over all elements 
of the Clifford group effectively depolarizes the noise affecting the input state $\rho$. This effective depolarizing noise then accretes exponentially with sequence length $m$.

However it is possible, and desirable, to perform randomized benchmarking on gatesets that are not the Clifford group,
and a wide array of proposals for randomized benchmarking using non-Clifford gatesets appear in the literature~\cite{Cross_2016,brown2018randomized,hashagen2018real,francca2018approximate,Dugas2015,Gambetta_2012_sim,harper2017estimating}. The most prominent use case is benchmarking a gateset $\gr{G}$ that includes the vital $T$-gate~\cite{Cross_2016,brown2018randomized,Dugas2015} which, together with the Clifford group, forms a universal set of gates for quantum computing~\cite{Nielsen2011}. Another use case is simultaneous randomized benchmarking~\cite{Gambetta_2012_sim}, which extracts information about crosstalk and unwanted coupling between neighboring qubits by performing randomized benchmarking on the gateset consisting of single qubit Clifford gates on all qubits. In these cases, and in other examples of randomized benchmarking with non-Clifford gatesets~\cite{hashagen2018real,Dugas2015,Gambetta_2012_sim}, the fitting relation~\cref{eq:rand_bench_fit} does not hold and must instead be generalized to 
\begin{equation}\label{eq:rand_bench_fit_gen}
p_m \approx_{\mathrm{fit}} \sum_{\lambda\in R_\gr{G}} A_\lambda f_\lambda^m,
\end{equation}
where $R_\gr{G}$ is an index set that only depends on the chosen gateset, the $f_{\lambda}$ are general `quality parameters' that only depend on the gates being implemented and the $A_\lambda$ prefactors depend only on SPAM (when the noise affecting the gates is trace preserving there will be a $\lambda\in R_{\gr{G}}$ -corresponding to the trivial subrepresentation- such that $f_{\lambda}=1$, yielding the constant offset seen in \cref{eq:rand_bench_fit}).
The above holds because averaging over sequences of elements of these non-Clifford groups averaging does not fully depolarize the noise. Rather the system state space will split into several ‘sectors’ labeled by $\lambda$, with a different depolarization rate, set by $f_\lambda$, affecting each sector.
The interpretation of the parameters $f_\lambda$ varies depending on the gateset $\gr{G}$. In the case of simultaneous randomized benchmarking~\cite{Gambetta_2012_sim} they can be interpreted as a measure of crosstalk and unwanted coupling between neighboring qubits. For other gatesets an interpretation is not always available. However, as was pointed out for specific gatesets in~\cite{Dugas2015,hashagen2018real,Cross_2016,brown2018randomized} and for general finite groups in~\cite{francca2018approximate}, the parameters $f_\lambda$ can always be jointly related (see \cref{eq:fid_gateset}) to the average fidelity $F_{\mathrm{avg}}$ of the gateset $\gr{G}$. This means that in theory randomized benchmarking can extract the average fidelity of a gateset even when it is not the Clifford group.

However in practice the multi-parameter fitting problem given by \cref{eq:rand_bench_fit_gen} is difficult to perform, with poor confidence intervals around the parameters $f_\lambda$ unless impractically large amounts of data are gathered. More fundamentally it is, even in the limit of infinite data, impossible to associate the estimates from the fitting procedure to the correct decay channel in \cref{eq:rand_bench_fit_gen} and thus to the correct $f_\lambda$, making it impossible to reliably reconstruct the average fidelity of the gateset.\\

In the current literature on non-Clifford randomized benchmarking, with the notable exception of~\cite{Dugas2015}, this issue is sidestepped by performing randomized benchmarking several times using different input states $\rho_\lambda$ that are carefully tuned to maximize one of the prefactors $A_\lambda$ while minimizing the others. This is unsatisfactory for several reasons: (1) the accuracy of the fit now depends on the preparation of $\rho_\lambda$, undoing one of the main advantages of randomized benchmarking over other methods such as direct fidelity estimation~\cite{Flammia2011}, and (2) it is, for more general gatesets, not always clear how to find such a maximizing state $\rho_\lambda$. These problems aren't necessarily prohibitive for small numbers of qubits and/or exponential decays (see for instance \cite{harper2019fault}) but they do limit the practical applicability of current non-Clifford randomized benchmarking protocols on many qubits and more generally restrict which groups can practically be benchmarked.\\

Here we propose an adaptation of the randomized benchmarking procedure, which we call character randomized benchmarking, which solves the above problems and allows reliable and efficient extraction of average fidelities for gatesets that are not the Clifford group. We begin by discussing the general method, before applying it to specific examples. Finally we discuss using character randomized benchmarking in practice and argue the new method does not impose significant experimental overhead. Previous adaptations of randomized benchmarking, as discussed in \cite{Knill2008,Muhonen2015,helsen2017multi} and in particular \cite{Dugas2015} (where the idea of projecting out exponential decays was first proposed for a single qubit protocol), can be regarded as special cases of our method.\\

\section*{Results}
In this section we present the main result of this paper: the character randomized benchmarking protocol, which leverages techniques from character theory~\cite{Fulton2004} to isolate the exponential decay channels in \cref{eq:rand_bench_fit_gen}. One can then fit these exponential decays one at a time, obtaining the quality parameters $f_\lambda$. We emphasize that the data generated by character randomized benchmarking can always be fitted to a single exponential, even if the gateset being benchmarked is not the Clifford group. Moreover our method retains its validity in the presence of leakage, which also causes deviations from single exponential behavior for standard randomized benchmarking~\cite{wallman2018randomized} (even when the gateset is the Clifford group).\\

For the rest of the paper we will use the Pauli Transfer Matrix (PTM) representation of quantum channels~\footnote{This representation is also sometimes called the Liouville representation or affine representation of quantum channels.~\cite{Wallman2014,Wolf2012}}. Key to this representation is the realization that the set of normalized non-identity Pauli matrices $\bf{\sigma}_q$ on $q$ qubits, together with the normalized identity $\sigma_0:= 2^{-q/2}\id$ forms an orthonormal basis (with respect to the trace inner product) of the Hilbert space of Hermitian matrices of dimension $2^q$. Density matrices $\rho$ and POVM elements $Q$ can then be seen as vectors and co-vectors expressed in the basis $\{\sigma_0\}\cup \bf{\sigma}_q$, denoted $\ket{\rho}$ and $\bra{Q}$ respectively. Quantum channels $\mc{E}$~\cite{Chuang1997} are then matrices (we will denote a channel and its PTM representation by the same letter) and we have $\mc{E}\ket{\rho}= \ket{\mc{E}(\rho)}$. Composition of channels $\mc{E},\mc{F}$ corresponds to multiplication of their PTM representations, that is $\ket{\mc{E}\circ \mc{F}(\rho)} = \mc{E}\mc{F}\ket{\rho}$. Moreover we can write expectation values as bra-ket inner products, i.e. $\bra{Q}\mc{E}\ket{\rho} = \tr(Q\mc{E}(\rho))$. The action of a unitary $G$ on a matrix $\rho$ is denoted $\mc{G}$, i.e. $\mc{G}\ket{\rho} = \ket{G\rho G\ct}$ and we denote its noisy implementation by $\widetilde{\mc{G}}$. For a more expansive review of the PTM representation, see section I.2 in the Supplementary Methods.\\

We will, for ease of presentation, also assume \emph{gate-independent noise}. This means we assume the existence of a CPTP map $\mc{E}$ such that $\tilde{\mc{G}} = \mc{E}\mc{G}$ for all $G\in\gr{G}$. We however emphasize that our protocol remains functional even in the presence of gate-dependent noise. We provide a formal proof of this, generalizing the modern treatment of standard randomized benchmarking with gate-dependent noise~\cite{wallman2018randomized}, in the Methods section. \\

\noindent{\bf Standard randomized benchmarking.}
Let's first briefly recall the ideas behind standard randomized benchmarking.
Subject to the assumption of gate-independent noise, the average survival probability $p_m$ of the standard randomized benchmarking procedure over a gateset $\gr{G}$ (with input state $\rho$ and measurement POVM $\{Q, \id - Q\}$) with sequence length $m$ can be written as~\cite{Magesan2012a}:
\begin{equation}\label{eq:rand_bench_av}
p_m = \bra{Q}\left(\avg_{G\in \gr{G}}\mc{G}\ct\mc{E}\mc{G}\right)^m\ket{\rho}.
\end{equation}
where $\md{E}_{G\in \gr{G}}$ denotes the uniform average over $\gr{G}$.
The key insight to randomized benchmarking is that $\mc{G}$ is a \emph{representation} (for a review of representation theory see section I.1 in the Supplementary Methods) of $G\in \gr{G}$. This representation will not be irreducible but will rather decompose into irreducible subrepresentations, that is $\mc{G} = \bigoplus_{\lambda\in R_\gr{G}}\phi_\lambda(G)$ where $R_\gr{G}$ is an index set and $\phi_\lambda$ are irreducible representations of $\gr{G}$ which we will assume to all be mutually inequivalent. Using Schur's lemma, a fundamental result in representation theory, we can write \cref{eq:rand_bench_av} as
\begin{equation}\label{eq:rand_bench_av_proj}
p_m = \sum_{\lambda} \bra{Q}  \mc{P}_\lambda\ket{\rho}f_\lambda^m
\end{equation}
where $\mc{P}_\lambda$ is the orthogonal projector onto the support of $\phi_\lambda$ (note that this is a superoperator) and $f_\lambda:= \tr(\mc{P}_\lambda \mc{E})/\tr(\mc{P}_\lambda)$ is the quality parameter associated to the representation $\phi_\lambda$ (note that the trace is taken over superoperators). This reproduces \cref{eq:rand_bench_fit_gen}. A formal proof of \cref{eq:rand_bench_av_proj} can be found in the Supplementary Methods and in~\cite{francca2018approximate}. The average fidelity of the gateset $\gr{G}$ can then be related to the parameters $f_\lambda$ as
\begin{equation}\label{eq:fid_gateset}
F_{\mathrm{avg}} = \frac{2^{-q}\sum_{\lambda\in R_{\gr{G}}} \tr(\mc{P}_\lambda)f_\lambda}{2^q+1}.
\end{equation}
Note again that $R_\gr{G}$ includes the trivial subrepresentation carried by $\ket{\id}$, so when $\mc{E}$ is a CPTP map there is a $\lambda\in R_{\gr{G}}$ for which $f_\lambda=1$.
See lemma's $4$ and $5$ in the Supplementary Methods for a proof of \cref{eq:fid_gateset}\\

\noindent{\bf Character randomized benchmarking.}
Now we present our new method called character randomized benchmarking. For this we make use of concepts from the character theory of representations~\cite{Fulton2004}.
Associated to any representation $\hat{\phi}$ of a group $\gr{\hat{G}}$ is a character function $\chi_{\hat{\phi}}:\gr{\hat{G}}\to \md{R}$, from the group to the real numbers~\footnote{Generally the character function is a map to the complex numbers, but in our case it is enough to only consider real representations.}. Associated to this character function is the following projection formula~\cite{Fulton2004}:
\begin{equation}\label{eq:proj_formula}
\avg_{\hat{G}\in \gr{\hat{G}}}\;\chi_{\hat{\phi}}(\hat{G})\mc{\hat{G}} = \frac{1}{|\hat{\phi}|}\mc{P}_{\hat{\phi}},
\end{equation}
where $\mc{P}_{\hat{\phi}}$ is the projector onto the support of all subrepresentations of $\mc{\hat{G}}$ equivalent to $\hat{\phi}$ and $|\hat{\phi}|$ is the dimension of the representation $\hat{\phi}$. We will leverage this formula to adapt the randomized benchmarking procedure in a way that singles out a particular exponential decay $f_\lambda^m$ in \cref{eq:rand_bench_fit_gen}.\\

We begin by choosing a group $\gr{G}$. We will call this group the `benchmarking group' going forward and it is for this group/gateset that we will estimate the average fidelity. In general we will have that $\mc{G} = \bigoplus_{\lambda\in R_\gr{G}}\phi_\lambda(G)$ where $R_\gr{G}$ is an index set and $\phi_\lambda$ are irreducible representations of $\gr{G}$ which we will assume to all be mutually inequivalent~\footnote{It is straightforward to extend character randomized benchmarking to also cover the presence of equivalent irreducible subrepresentation. However do not make this extension explicit here in the interest of simplicity}. Now fix a $\lambda'\in R_{\gr{G}}$. $f_{\lambda'}$ is the quality parameter associated to a specific subrepresentation $\phi_{\lambda'}$ of $\mc{G}$. Next consider a group $\hat{\gr{G}}\subset\gr{G}$ such that the PTM representation $\mc{\hat{G}}$ has a subrepresentation ${\hat{\phi}}$, with character function $\chi_{\hat{\phi}}$, that has support inside the representation $\phi_{\lambda'}$ of $\gr{G}$, i.e. $\mc{P}_{\hat{\phi}} \subset \mc{P}_{\lambda'}$ where $\mc{P}_{\lambda'}$ is again the projector onto the support of $\phi_{\lambda'}$. We will call this group $\gr{\hat{G}}$ the character group. Note that such a pair $\gr{\hat{G}},{\hat{\phi}}$ always exists; we can always choose $\gr{\hat{G}}=\gr{G}$ and ${\hat{\phi}}=\phi_{\lambda'}$. However other natural choices often exist, as we shall see when discussing examples of character randomized benchmarking. The idea behind the character randomized benchmarking protocol, described in \cref{char_rand_bench_box}, is now to effectively construct \cref{eq:proj_formula} by introducing the application of an extra gate $\hat{G}$ drawn at random from the character group $\gr{\hat{G}}$ into the standard randomized benchmarking protocol. In practice this gate will not be actively applied but must be compiled into the gate sequence following it, thus not resulting in extra noise (this holds even in the case of gate-dependent noise, see Methods).
\begin{figure}
\begin{framed}
\begin{enumerate}[leftmargin=*]
	\setlength\itemsep{-0.2em}
	\item Choose a state $\rho$ and a two-component POVM $\{Q, \id -Q\}$ such that $\tr(Q\mc{P}_{\hat{\phi}}(\rho))$ is large.
	\item Sample $ \vec{G}  =G_1,\ldots, G_m$ uniformly at random from $\gr{G}$ 
	\item Sample $\hat{G}$ uniformly at random from $\hat{\gr{G}}$ 
	\item Prepare the state $\rho$ and apply the gates $(G_1\hat{G}),G_2,\ldots G_m$
	\item Compute the inverse $G_{\mathrm{inv}} = (G_m\cdots G_1)\ct$ and apply it (note that $\hat{G}$ is not inverted)
	\item Estimate the weighted `survival probability' $k^{\hat{\lambda}'}_m(\vec{G},\hat{G})~=~|\hat{\phi}|\chi_{{\hat{\phi}}}(\hat{G})\bra{Q}\mc{\widetilde{G}}_{\mathrm{inv}}\mc{\widetilde{G}}_m\cdots \widetilde{(\mc{G}_1\mc{\hat{G}})}\ket{\rho}$
	\item Repeat for sufficient $\hat{G}\in \hat{\gr{G}}$ to estimate the average $k^{{\lambda}'}_m(\vec{G}) = \md{E}_{\hat{G}}(k^{{\lambda}'}_m(\vec{G},\hat{G}))$
	\item Repeat for sufficient $\vec{G}$ to estimate the average $k_m^{\lambda'} = \md{E}_{\vec{G}}(k^{{\lambda}'}_m(\vec{G}))$
	\item Repeat for sufficient different $m$ to fit to the exponential function $Af_{\lambda'}^m$ to obtain $f_{\lambda'}$
\end{enumerate}
\end{framed}
\caption{
{\bf The character randomized benchmarking protocol} Note the inclusion of the gate $\hat{G}$ and the average over the character function $\chi_{\hat{\phi}}$, which form the key ideas behind character randomized benchmarking. Note also that this extra gate $\hat{G}$ is compiled into the sequence of gates $(G_1,\ldots,G_m)$ and thus does not result in extra noise.}\label{char_rand_bench_box}
\end{figure}
This extra gate $\hat{G}\in \gr{\hat{G}}$ is not included when computing the global inverse $G_{\mathrm{inv}} = (G_1 \ldots G_m)\ct$.  The average over the elements of $\gr{\hat{G}}$ is also weighted by the character function $\chi_{\hat{\phi}}$ associated to the representation $\hat{\phi}$ of $\gr{\hat{G}}$.
Similar to \cref{eq:rand_bench_av} we can rewrite the uniform average over all $\vec{G}\in \gr{G}^{\times m}$ and $\hat{G}\in \gr{\hat{G}}$ as 
\begin{equation*}
k_m^{\lambda'} =|\hat{\phi}| \bra{Q}\!\!\left[\avg_{G\in \gr{G}}\mc{G}\ct \mc{E}\mc{G}\right]^m \!\! \!\!\avg_{\hat{G}\in \hat{\gr{G}}}\chi_{\hat{\phi}}(\hat{G})\mc{\hat{G}}\ket{\rho}.
\end{equation*}
Using the character projection formula (\cref{eq:proj_formula}), the linearity of quantum mechanics, and the standard randomized benchmarking representation theory formula (\cref{eq:rand_bench_av_proj}) we can write this as
\begin{equation}\label{eq:char_decay}
k_m^{\lambda'} = \sum_{\lambda\in R_\gr{G}} \bra{Q}\mc{P}_\lambda \mc{P}_{\hat{\phi}}\ket{\rho} f^m_\lambda = \bra{Q}\mc{P}_{\hat{\phi}}\ket{\rho}f^m_{\lambda'}
\end{equation}
since we have chosen $\gr{\hat{G}}$ and $\hat{\phi}$ such that $\mc{P}_{\hat{\phi}}\subset \mc{P}_{\lambda'}$. This means the character randomized benchmarking protocol isolates the exponential decay associated to the quality parameter $f_{\lambda'}$ independent of state preparation and measurement. We can now extract $f_{\lambda'}$ by fitting the data-points $k^{\lambda'}_m$ to a single exponential of the form $Af_{\lambda'}^m$. Note that this remains true even if $\mc{E}$ is not trace-preserving, i.e. the implemented gates experience leakage.
Repeating this procedure for all $\lambda'\in R_\gr{G}$ (choosing representations $\hat{\phi}$ of $\gr{\hat{G}}$ such that $\mc{P}_{\hat{\phi}}\subset\mc{P}_{\lambda'}$) we can reliably estimate all quality parameters $f_\lambda$ associated with randomized benchmarking over the group $\gr{G}$. Once we have estimated all these parameters we can use \cref{eq:fid_gateset} to obtain the average fidelity of the gateset $\gr{G}$.\\

\section*{Discussion}
We will now discuss several examples of randomized benchmarking experiments where the character randomized benchmarking approach is beneficial. The first example, benchmarking $T$-gates, is taken from the literature~\cite{Cross_2016} while the second one, performing interleaved benchmarking on a $2$-qubit gate using only single qubit gates a reference, is a new protocol. We have also implemented this last protocol to characterize a CPHASE gate between spin qubits in Si$\backslash$SiGe quantum dots, see~\cite{xue2018benchmarking}.\\

\noindent {\bf Benchmarking $T$-gates.}
The most common universal gateset considered in the literature is the Clifford$+T$ gateset~\cite{Nielsen2011}. The average fidelity of the Clifford gates can be extracted using standard randomized benchmarking over the Clifford group, but to extract the average fidelity of the $T$ gate a different approach is needed. Moreover one would like to characterize this gate in the context of larger circuits, meaning that we must find a family of multi-qubit groups that contains the $T$ gate. One choice is to perform randomized benchmarking over the group $\gr{T}_q$ generated by the $\CNOT$ gate between all pairs of qubits (in both directions), Pauli $X$  on all qubits and $T$ gates on all qubits (another choice would be to use dihedral randomized benchmarking~\cite{Dugas2015} but this is limited to single qubit systems, or to use the interleaved approach proposed in \cite{harper2017estimating}). This group is an example of a $\CNOT$-dihedral group and its use for randomized benchmarking was investigated in~\cite{Cross_2016}. There it was derived that the PTM representation of the group $\gr{T}_q$ decomposes into $3$ irreducible subrepresentations $\phi_1,\phi_2,\phi_3$ with associated quality parameters $f_1,f_2,f_3$ and projectors
\begin{equation*}
\mc{P}_1 = \ket{\sigma_0}\!\bra{\sigma_0},\;\;\mc{P}_2 =\!\!\sum_{\sigma\in \mc{Z}_q} \ket{\sigma}\!\bra{\sigma},\;\;\mc{P}_3 =\!\!\!\!\!\sum_{\sigma\in{\bf\sigma}_q/\mc{Z}_q}\! \ket{\sigma}\!\bra{\sigma},
\end{equation*}
where $\sigma_0$ is the normalized identity, ${\bf\sigma}_q$ is the set of normalized Pauli matrices and $\mc{Z}_q$ is the subset of the normalized Pauli matrices composed only of tensor products of $Z$ and $\id$. Noting that $f_1 =1$ if the implemented gates $\mc{\widetilde{G}}$ are CPTP we must estimate $f_2$ and $f_3$ in order to estimate the average fidelity of $\gr{T}_q$. Using standard randomized benchmarking this would thus lead to a two-decay, four-parameter fitting problem, but using character randomized benchmarking we can fit $f_2$ and $f_3$ separately. Let's say we want to estimate $f_2$, associated to $\phi_2$, using character randomized benchmarking. In order to perform character randomized benchmarking we must first choose a character group $\gr{\hat{G}}$. A good choice for $\gr{\hat{G}}$ is in this case the Pauli group $\gr{P}_q$. Note that $\gr{P}_q\subset\gr{T}_q$ since $T^4 = Z$ the Pauli Z matrix.

Having chosen $\gr{\hat{G}} = \gr{P}_q$ we must also choose an irreducible subrepresentation $\hat{\phi}$ of the PTM representation of the Pauli group $\gr{P}_q$ such that $\mc{P}_{\hat{\phi}}\mc{P}_2 = \mc{P}_{\hat{\phi}}$.
As explained in detail in section V.I in the Supplementary Methods the PTM representation of the Pauli group has $2^q$ irreducible inequivalent subrepresentations of dimension one. These representations $\phi_\sigma$ are each associated to an element $\sigma\in \{\sigma_0\}\cup\bf{\sigma}_q$ of the Pauli basis. Concretely we have that the projector onto the support of $\phi_\sigma$ is given by $\mc{P}_\sigma = \ket{\sigma}\!\bra{\sigma}$. This means that, to satisfy $\mc{P}_{\hat{\phi}}\mc{P}_2 = \mc{P}_{\hat{\phi}}$ we have to choose $\hat{\phi} = \phi_\sigma$ with $\sigma \in \mc{Z}_q$. One could for example choose $\sigma$ proportional to $Z\tn{q}$.
 The character associated to the representation $\phi_\sigma$ is $\chi_{\sigma}(P) = (-1)^{\inp{P}{\sigma}}$ where $\inp{P}{\sigma}=1$ if and only if $P$ and $\sigma$ anti-commute and zero otherwise (we provided a proof of this fact in section V.1 of the Supplementary Methods). Hence the character randomized benchmarking experiment with benchmarking group $\gr{T}_q$, character group $\gr{P}_q$ and subrepresentation $\hat{\phi}=\phi_\sigma$ produces data that can be described by
\begin{equation}
k_m^2 = \bra{Q}\sigma\rangle\!\rangle\!\langle\!\langle\sigma\ket{\rho}f_2^m,
\end{equation}
allowing us to reliably extract the parameter $f_2$. We can perform a similar experiment to extract $f_3$, but we must instead choose $\sigma \in \bsq\backslash\mc{Z}$. A good choice would for instance be $\sigma$ proportional to $X\tn{q}$.\\
Having extracted $f_2$ and $f_3$ we can then use \cref{eq:fid_gateset} to obtain the average fidelity of the gateset $\gr{T}_q$ as~\cite{Cross_2016}:
\begin{equation}
F_{\mathrm{avg}} = \frac{2^q-1}{2^q}\left(1 - \frac{f_2 + 2^q f_3}{2^q+1}\right)
\end{equation}
Finally we would like to note that in order to get good signal one must choose $\rho$ and $Q$ appropriately. The correct choice is suggested by \cref{eq:char_decay}. For instance, if when estimating $f_2$ as above we choose $\sigma$ proportional to $Z\tn{q}$ we must then choose $Q = \frac{1}{2}(\id+Z\tn{2})$ and $\rho  = \frac{1}{d}(\id+Z\tn{2})$. This corresponds to the even parity eigenspace (in the computational basis).\\

\noindent{\bf 2-for-1 interleaved benchmarking.}\label{ssec:sim_bench}
The next example is a new protocol, which we call 2-for-1 interleaved randomized benchmarking. It is a way to perform interleaved randomized benchmarking~\cite{Magesan_2012_interleaved} of a $2$-qubit Clifford gate $C$ using only single qubit Clifford gates as reference gates. The advantages of this are (1) lower experimental requirements and (2) a higher reference gate fidelity relative to the interleaved gate fidelity allows for a tighter estimate of the average fidelity of the interleaved gate (assuming single qubit gates have higher fidelity than two qubit gates). This latter point is related to an oft overlooked drawback of interleaved randomized benchmarking, namely that it does not yield a direct estimate of the average fidelity $F(C)$ of the interleaved gate $C$ but only gives upper and lower bounds on this fidelity. These upper and lower bounds moreover depend~\cite{dugas2016efficiently,Magesan_2012_interleaved} on the fidelity of the reference gates and can be quite loose if the fidelity of the reference gates is low. To illustrate the advantages of this protocol we have performed a simulation comparing it to standard interleaved randomized benchmarking (details can be found in section V.2 in the Supplementary Methods). Following recent single qubit randomized benchmarking and Bell state tomography results in spin qubits in Si$\backslash$SiGe quantum dots~\cite{watson2018programmable,zajac2018resonantly,huang2018fidelity} we assumed single qubit gates to have a fidelity of $F_{\mathrm{avg}}^{(1)} = 0.987$ and two-qubit gates to have a fidelity of $F_{\mathrm{avg}}(C) = 0.898$. Using standard interleaved randomized benchmarking~\cite{Magesan_2012_interleaved} we can guarantee (using the optimal bounds of \cite{dugas2016efficiently}) that the fidelity of the interleaved gate is lower bounded by $F_{\mathrm{avg}}^{\mathrm{int}} \approx 0.62$ while using 2-for-1 interleaved randomized benchmarking we can guarantee that the fidelity of interleaved gate is lower bounded by $F_{\mathrm{avg}}(C) \approx 0.79$, a significant improvement that is moreover obtained by a protocol requiring less experimental resources. On top of this the 2-for-1 randomized benchmarking protocol provides strictly more information than simply the average fidelity, we can also extract a measure of correlation between the two qubits, as per~\cite{Gambetta_2012_sim}. In another paper~\cite{xue2018benchmarking} we have used this protocol to characterize a CPHASE gate between spin qubits in Si$\backslash$SiGe quantum dots.\\

An interleaved benchmarking experiment consists of two stages, a reference experiment and an interleaved experiment. The reference experiment for 2-for-1 interleaved randomized benchmarking consists of character randomized benchmarking using $2$ copies of the single-qubit Clifford group $\gr{G} = \gr{C}_1\tn{2}$ as the benchmarking group (this is also the group considered in simultaneous randomized benchmarking~\cite{Gambetta_2012_sim}). The PTM representation of $\gr{C}_1\tn{2}$decomposes into four irreducible subrepresentations and thus the fitting problem of a randomized benchmarking experiment over this group involves $4$ quality parameters $f_w$ indexed by $w=(w_1,w_2)\in \{0,1\}^{\times 2}$. The projectors onto the associated irreducible representations $\phi_w$ are
\begin{equation}\label{eq:2for1_reps}
\mc{P}_w =\sum_{\sigma\in {\bf\sigma}_w} \ket{\sigma}\!\bra{\sigma}
\end{equation}
where ${\bf \sigma}_w$ is the set of normalized $2$-qubit Pauli matrices that have non-identity Pauli matrices at the $i$'th tensor factor if and only if $w_i=1$. To perform character randomized benchmarking we choose as character group $\gr{\hat{G}} =\gr{P}_2$ the $2$-qubit Pauli group. For each $w\in\{0,1\}^{\times 2}$ we can isolate the parameter $f_w$ by correctly choosing a subrepresentation $\phi_\sigma$ of the PTM representation of $\gr{P}_2$. Recalling that $\mc{P}_\sigma = \ket{\sigma}\!\bra{\sigma}$ we can choose $\hat{\phi} = \phi_{\sigma}$ for $\sigma = (Z^w_1\otimes Z^w_2)/2$ to isolate the parameter $f_w$ for $w=(w_1,w_2)\in \{0,1\}^{\times 2}$. We give the character functions associated to these representation in section V.2 of the Supplementary Methods.
Once we have obtained all quality parameters $f_w$ we can compute the average reference fidelity $F_{\mathrm{ref}}$ using \cref{eq:fid_gateset}.

The interleaved experiment similarly consists of a character randomized benchmarking experiment using $\gr{G} = \gr{C}_1\tn{2}$ but for every sequence $\vec{G} = (G_1,\ldots,G_m)$ we apply the sequence $(G_1,C,G_2,\ldots,C,G_m)$ instead, where $C$ is a $2$-qubit interleaving gate (from the $2$-qubit Clifford group). Note that we must then also invert this sequence (with $C$) to the identity~\cite{Magesan_2012_interleaved}. Similarly choosing $\gr{\hat{G}}=\gr{P}_2$ we can again isolate the parameters $f_w$ and from these compute the `interleaved fidelity' $F_{\mathrm{int}}$. Using the method detailed in~\cite{dugas2016efficiently} we can then calculate upper and lower bounds on the average fidelity $F_{\mathrm{avg}}(C)$ of the gate $C$ from the reference fidelity $F_{\mathrm{ref}}$ and the interleaved fidelity $F_{\mathrm{int}}$. Note that it is not trivial that the interleaved experiment yields data that can be described by a single exponential decay, we will discuss this in greater detail in the methods section.\\

Finally we would like to note that the character benchmarking protocol can be used in many more scenarios than the ones outlined here. Character randomized benchmarking is versatile enough that when we want to perform randomized benchmarking we can consider first what group is formed by the native gates in our device and then use character benchmarking to extract gate fidelities from this group directly, as opposed to carefully compiling the Clifford group out of the native gates which would be required for standard randomized benchmarking. This advantage is especially pronounced when the native two-qubit gates are not part of the Clifford group, which is the case for e.g. the $\sqrt{\mathrm{SWAP}}$ gate~\cite{kalra2014robust,li2018crossbar}.

\section*{Methods}
In this section will discuss three things: (1) The statistical behavior and scalability of character randomized benchmarking, (2) the robustness of character randomized benchmarking against gate-dependent noise, and (3) the behavior of interleaved character randomized benchmarking, and in particular 2-for-1 interleaved benchmarking.\\

\noindent First we will consider whether the character randomized benchmarking protocol is efficiently scalable with respect to the number of qubits (like standard randomized benchmarking) and whether the character randomized benchmarking protocol remains practical when only a finite amount of data can be gathered (this last point is a sizable line of research for standard randomized benchmarking~\cite{Wallman2014,helsen2017multi,hincks2018bayesian,Granade2014}).\\

 \noindent{\bf Scalability of character randomized benchmarking.}

The resource cost (the number of experimental runs that must be performed to obtain an estimate of the average fidelity) of character randomized benchmarking can be split into two contributions: (1) The number of quality parameters $f_{\lambda}$ associated that must be estimated  (this is essentially set by $|R_{\gr{G}}|$, the number of irreducible subrepresentations of the PTM representation of the benchmarking group $\gr{G}$), and (2) the cost of estimating a single average $k^{\lambda'}_m$ for a fixed $\lambda'\in R_{\gr{G}}$ and sequence length $m$. \\

The first contribution implies that for scalable character randomized benchmarking with (a uniform family of) groups $\gr{G}_q$ (w.r.t. the number of qubits $q$) the number of quality parameters (set by $|R_{\gr{G}}|$) must grow polynomially with $q$. This means that not all families of benchmarking groups are can be characterized by character randomized benchmarking in a scalable manner.\\

The second contribution, as can be seen in \cref{char_rand_bench_box}, further splits up into three components: (2a) the magnitude of $|\hat{\phi}|$, (2b) the number of random sequences $\vec{G}$ needed to estimate $k^{\lambda'}_m$ (given access to $k^{\lambda'}_m(\vec{G})$ ) and (2c) the number of samples needed to estimate $k^{\lambda'}_m(\vec{G})$ for a fixed sequence. We will now argue that the resource cost of all three components are essentially set by the magnitude of $|\hat{\phi}|$. Thus if $|\hat{\phi}|$ grows polynomially with the number of qubits then the entire resource cost does so as well. Hence a sufficient condition for scalable character randomized benchmarking is that one chooses a family of benchmarking groups where $|R_{\gr{G}}|$ grows polynomially in $q$ and character groups such that for the relevant subrepresentations $|\hat{\phi}|$ the dimension grows polynomially in $q$. \\

We begin by arguing (2c):The character-weighted average over the group $\gr{\hat{G}}$ for a single sequence $\vec{G}$: $k_m^{\lambda'}(\vec{G})$, involves an average over $|\gr{\hat{G}}|$ elements (which will generally scale exponentially in $q$), but can be efficiently estimated by not estimating each character-weighted expectation value $k_m^{{\lambda}'}(\vec{G},\hat{G})$ individually but rather estimate $k_m^{{\lambda}'}(\vec{G})$ directly by the following procedure

\begin{enumerate}[leftmargin=*]
	\setlength\itemsep{-0em}
	\item Sample $\hat{G}\in \gr{\hat{G}}$ uniformly at random 
	\item Prepare the state $\mc{G}_{\mathrm{inv}}\mc{G}_m\cdots \mc{G}_1\mc{\hat{G}}\ket{\rho}\vspace{0.5mm}$ and measure it once obtaining a result $b(\hat{G}) \in\{0,1\}$
	\item Compute $x(\hat{G})~\!\!:=~\!\!\chi_{\hat{\phi}}(\hat{G})|\hat{\phi}|b(\hat{G})~\!\!\in~\!\!\{0,\!\chi_{\hat{\phi}}(\hat{G})|\hat{\phi}|\} \vspace{1mm}$
	\item Repeat sufficiently many times and compute the empirical average of $x(\hat{G})$$\vspace{-0.5em}$
\end{enumerate}
Through the above procedure we are directly sampling from a bounded probability distribution with mean $k^{{\lambda}'}_m(\vec{G})$ that takes values in the interval $[-{\chi}_{\hat{\phi}}^*,{\chi}_{\hat{\phi}}^*]$ where ${\chi}_{\hat{\phi}}^*$ is the largest absolute value of the character function ${\chi}_{\hat{\phi}}$. Since the maximal absolute value of the character function is bounded by the dimension of the associated representation~\cite{Fulton2004}, this procedure will be efficient as long as $|\hat{\phi}|$ is not too big. 

For the examples given in the discussion section (with the character group being the Pauli group) the maximal character value is $1$. Using standard statistical techniques~\cite{Hoeffding1963} we can give e.g. a $99\%$ confidence interval of size $0.02$ around $k^{{\lambda}'}_m(\vec{G})$ by repeating the above procedure $1769$ times, which is within an order of magnitude of current experimental practice for confidence intervals around regular expectation values and moreover independent of the number of qubits $q$. See section VI in the Supplementary Methods for more details on this.

We now consider (2b): From the considerations above we know that $k_m^{\lambda'}(\vec{G})$ is the mean of a set of random variables and thus itself a random variable, taking possible values in the interval $[-{\chi}_{\hat{\phi}}^*,{\chi}_{\hat{\phi}}^*]$. Hence by the same reasoning as above we see that $k_m^{\lambda'}$, as the mean of a distribution (induced by the uniform distribution of sequences $\vec{G}$) confided to the interval $[-{\chi}_{\hat{\phi}}^*,{\chi}_{\hat{\phi}}^*]$ can be estimated using an amount of resources polynomially bounded in $|\hat{\phi}|$. We would like to note however that this estimate is probably overly pessimistic in light of recent results for standard randomized benchmarking on the Clifford group~\cite{Wallman2014,helsen2017multi} where it was shown that the average $k_m^{\lambda'}$ over sequences $\vec{G}\in \gr{G}^{\times m}$ can be estimated with high precision and high confidence using only a few hundred sequences. These results depend on the representation theoretic structure of the Clifford group but we suspect that it is possible to generalize these results at least partially to other families of benchmarking groups. Moreover any such result can be straightforwardly adapted to also hold for character randomized benchmarking. Actually making such estimates for other families groups is however an open problem, both for standard and character randomized benchmarking.\\

To summarize, the scalability of character randomized benchmarking depends on the properties of the families of benchmarking and character groups chosen. One should choose the benchmarking groups such that the number of exponential decays does not grow too rapidly with the number of qubits, and one should choose the character group such that the dimension of the representation being projected on does not grow too rapidly with the number of qubits.

\noindent{\bf Gate-dependent noise}
Thus far we have developed the theory of character randomized benchmarking under the assumption of gate-independent noise. This is is not a very realistic assumption. Here we will generalize our framework to include gate-dependent noise. In particular we will deal with the so called `non-Markovian' noise model. This noise model is formally specified by the existence of a function $\Phi:\gr{G}\rightarrow\mc{S}_{2^q}$ which assigns to each element $G$ of the group $\gr{G}$ a quantum channel $\Phi(G) = \mc{E}_G$. Note that this model is not the most general, it does not take into account the possibility of time dependent effects or memory effects during the experiment. It is however much more general and realistic than the gate-independent noise model.
 In this section we will prove two things:
\begin{enumerate}
	\item A character randomized benchmarking experiment always yields data that can be fitted to a single exponential decay up to a small and exponentially decreasing corrective term.
	\item The decay rates yielded by a character randomized benchmarking experiment can be related to the average fidelity (to the identity) of the noise in between gates, averaged over all gates.
\end{enumerate}
Both of these statements, and their proofs, are straightforward generalizations of the work of Wallman~\cite{wallman2018randomized} which dealt with standard randomized benchmarking. We will see that his conclusion, that randomized benchmarking measures the average fidelity of noise in between quantum gates up to a small correction, generalizes to the character benchmarking case. We begin with a technical theorem, which generalizes \cite[theorem 2]{wallman2018randomized} to twirls over arbitrary groups (with multiplicity-free PTM representations). 
\begin{theorem}\label{thm:eig_ops}
Let $\gr{G}$ be a group such that its PTM representation $\mc{G} = \bigoplus_{\lambda\in R_\gr{G}}\phi_\lambda(G)$ is multiplicity-free. Denote for all $\lambda$ by $f_\lambda$ the largest eigenvalue of the operator $\md{E}_{G\in\gr{G}}(\tilde{\mc{G}}\otimes \phi_\lambda(G))$ where $\tilde{\mc{G}}$ is the CPTP implementation of $G\in \gr{G}$. There exist Hermicity-preserving linear superoperators $\mc{L},\mc{R}$ such that 
\begin{align}
\avg_{G\in \gr{G}}(\tilde{\mc{G}}\mc{L}\mc{G}\ct) = \mc{L}\mc{D}_{\gr{G}},\label{eq:L_eig}\\
\avg_{G\in \gr{G}}(\mc{G}\ct\mc{R}\tilde{\mc{G}}) = \mc{D}_{\gr{G}}\mc{R},\label{eq:R_eig}\\
\avg_{G\in \gr{G}}(\mc{G}\mc{R}\mc{L}\mc{G}\ct) = \mc{D}_{\gr{G}},\label{eq:RL_eig}
\end{align}
where $\mc{D}_{\gr{G}}$ is defined as
\begin{equation}
\mc{D}_{\gr{G}} = \sum_{\lambda}f_\lambda \mc{P}_{\lambda},
\end{equation}
with $\mc{P}_{\lambda}$ the projector onto the representation $\phi_\lambda$ for all $\lambda\in R_\gr{G}$.
\end{theorem}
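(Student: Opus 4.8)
\emph{Proof plan.} The idea is to follow the proof of \cite[Theorem~2]{wallman2018randomized} for standard randomized benchmarking on the Clifford group, replacing its use of the unitary $2$-design property by the multiplicity-free version of Schur's lemma for $\gr{G}$. Concretely, I would vectorize the three superoperator identities into the ``doubled'' space and observe that the relevant twirled operators there block-decompose along $R_\gr{G}$, with $\lambda$-block exactly $T_\lambda:=\md{E}_{G\in\gr{G}}(\tilde{\mc{G}}\otimes\phi_\lambda(G))$, so that $\mc{L}$ and $\mc{R}$ can be assembled one block at a time from the leading eigenvectors of $T_\lambda$.

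In detail: fix a vectorization $\mc{X}\mapsto\ket{\mc{X}}$ with $X\mc{A}Y\mapsto(X\otimes Y^{\mathsf{T}})\ket{\mc{A}}$, and use that the PTM of a unitary conjugation is a real orthogonal matrix, so $(\mc{G}\ct)^{\mathsf{T}}=\overline{\mc{G}}=\mc{G}$ and $\mc{G}\ct=\mc{G}^{\mathsf{T}}$, while $\mc{D}_\gr{G}$ is real symmetric. Then \cref{eq:L_eig} vectorizes to $\md{E}_G(\tilde{\mc{G}}\otimes\mc{G})\ket{\mc{L}}=(\id\otimes\mc{D}_\gr{G})\ket{\mc{L}}$; since $\tilde{\mc{G}}\otimes\mc{G}=\bigoplus_\lambda\tilde{\mc{G}}\otimes\phi_\lambda(G)$ and $\id\otimes\mc{D}_\gr{G}=\sum_\lambda f_\lambda(\id\otimes\mc{P}_\lambda)$, writing $\mc{L}=\sum_\lambda\mc{L}_\lambda$ with $\mc{L}_\lambda:=\mc{L}\mc{P}_\lambda$ reduces \cref{eq:L_eig} to the per-block equations $T_\lambda\ket{\mc{L}_\lambda}=f_\lambda\ket{\mc{L}_\lambda}$. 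Doing the same with \cref{eq:R_eig}, now using $\mc{G}\ct=\mc{G}^{\mathsf{T}}$ and a swap conjugation, turns it into the matching \emph{left}-eigenvector equations for $T_\lambda$ at the same eigenvalue $f_\lambda$ (here $T_\lambda$ and $T_\lambda^{\mathsf{T}}$ being cospectral is what makes $f_\lambda$ the right value), with $\mc{R}_\lambda:=\mc{P}_\lambda\mc{R}$. I would then simply \emph{define} $\mc{L}$ and $\mc{R}$ this way block by block, choosing leading eigenvectors of $T_\lambda$ and taking the zero superoperator on any block with $f_\lambda=0$; these exist because $f_\lambda$ is by hypothesis an eigenvalue of $T_\lambda$. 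For the third identity, the twirl in \cref{eq:RL_eig} is by the \emph{ideal} representation, so multiplicity-free Schur gives $\avg_{G\in\gr{G}}(\mc{G}\,\mc{R}\mc{L}\,\mc{G}\ct)=\sum_\lambda\frac{\tr(\mc{P}_\lambda\mc{R}\mc{L})}{\tr(\mc{P}_\lambda)}\mc{P}_\lambda$, and since $\tr(\mc{P}_\lambda\mc{R}\mc{L})=\tr(\mc{R}_\lambda\mc{L}_\lambda)$ is just the natural pairing of the leading left and right eigen-superoperators of $T_\lambda$, rescaling $\mc{L}_\lambda$ (or $\mc{R}_\lambda$) so that this pairing equals $f_\lambda\tr(\mc{P}_\lambda)$ makes the right-hand side equal to $\mc{D}_\gr{G}$.

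It then remains to check that $\mc{L},\mc{R}$ can be taken Hermiticity preserving. A superoperator is Hermiticity preserving precisely when its matrix in the (Hermitian) Pauli basis is real; since $\tilde{\mc{G}}$ is the PTM of a CPTP --- hence Hermiticity-preserving --- map and each $\phi_\lambda(G)$ is real, $T_\lambda$ is a real matrix, so whenever $f_\lambda$ is real it admits a real eigenvector, which de-vectorizes to a Hermiticity-preserving $\mc{L}_\lambda$ (similarly $\mc{R}_\lambda$). The genuine content, which I expect to be the main obstacle, is therefore exactly the spectral statement that the leading eigenvalue $f_\lambda$ of $T_\lambda$ is real and simple --- simplicity being needed for the left/right eigenvector pairing above to be nonzero and hence normalizable. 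As in \cite{wallman2018randomized}, I would get this by perturbation theory about the noiseless case: there $T_\lambda$ reduces to $\md{E}_G(\mc{G}\otimes\phi_\lambda(G))$, whose leading eigenvalue is $1$, simple by Schur's lemma, with eigen-superoperator the intertwiner $\mc{P}_\lambda$, so for CPTP noise that is not too large the leading eigenvalue stays real and simple. Everything else --- the vectorization bookkeeping, the swap conjugation, and the Schur computations --- is routine.
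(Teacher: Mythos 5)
Your proposal follows essentially the same route as the paper's proof: block-decompose $\mc{L}$ and $\mc{R}$ along $R_\gr{G}$, vectorize via $\mathrm{vec}(ABC)=A\otimes C^{T}\mathrm{vec}(B)$ using $\mc{G}\ct=\mc{G}^{T}$ to turn \cref{eq:L_eig,eq:R_eig} into right/left eigenvector problems for $\md{E}_{G}(\tilde{\mc{G}}\otimes\phi_\lambda(G))$ at eigenvalue $f_\lambda$, and then obtain \cref{eq:RL_eig} from Schur's lemma by rescaling the per-block eigenvectors. Your additional remarks on Hermiticity preservation and on the need for the left/right eigenvector pairing $\tr(\mc{R}_\lambda\mc{L}_\lambda)$ to be nonzero (hence $f_\lambda$ simple) are points the paper's proof leaves implicit, so they are welcome but do not change the argument.
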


\begin{proof}
Using the definition of $\mc{G}$ and $\mc{D}_{\gr{G}}$ we can rewrite \cref{eq:L_eig} as 
\begin{align}
\sum_\lambda \avg_{G\in \gr{G}}(\tilde{\mc{G}}(\mc{L}\mc{P}_\lambda) \phi_\lambda(G)\ct) =\sum_\lambda f_\lambda \mc{L}\mc{P}_{\lambda}.
\end{align}
This means that, without loss of generality, we can take $\mc{L}$ to be of the form
\begin{equation}\label{eq:L_lambda_def}
\mc{L} =\sum_\lambda \mc{L}_\lambda ,\;\;\;\;\;\;\;\;\;\;\;\;\mc{L}_\lambda \mc{P}_{\lambda'} = \delta_{\lambda\lambda'} \mc{L}_\lambda ,\;\;\;\forall \lambda'.
\end{equation}
Similarly we can take $\mc{R}$ to be
\begin{equation}\label{eq:R_lambda_def}
\mc{R} =\sum_\lambda \mc{R}_\lambda ,\;\;\;\;\;\;\;\;\;\;\;\;\mc{P}_{\lambda'}\mc{R}_\lambda = \delta_{\lambda\lambda'} \mc{R}_\lambda ,\;\;\;\forall \lambda'.
\end{equation}
This means \cref{eq:R_eig,eq:L_eig} decompose into independent pairs of equations for each $\lambda$:
\begin{align}
\avg_{G\in \gr{G}}(\tilde{\mc{G}}\mc{L}_\lambda\phi_{\lambda}(G)\ct) = f_\lambda\mc{L}_\lambda\label{eq:L_eig_lambda}\\
\avg_{G\in \gr{G}}(\phi_\lambda(G)\ct\mc{R}\tilde{\mc{G}}) = f_\lambda \mc{R}_\lambda.\label{eq:R_eig_lambda}
\end{align}
Next we use the vectorization operator $\mathrm{vec}:\mathrm{M}_{2^{2q}} \rightarrow \md{R}^{2^{4q}}$ mapping the PTM representations of superoperators to vectors of length $\md{R}^{2^{4q}}$. This operator has the property that for all $A,B,C\in \mathrm{M}_{2^{2q}}$ we have
\begin{equation}
\mathrm{vec}(ABC) = A\otimes C^T \mathrm{vec}(B)
\end{equation}
where $C^T$ is the transpose of $C$. Applying this to the equations \cref{eq:L_eig_lambda,eq:R_eig_lambda} and noting that $\mc{G}\ct = \mc{G}^T$ since $\mc{G}$ is a real matrix we get the eigenvalue problems equivalent to \cref{eq:L_eig_lambda,eq:R_eig_lambda},
\begin{align}
\avg_{G\in \gr{G}}(\tilde{\mc{G}}\otimes \phi_{\lambda}(G) )\mathrm{vec}{(\mc{L}_\lambda)} = f_\lambda \mathrm{vec}{(\mc{L}_\lambda)}\\
\avg_{G\in \gr{G}}(\tilde{\mc{G}}\otimes \phi_{\lambda}(G) )^T\mathrm{vec}{(\mc{R}_\lambda)} = f_\lambda \mathrm{vec}{(\mc{R}_\lambda)}.
\end{align}
Since we have defined $f_\lambda$ to be the largest eigenvalue of $\md{E}_{G\in \gr{G}}(\tilde{\mc{G}}\otimes \phi_{\lambda}(G) )$ (and equivalently of $\md{E}_{G\in \gr{G}}(\tilde{\mc{G}}\otimes \phi_{\lambda}(G) )^T$) we can choose $\mathrm{vec}(\mc{L})$ and $\mathrm{vec}(\mc{R})$ to be the left and right eigenvectors respectively of $\md{E}_{G\in \gr{G}}(\tilde{\mc{G}}\otimes \phi_{\lambda}(G) )$ associated to $f_\lambda$. Inverting the vectorization we obtain solutions to the equations \cref{eq:L_eig_lambda,eq:R_eig_lambda} and hence also \cref{eq:L_eig,eq:R_eig}. To see that this solution also satisfies \cref{eq:RL_eig} we note first that $\md{E}_{G\in \gr{G}}(\mc{G}\mc{R}_\lambda\mc{L}_\lambda\mc{G}\ct)$ is proportional to $P_\lambda$ for any  $\mc{R}_\lambda,\mc{L}_\lambda$ satisfying \cref{eq:L_lambda_def,eq:R_lambda_def} (by Schur's lemma). Since the eigenvectors of $\md{E}_{G\in \gr{G}}(\tilde{\mc{G}}\otimes \phi_{\lambda}(G) )$ are only defined up to a constant we can for every $\lambda$ choose proportionality constants such that $\md{E}_{G\in \gr{G}}(\mc{G}\mc{R}_\lambda\mc{L}_\lambda\mc{G}\ct)=f_\lambda P_\lambda$ and thus that \cref{eq:RL_eig} is satisfied.

\end{proof}
Next we prove that if we perform a character randomized benchmarking experiment with benchmarking group $\gr{G}$, character group $\hat{G}$ and subrepresentations $\hat{\phi}\subset \phi_{\lambda'}$ for some $\lambda'\in R_{\gr{G}}$, the observed data can always be fitted (up to an exponentially small correction) to a single exponential decay. The decay rate of $f_{\lambda'}$ associated to this experiment will be the largest eigenvalue of the operator $\md{E}_{G\in \gr{G}}(\tilde{\mc{G}}\otimes \phi_{\lambda'}(G) )$ mentioned in the theorem above. Later we will give an operational interpretation of this number. We begin by defining, for all $G\in \gr{G}$ a superoperator $\Delta_G$ which captures the `gate-dependence' of the noise implementation of $\mc{G}$,
\begin{equation}
\Delta_G := \tilde{\mc{G} }- \mc{L}\mc{G}\mc{R},
\end{equation}
where $\mc{R},\mc{L}$ are defined as in \cref{thm:eig_ops}. Using this expansion we have the following theorem, which generalizes \cite[theorem 4]{wallman2018randomized} to character randomized benchmarking over arbitrary finite groups with multiplicity-free PTM representation.

\begin{theorem}
Let $\gr{G}$ be a group such that its PTM representation $\mc{G} = \bigoplus_{\lambda\in R_\gr{G}}\phi_\lambda(G)$ is multiplicity-free.  Consider the outcome of a character randomized benchmarking experiment with benchmarking group $\gr{G}$, character group $\hat{G}$, subrepresentations $\hat{\phi}\subset \phi_{\lambda'}$ for some $\lambda'\in R_{\gr{G}}$, and set of sequence lengths $\md{M}$. That is, consider the real number
\begin{equation}
k_m^{\lambda'}=\avg_{G\in\gr{G}}\avg_{\hat{G}\in \gr{\hat{G}}} \chi_{\hat{\phi}}(\hat{G})|{\hat{\phi}}|\bra{Q}\tilde{\mc{G}}_{\mathrm{inv}}\tilde{\mc{G}}_m\cdots \widetilde{\mc{G}_1\mc{\hat{G}}}\ket{\rho}
\end{equation}
for some input state $\rho$ and output POVM $\{Q, \id -Q\}$ and $m\in \md{M}$ . This probability can be fitted to an exponential of the form
\begin{equation}
k_m^{\lambda'} =_{\mathrm{fit}} Af_{\lambda'}^m + \varepsilon_m,
\end{equation}
where $A$ is a fitting parameter, $f_\lambda$ is the largest eigenvalue of the operator $\md{E}_{G\in \gr{G}}(\tilde{\mc{G}}\otimes \phi_\lambda(G))$ and $\varepsilon_m \leq \delta_1\delta_2^m$ with 
\begin{align}
\delta_1 = |{\hat{\phi}}|\max_{\hat{G}\in \gr{\hat{G}}}|\chi_{\hat{\phi}}(\hat{G})|\max_{G\in \gr{G}} \norm{\Delta_G}_\diamond,\\
\delta_2 =\md{E}_{G \in \gr{G}}\norm{\Delta_G}_{\diamond},
\end{align}
where $\norm{\cdot}_\diamond$ is the diamond norm on superoperators~\cite{watrous2004notes}.
\end{theorem}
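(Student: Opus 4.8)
The plan is to generalize the treatment of gate-dependent standard randomized benchmarking by Wallman~\cite{wallman2018randomized}, using \cref{thm:eig_ops} as the structural input in place of Clifford-specific facts and inserting the character-weighted average exactly as in the gate-independent derivation of \cref{eq:char_decay}. Write every noisy gate as $\tilde{\mc{G}}=\mc{L}\mc{G}\mc{R}+\Delta_G$ with $\mc{L},\mc{R}$ from \cref{thm:eig_ops}; since $\gr{\hat{G}}\subset\gr{G}$, the composed first gate $G_1\hat{G}$ and the inversion gate $G_{\mathrm{inv}}$ both lie in $\gr{G}$, so this decomposition applies to every factor appearing in $k_m^{\lambda'}$. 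Substituting it into $\tilde{\mc{G}}_{\mathrm{inv}}\tilde{\mc{G}}_m\cdots\tilde{\mc{G}}_2\widetilde{\mc{G}_1\mc{\hat{G}}}$ and expanding multilinearly separates $k_m^{\lambda'}$ into a ``main term'', in which every factor contributes its $\mc{L}\mc{G}\mc{R}$ part, and a remainder collecting all terms containing at least one $\Delta$.

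\emph{Main term.} I would reparametrize the sequence by the partial products $H_j:=G_j\cdots G_1$ (a measure-preserving bijection of $\gr{G}^{\times m}$), so $\mc{G}_j=\mc{H}_j\mc{H}_{j-1}\ct$ and $\mc{G}_{\mathrm{inv}}=\mc{H}_m\ct$. Averaging successively over $H_m,\dots,H_1$, using the identity $\avg_{G\in\gr{G}}\mc{G}\ct\mc{R}\mc{L}\mc{G}=\mc{D}_{\gr{G}}$ — which follows from \cref{eq:RL_eig} by the substitution $G\mapsto G^{-1}$, legitimate because PTMs of unitaries are real orthogonal — together with the fact that every $\mc{P}_\lambda$, hence $\mc{D}_{\gr{G}}$, commutes with each $\mc{H}_j$, collapses the main term to $\bra{Q}\mc{L}\mc{D}_{\gr{G}}^{m}\mc{\hat{G}}\mc{R}\ket{\rho}$. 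Taking now the character-weighted average over $\hat{G}$ and applying the projection formula \cref{eq:proj_formula} replaces $\avg_{\hat{G}}|\hat{\phi}|\chi_{\hat{\phi}}(\hat{G})\mc{\hat{G}}$ by $\mc{P}_{\hat{\phi}}$, giving $\bra{Q}\mc{L}\mc{D}_{\gr{G}}^{m}\mc{P}_{\hat{\phi}}\mc{R}\ket{\rho}$. Finally $\mc{D}_{\gr{G}}=\sum_\lambda f_\lambda\mc{P}_\lambda$ with orthogonal $\mc{P}_\lambda$, so the hypothesis $\mc{P}_{\hat{\phi}}\subset\mc{P}_{\lambda'}$ gives $\mc{D}_{\gr{G}}^{m}\mc{P}_{\hat{\phi}}=f_{\lambda'}^m\mc{P}_{\hat{\phi}}$, and the main term equals exactly $Af_{\lambda'}^m$ with $A:=\bra{Q}\mc{L}\mc{P}_{\hat{\phi}}\mc{R}\ket{\rho}$ and $f_{\lambda'}$ the dominant eigenvalue of $\avg_{G}(\tilde{\mc{G}}\otimes\phi_{\lambda'}(G))$ supplied by \cref{thm:eig_ops}.

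\emph{Remainder.} It then remains to show the collected $\Delta$-terms are bounded by $\delta_1\delta_2^m$. Following~\cite{wallman2018randomized} I would pass to the vectorized picture already used in \cref{thm:eig_ops}: once the global inverse is absorbed and the character average taken, $k_m^{\lambda'}$ is, up to contraction with $\bra{Q}$, $\ket{\rho}$ and the noise on the inverting gate, a power of $\mc{T}_{\lambda'}:=\avg_{G}(\tilde{\mc{G}}\otimes\phi_{\lambda'}(G))$, whose dominant eigenvalue is $f_{\lambda'}$. Writing $\tilde{\mc{G}}=\mc{L}\mc{G}\mc{R}+\Delta_G$ splits $\mc{T}_{\lambda'}=\avg_{G}\!\big((\mc{L}\mc{G}\mc{R})\otimes\phi_{\lambda'}(G)\big)+\avg_{G}\!\big(\Delta_G\otimes\phi_{\lambda'}(G)\big)$; the three eigenoperator relations of \cref{thm:eig_ops} are precisely what identifies the first summand as the rank-reduced dominant part, and hence the second, $\mc{N}_{\lambda'}:=\avg_{G}(\Delta_G\otimes\phi_{\lambda'}(G))$, as the piece controlling all subdominant behaviour, with $\norm{\mc{N}_{\lambda'}}_\diamond\le\avg_{G}\norm{\Delta_G}_\diamond=\delta_2$ (each $\phi_{\lambda'}(G)$ being orthogonal, of unit norm). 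Wallman's estimate then gives $\norm{\mc{T}_{\lambda'}^m-f_{\lambda'}^m(\text{dominant part})}_\diamond\le\delta_2^m$, and contracting with $\bra{Q},\ket{\rho}$, with the character weight (factor $|\hat{\phi}|\max_{\hat{G}}|\chi_{\hat{\phi}}(\hat{G})|$, since $|\bra{Q}\cdot\ket{\rho}|\le\norm{\cdot}_\diamond$ and $|\avg\chi X|\le\max|\chi|\avg|X|$) and with the $\Delta_{G_{\mathrm{inv}}}$ from the inverting gate (supplying the residual $\max_{G}\norm{\Delta_G}_\diamond$) yields $|\varepsilon_m|\le\delta_1\delta_2^m$. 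The composed first gate $G_1\hat{G}$ needs no separate handling, since it is uniformly distributed on $\gr{G}$ whenever $G_1$ is.

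\emph{Main obstacle.} Inserting $\avg_{\hat{G}}|\hat{\phi}|\chi_{\hat{\phi}}(\hat{G})\mc{\hat{G}}=\mc{P}_{\hat{\phi}}$ and commuting it past the twirl is immediate from \cref{eq:proj_formula} and \cref{thm:eig_ops}, so the main term is routine. The real work is the remainder: one must keep the diamond norms under control so that the ideal $\mc{L}\mc{G}\mc{R}$ blocks telescope into powers of $\mc{D}_{\gr{G}}$ (of spectral radius $\le 1$) rather than into a runaway product of $\norm{\mc{L}}_\diamond\norm{\mc{R}}_\diamond$, which forces the sequence average to be performed \emph{before} passing to norms; handling the inversion gate, which couples to all sequence elements, is what makes the $H_j$-reparametrization essential; and reproducing Wallman's argument that the subdominant part of $\mc{T}_{\lambda'}$ contracts at rate $\delta_2$ — rather than at some unquantified subdominant spectral radius — is the genuinely delicate step that must be redone here in the arbitrary multiplicity-free-group setting.
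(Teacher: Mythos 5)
Your treatment of the main term is sound and essentially equivalent to the paper's: the paper expands only the first gate and collapses the remaining noisy gates one at a time using \cref{eq:L_eig} ($\md{E}_G\,\tilde{\mc{G}}\mc{L}\mc{G}\ct=\mc{L}\mc{D}_{\gr{G}}$), whereas you expand every gate and telescope the internal $\mc{R}\mc{L}$ junctions via the partial-product reparametrization and \cref{eq:RL_eig}; both routes land on $f_{\lambda'}^m\bra{Q}\mc{L}\mc{P}_{\hat{\phi}}\mc{R}\ket{\rho}$.

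The remainder, however, contains a genuine gap, and it is exactly the step you yourself flag as ``genuinely delicate'' and then do not carry out. Two specific problems. First, for gate-dependent noise $k_m^{\lambda'}$ is \emph{not} a contraction of $\bigl[\md{E}_{G}(\tilde{\mc{G}}\otimes\phi_{\lambda'}(G))\bigr]^m$: under the reparametrization $\mc{G}_j=\mc{H}_j\mc{H}_{j-1}\ct$ the noisy gate $\tilde{\mc{G}}_j$ depends jointly on $H_j$ and $H_{j-1}$ in a way that does not factorize, so the sequence average is not a matrix power of the single-gate twirl $\mc{T}_{\lambda'}$; that operator enters only to \emph{define} $\mc{L},\mc{R},f_{\lambda'}$ in \cref{thm:eig_ops}. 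Second, even granting that picture, $\norm{\mc{N}_{\lambda'}}_\diamond\le\delta_2$ does not yield $\norm{\mc{T}_{\lambda'}^m-f_{\lambda'}^m(\text{dominant part})}\le\delta_2^m$; a naive norm estimate on $(\mc{T}_0+\mc{N})^m-\mc{T}_0^m$ gives $2^m-1$ cross terms and a bound of order $m\,\delta_2\norm{\mc{T}_0}^{m-1}$, not $\delta_2^m$. The missing idea is the paper's explicit cancellation: using \cref{eq:R_eig} together with \cref{eq:RL_eig} and the constraint $G_{j-1}=(G_m\cdots G_{j+1})\ct G_{\mathrm{inv}}(G_1\cdots G_{j-2})\ct$, one shows
\begin{equation*}
\avg_{G_1,\ldots,G_m\in\gr{G}}\tilde{\mc{G}}_{\mathrm{inv}}\cdots\tilde{\mc{G}}_{j+1}\,\mc{L}\mc{G}_j\mc{R}\,\Delta_{G_{j-1}}\cdots\Delta_{G_1\hat{G}}=0 ,
\end{equation*}
so that \emph{every} cross term vanishes under the sequence average and the error collapses to the single all-$\Delta$ product $\md{E}\,\chi_{\hat{\phi}}(\hat{G})|\hat{\phi}|\bra{Q}\Delta_{G_{\mathrm{inv}}}\Delta_{G_m}\cdots\Delta_{G_1\hat{G}}\ket{\rho}$. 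Only then does submultiplicativity of the diamond norm, together with independence of $G_2,\ldots,G_m$ and $G_1\hat{G}$ and the crude bound $\norm{\Delta_{G_{\mathrm{inv}}}}_\diamond\le\max_G\norm{\Delta_G}_\diamond$, produce $|\varepsilon_m|\le\delta_1\delta_2^m$. Without this cancellation lemma your remainder bound does not go through; with it, your full multilinear expansion is unnecessary, since the iterative expansion only ever generates cross terms of the special form above.
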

\begin{proof}
We begin by expanding $\widetilde{\mc{G}_1\mc{\hat{G}}} = \mc{L}\mc{G}_1\mc{\hat{G}}\mc{R} + \Delta_{G_1\hat{G}}$. This gives us
\begin{align}\label{eq:two_terms}
k_m^{\lambda'}&=\avg_{\substack{\hat{G}\in \gr{\hat{G}}\\G_1,\ldots,G_m\in\gr{G}}} \!\!\!\!\!\chi_{\hat{\phi}}(\hat{G})|{\hat{\phi}}|\bra{Q}\tilde{\mc{G}}_{\mathrm{inv}}\tilde{\mc{G}}_m\cdots \mc{L}\mc{G}_1\mc{\hat{G}}\mc{R}\ket{\rho} \\&\hspace{4em}+ \chi_{\hat{\phi}}(\hat{G})|{\hat{\phi}}|\bra{Q}\tilde{\mc{G}}_{\mathrm{inv}}\tilde{\mc{G}}_m\cdots \Delta_{G_1\hat{G}}\ket{\rho}.
\end{align}
We now analyze the first term in \cref{eq:two_terms}. Using the character projection formula, the fact that $\mc{G}_1 = (\mc{G}_{inv}\mc{G}_{m}\ldots\mc{G}_2)\ct$ and \cref{eq:L_eig} from \cref{thm:eig_ops} we get
\begin{align}
&\!\!\!\!\!\!\!\avg_{\substack{\hat{G}\in \gr{\hat{G}}\\G_1,\ldots,G_m\in\gr{G}}} \!\!\!\!\!\chi_{\hat{\phi}}(\hat{G})|{\hat{\phi}}|\bra{Q}\tilde{\mc{G}}_{\mathrm{inv}}\tilde{\mc{G}}_m\cdots \mc{L}\mc{G}_1\mc{\hat{G}}\mc{R}\ket{\rho}\notag\\&\hspace{0em}=\!\!\!\! \avg_{G_1,\ldots,G_m\in\gr{G}} \!\!\!\!\!\bra{Q}\tilde{\mc{G}}_{\mathrm{inv}}\tilde{\mc{G}}_m\cdots\mc{\tilde{G}}_2\mc{L}\mc{G}_2\ct \ldots\mc{G}_{\mathrm{inv}}\ct \mc{P}_{\hat{\phi}}\mc{R}\ket{\rho}\\
&\hspace{0em}=\!\!\!\!\avg_{G_3,\ldots,G_m\in\gr{G}} \!\!\!\!\!\bra{Q}\tilde{\mc{G}}_{\mathrm{inv}}\tilde{\mc{G}}_m\cdots\mc{\tilde{G}}_3\mc{L}\mc{D}_\gr{G}\mc{G}_3\ct \ldots\mc{G}_{\mathrm{inv}}\ct \mc{P}_{\hat{\phi}}\mc{R}\ket{\rho}\\
&\hspace{0em}= \bra{Q}\mc{L}\mc{D}_\gr{G}^m \mc{P}_{\hat{\phi}}\mc{R}\ket{\rho}\\
&\hspace{0em}= f_{\lambda'}^m \bra{Q}\mc{L} \mc{P}_{\hat{\phi}}\mc{R}\ket{\rho}
\end{align}
where we used that $\mc{D}_\gr{G}$ commutes with $\mc{G}$ for all $G\in \gr{G}$ and the fact that $\mc{D}_\gr{G}\mc{P}_{\hat{\phi}} = f_{\lambda'} \mc{P}_{\hat{\phi}}$. Next we consider the second term in \cref{eq:two_terms}. For this we first need to prove a technical statement. We make the following calculation for all $j\geq 2$ and $\hat{G}\in \gr{\hat{G}}$:
\begin{align}
&\avg_{G_1,\ldots,G_m\in\gr{G}}\tilde{\mc{G}}_{\mathrm{inv}}\tilde{\mc{G}}_m\cdots\mc{\tilde{G}}_{j+1}\mc{L}\mc{G}_j\mc{R}\Delta_{G_{j-1}}\ldots \Delta_{G_1\hat{G}}\\
&= \avg_{G_1,\ldots,G_m\in\gr{G}}\tilde{\mc{G}}_{\mathrm{inv}}\tilde{\mc{G}}_m\cdots\mc{\tilde{G}}_{j+1}\mc{L}\mc{G}_{j+1}\ct\ldots \mc{G}_m\ct\notag\\
&\hspace{2em}\times \mc{G}_{\mathrm{inv}}\mc{G}_1\ct\ldots\mc{G}_{j-1}\ct\mc{R}\Delta_{G_{j-1}}\ldots \Delta_{G_1\hat{G}}\\
&=\avg_{G_1,\ldots,G_m\in\gr{G}}\!\!\!\!\!\tilde{\mc{G}}_{\mathrm{inv}}\tilde{\mc{G}}_m\cdots\mc{\tilde{G}}_{j+1}\mc{L}\mc{G}_{j+1}\ct\ldots \mc{G}_m\ct\notag\\
&\hspace{2em}\times \mc{G}_{\mathrm{inv}}\mc{G}_1\ct\ldots\mc{G}_{j-1}\ct\mc{R}(\mc{\tilde{G}}_{j-1} - \mc{L}\mc{G}_{j-1}\mc{R})\notag\\
&\hspace{13em}\times \Delta_{G_{j-2}}\ldots \Delta_{G_1\hat{G}}\\
&=\avg_{\substack{G_1,\ldots,G_{j-1},\\G_{j+1},\ldots G_m \in\gr{G}}}\!\!\!\!\!\tilde{\mc{G}}_{\mathrm{inv}}\tilde{\mc{G}}_m\cdots\mc{\tilde{G}}_{j+1}\mc{L}\mc{G}_{j+1}\ct\ldots \mc{G}_m\ct \notag\\
&\hspace{2em}\times\mc{G}_{\mathrm{inv}}\mc{G}_1\ct\ldots\mc{G}_{j-2}\ct(\mc{D}_{\gr{G}} - \mc{D}_\gr{G})\mc{R}\Delta_{G_{j-2}}\ldots \Delta_{G_1\hat{G}}\\
&=0
\end{align}
where we used the definition of $\Delta_{G_{j-1}}$, the fact that $G_{j-1} = (G_m\ldots G_{j+1})\ct G_{\mathrm{inv}}(G_1\ldots G_{j-1})\ct$ and \cref{eq:R_eig,eq:RL_eig}. We can apply this calculation to the second term of \cref{eq:two_terms} to get
\begin{align}
&\avg_{\substack{\hat{G}\in \gr{\hat{G}}\\G_1,\ldots,G_m\in\gr{G}}} \!\!\!\!\!\chi_{\hat{\phi}}(\hat{G})|{\hat{\phi}}|\bra{Q}\tilde{\mc{G}}_{\mathrm{inv}}\tilde{\mc{G}}_m\cdots\mc{\tilde{G}}_2 \Delta_{G_1\hat{G}}\ket{\rho} \\&=\!\!\!\! \avg_{\substack{\hat{G}\in \gr{\hat{G}}\\G_1,\ldots,G_m\in\gr{G}}} \!\!\!\!\!\chi_{\hat{\phi}}(\hat{G})|{\hat{\phi}}|\bra{Q}\tilde{\mc{G}}_{\mathrm{inv}}\tilde{\mc{G}}_m\cdots(\mc{L}\mc{G}_2\mc{R} + \Delta_{G_2})\notag\\&\hspace{14em}\vspace{5em}\times \Delta_{G_1\hat{G}}\ket{\rho}\\
&=\!\!\!\! \avg_{\substack{\hat{G}\in \gr{\hat{G}}\\G_1,\ldots,G_m\in\gr{G}}} \!\!\!\!\!\chi_{\hat{\phi}}(\hat{G})|{\hat{\phi}}|\bra{Q}\tilde{\mc{G}}_{\mathrm{inv}}\tilde{\mc{G}}_m\cdots\tilde{\mc{G}}_3 \Delta_{G_2} \Delta_{G_1\hat{G}}\ket{\rho}\\
&=\!\!\!\! \avg_{\substack{\hat{G}\in \gr{\hat{G}}\\G_1,\ldots,G_m\in\gr{G}}} \!\!\!\!\!\chi_{\hat{\phi}}(\hat{G})|{\hat{\phi}}|\bra{Q}\Delta_{G_{\mathrm{inv}}}\Delta_{G_m} \ldots \Delta_{G_1\hat{G}}\ket{\rho}
\end{align}
Hence we can write
\begin{equation}
k_m^\lambda=f_{\lambda'}^m \bra{Q}\mc{L} \mc{P}_\phi\mc{R}\ket{\rho} + \varepsilon_m 
\end{equation}
with 
\begin{equation}
\varepsilon_m = \!\!\!\!\!\avg_{\substack{\hat{G}\in \gr{\hat{G}}\\G_1,\ldots,G_m\in\gr{G}}} \!\!\!\!\!\chi_{\hat{\phi}}(\hat{G})|{\hat{\phi}}|\bra{Q}\Delta_{G_{\mathrm{inv}}}\Delta_{G_m} \ldots \Delta_{G_1\hat{G}}\ket{\rho}.
\end{equation}
We can upper bound $\varepsilon_m$ by
\begin{align}
&\!\!\!\!\!\!\!\avg_{\substack{\hat{G}\in \gr{\hat{G}}\\G_1,\ldots,G_m\in\gr{G}}} \!\!\!\!\!\chi_{\hat{\phi}}(\hat{G})|{\hat{\phi}}|\bra{Q}\Delta_{G_{\mathrm{inv}}}\Delta_{G_m} \ldots \Delta_{G_1\hat{G}}\ket{\rho} \\
&\leq\!\!\!\!\!\! \avg_{\substack{\hat{G}\in \gr{\hat{G}}\\G_1,\ldots,G_m\in\gr{G}}}\!\!\!\!\!\!\!\!\!|\chi_{\hat{\phi}}(\hat{G})||{\hat{\phi}}|\norm{\Delta_{G_{\mathrm{inv}}}}_\diamond \norm{\Delta_{G_{m}}}_\diamond\ldots \norm{\Delta_{G_{1}\hat{G}}}_\diamond\\
&\leq \max_{\hat{G}\in \gr{\hat{G}}} |\chi_{\hat{\phi}}(\hat{G})||{\hat{\phi}}|\max_{G\in \gr{G}}\norm{\Delta_G}_\diamond\left(\md{E}_{G\in\gr{G}}\norm{\Delta_G}_\diamond \right)^m.
\end{align}
Setting
\begin{align}
\delta_1 &= |{\hat{\phi}}|\left(\max_{\hat{G}\in \gr{\hat{G}} }|\chi_{\hat{\phi}}(\hat{G})|\right)\left( \max_{G\in \gr{G}} \norm{\Delta_G}_\diamond\right)\\
\delta_2 &= \avg_{G \in \gr{G} }\norm{\Delta_G }_{\diamond} 
\end{align}
we complete the proof.
\end{proof}
In \cite{wallman2018randomized} it was shown that $\delta_2$ is small for realistic gate-dependent noise. This implies that for large enough $m$ the outcome of a character randomized benchmarking experiment can be described by a single exponential decay (up to a small, exponentially decreasing factor). The rate of decay $f_{\lambda'}$ can be related to the largest eigenvalue of the operator $\md{E}_{G\in\gr{G}} (\mc{\tilde{G}}\otimes \phi_{\lambda'}(G))$. We can interpret this rate of decay following Wallman~\cite{wallman2018randomized} by setting w.l.o.g. $\mc{\tilde{G}} = \mc{L}_G\mc{G}\mc{R}$ where $\mc{R}$ is defined as in \cref{thm:eig_ops} and is invertible (we can always render $\mc{R}$ invertible by an arbitrary small perturbation).
Now consider from $\mc{\tilde{G}} = \mc{L}_G\mc{G}\mc{R}$ and the invertibility of $\mc{R}$:

\begin{align}
\avg_{G\in \gr{G}}\tr(\mc{G}\ct \mc{R} \mc{\tilde{G}}\mc{R}^{-1}) &=\avg_{G\in \gr{G}}\tr(\mc{G}\ct \mc{R} \mc{L}_G\mc{G}\mc{R}\mc{R}^{-1})\\& = \avg_{G\in \gr{G}}\tr(\mc{R}\mc{L}_G)
\end{align}
and moreover from \cref{eq:R_eig}:
\begin{align}
\avg_{G\in \gr{G}}\tr(\mc{G}\ct \mc{R} \mc{\tilde{G}}\mc{R}^{-1}) = \sum_{\lambda\in R_{\gr{G}}} f_{\lambda} \tr(\mc{P}_\lambda).
\end{align}
From this we can consider the average fidelity of noise \emph{between gates} (the map $\mc{R}\mc{L}_G)$ averaged over all gates:
\begin{align}
\avg_{G\in \gr{G}}F_{\mathrm{avg}}(\mc{R}\mc{L}_G) &= \!\avg_{G\in \gr{G}}\frac{2^{-q}\tr(\mc{R}\mc{L}_G)+1}{2^q+1} \\&= \!\frac{2^{-q}\sum_{\lambda\in R_{\gr{G}}}f_{\lambda}\tr(\mc{P}_\lambda)+1}{2^q+1}.
\end{align}
Hence can interpret the quality parameters given by character randomized benchmarking as characterizing the average noise in between gates, extending the conclusion reached in \cite{wallman2018randomized} for standard randomized benchmarking to character randomized benchmarking. In \cite{merkel2018randomized} an alternative interpretation of the decay rate of randomized benchmarking in the presence of gate dependent noise is given in terms of Fourier transforms of matrix valued group functions. One could recast the above analysis for character randomized benchmarking in this language as well but we do not pursue this further here.\\

\noindent{\bf Interleaved character randomized benchmarking}
\noindent In the main text we proposed 2-for-1 interleaved randomized benchmarking, a form of character interleaved randomized benchmarking. More generally we can consider performing interleaved character randomized benchmarking with a benchmarking group $\gr{G}$, a character group $\gr{\hat{G}}$, and an interleaving gate $C$. However it is not obvious that the interleaved character randomized benchmarking procedure (for arbitrary $\gr{G}$ and $C$) always yields data that can be fitted to a single exponential such that the average fidelity can be extracted. Here we will justify this behavior subject to an assumption on the relation between the interleaving gate $C$ and the benchmarking group $\gr{G}$ which we expect to be quite general. This relation is phrased in terms of what we call the `mixing matrix' of the group $\gr{G}$ and gate $C$. This matrix, which we denote by $M$, has entries 
\begin{equation}
M_{\lambda,\hat{\lambda}} = \frac{1}{\tr(\mc{P}_\lambda)}\tr\left(\mc{P}_\lambda  \mc{C}\mc{P}_{\hat{\lambda}}\mc{C}\ct\right)
\end{equation}
for $\lambda,\lambda'\in R'_{\gr{G}} =R_{\gr{G}}\backslash\{\mathrm{id}\} $ with $\phi_{\mathrm{id}}$ the trivial subrepresentation of the PTM representation of $\gr{G}$ carried by $\ket{\id}$  and where $\mc{P}_\lambda$ is the projector onto the subrepresentation $\phi_\lambda$ of $\mc{G}$. Note that this matrix is defined completely by $C$ and the PTM representation of $\gr{G}$. Note also that this matrix has only non-negative entries, that is $M_{\lambda,\hat{\lambda}}\geq 0 \;\; \forall \lambda,\hat{\lambda}$.\\

\noindent In the following lemma we will assume that the mixing matrix $M$ is not only non-negative but also irreducible in the Perron-Frobenius sense~\cite{maccluer2000many}. Formally this means that there exists an integer $L$ such that $A^L$ has only strictly positive entries. This assumption will allow us to invoke the powerful Perron-Frobenius theorem~\cite{maccluer2000many} to prove in \cref{lem:int_char_bench} that interleaved character randomized benchmarking works as advertised. Below \cref{lem:int_char_bench} we will also explicitly verify the irreducibility condition for 2-for-1 interleaved benchmarking with the CPHASE gate. We note that the assumption of irreducibility of $M$ can be easily relaxed to $M$ being a direct sum of irreducible matrices with the proof of \cref{lem:int_char_bench} basically unchanged. It is an open question if it can be relaxed further to encompass all non-negative mixing matrices.
\begin{theorem}\label{lem:int_char_bench}
Consider the outcome  $k_{\lambda'}^m$ of an interleaved character randomized benchmarking experiment benchmarking group $\gr{G}$, character group $\hat{G}$, subrepresentations $\hat{\phi}\subset \phi_{\lambda'}$ for some $\lambda'\in R_{\gr{G}}$, interleaving gate $C$, and set of sequence lengths $\md{M}$ and assume the existence of quantum channels $\mc{E}_C, \mc{E}$ s.t. $\mc{\widetilde{C}} = \mc{C}\mc{E}_C$ and $\mc{\widetilde{G}} = \mc{E}\mc{G}$ for all $G\in \gr{G}$. Now consider the matrix $M(\mc{E}_C\mc{E})$ as a function of the composed channel $\mc{E}_C\mc{E}$ with entries
\begin{equation}
M_{\lambda,\hat{\lambda}}(\mc{E}_C\mc{E}) = \frac{1}{\tr(\mc{P}_\lambda)}\tr\left(\mc{P}_\lambda  \mc{C}\mc{P}_{\hat{\lambda}}\mc{C}\ct \mc{E}_C\mc{E}\right)
\end{equation}
for $\lambda,\lambda'\in R'_{\gr{G}}=R_{\gr{G}}\backslash\{\mathrm{id}\}$ where $\mc{P}_\lambda$ is again the projector onto the subrepresentation $\phi_\lambda$ of $\mc{G}$. If for $\mc{E}=\mc{E}_C  = \mc{I}$ (the identity map) the matrix $M(\mc{I}) = M$ (the mixing matrix defined above) is irreducible (in the sense of Perron-Frobenius), then there exist parameters $A,f_{\lambda'}$ s.t. 
\begin{equation}
|k_{\lambda'}^m  - Af_{\lambda'}^m| \leq \delta_1\delta_2^m
\end{equation}
with $\delta_1 = O(1-F_{\mathrm{avg}}(\mc{E}_C\mc{E}))$ and $\delta_2 = \gamma + O([1- F_{\mathrm{avg}}(\mc{E}_C\mc{E})]^2)$ where $\gamma$ is the second largest eigenvalue (in absolute value) of $M$.
Moreover we have that (noting that $f_{\mathrm{id}} =1$ as the map $\mc{E}_C\mc{E}$ is CPTP):
\begin{align}
&\left|\frac{1}{2^q}\sum_{\lambda\in R_{\gr{G}}} \tr(\mc{P}_\lambda) f_{\lambda} - \frac{2^q (F_{\mathrm{avg}}(\mc{E}_C\mc{E}) +1)}{2^q+1} \right|\notag\\
&\hspace{10em}\leq O\left([1-F_{\mathrm{avg}}(\mc{E}_C\mc{E})]^2\right)
\end{align}
\end{theorem}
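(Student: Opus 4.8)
The strategy is to combine the twirl manipulations underlying randomized benchmarking with a Perron--Frobenius argument for the mixing matrix. The gate-independent structure assumed here ($\widetilde{\mc{G}}=\mc{E}\mc{G}$ for all $G\in\gr{G}$ and $\widetilde{\mc{C}}=\mc{C}\mc{E}_C$) keeps every twirl exact, so the only source of deviation from a single exponential will be the non-trivial spectrum of the mixing matrix. \emph{Step 1 (reduce to powers of the mixing matrix).} I would write out $k^{\lambda'}_m$ for the interleaved protocol, substitute the noise forms, apply the character projection formula \cref{eq:proj_formula} to the $\hat G$-average so that $\mc{P}_{\hat\phi}\subset\mc{P}_{\lambda'}$ sits just after the first random gate, expand $\mc{G}_{\mathrm{inv}}=(\mc{G}_m\mc{C}\mc{G}_{m-1}\mc{C}\cdots\mc{C}\mc{G}_1)\ct$, and then average over $\vec G$ one gate at a time, from $G_m$ down to $G_1$, exactly as in the derivation of \cref{eq:rand_bench_av_proj}. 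The first twirl yields $\avg_{G}\mc{G}\ct\mc{E}\mc{G}=\sum_\mu f^{(0)}_\mu\mc{P}_\mu$ with $f^{(0)}_\mu=\tr(\mc{P}_\mu\mc{E})/\tr(\mc{P}_\mu)$; each later twirl conjugates the accumulated operator by $\mc{C}$, composes it with $\mc{E}_C\mc{E}$, and projects onto the commutant (spanned by the $\mc{P}_\mu$), which at the level of the coefficient vector is exactly multiplication by the matrix $M(\mc{E}_C\mc{E})$ of the theorem (the trivial sector decouples because $\mc{C}\ket{\sigma_0}=\ket{\sigma_0}$ and $\mc{P}_\lambda\mc{P}_{\mathrm{id}}=0$ for $\lambda\neq\mathrm{id}$, so we genuinely work on $R'_{\gr{G}}$, up to transposing/rescaling conventions that do not affect the spectrum). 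Since $\mc{P}_{\hat\phi}$ stands to the right of every $\mc{G}_j$ it is untouched, and the last twirl extracts the $\lambda'$-component, giving
\begin{equation*}
k^{\lambda'}_m=\bigl(M(\mc{E}_C\mc{E})^{\,m-1}\vec f^{(0)}\bigr)_{\lambda'}\,\bra{Q}\mc{E}\,\mc{P}_{\hat\phi}\ket{\rho}.
\end{equation*}

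\emph{Step 2 (single exponential).} At zero noise $M(\mc{I})=M$ is non-negative, and since $\mc{C}$ is unital and trace preserving it is row-stochastic, $M\mathbf{1}=\mathbf{1}$; a one-line computation using $\mc{P}_\lambda\mc{P}_{\hat\lambda}=\delta_{\lambda\hat\lambda}\mc{P}_\lambda$, cyclicity of the trace and $\mc{C}\mc{P}_{\mathrm{id}}\mc{C}\ct=\mc{P}_{\mathrm{id}}$ shows that the left Perron eigenvector is $\pi_\lambda=\tr(\mc{P}_\lambda)/(2^{2q}-1)$. Under the irreducibility hypothesis the Perron--Frobenius theorem makes $1$ a simple, strictly dominant eigenvalue of $M$ with all others of modulus $\le\gamma<1$. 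Since $\mc{E}_C\mc{E}$ is a CPTP perturbation of the identity of size $O(1-F_{\mathrm{avg}}(\mc{E}_C\mc{E}))$, $M(\mc{E}_C\mc{E})=M+O(1-F)$, so analytic perturbation theory for a simple eigenvalue produces a simple eigenvalue $f_{\lambda'}:=\mu_1$ with $|\mu_1-1|=O(1-F)$, a surviving spectral gap, and a rank-one spectral projector $\Pi$ close to $\mathbf{1}\pi$. Writing $M(\mc{E}_C\mc{E})^{m-1}=\mu_1^{m-1}\Pi+N^{m-1}$ gives $k^{\lambda'}_m=Af_{\lambda'}^m+\varepsilon_m$, and because $\vec f^{(0)}\to\mathbf{1}$ and $N\mathbf{1}=0$ as the noise vanishes, $\varepsilon_m=O(1-F)\bigl(\gamma+O((1-F)^2)\bigr)^m$, the asserted $\delta_1\delta_2^m$.

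\emph{Step 3 (fidelity relation) and the main obstacle.} For the last claim I would expand the Perron eigenvalue to first order, $\mu_1=1+\pi\bigl(M(\mc{E}_C\mc{E})-M\bigr)\mathbf{1}+O((1-F)^2)$, and use the identity $\bigl(M(\mc{E}_C\mc{E})\mathbf{1}\bigr)_\lambda=\tr(\mc{P}_\lambda\mc{E}_C\mc{E})/\tr(\mc{P}_\lambda)$ (again the trivial term vanishes). Because $\pi_\lambda\propto\tr(\mc{P}_\lambda)$ this collapses to $\pi\bigl(M(\mc{E}_C\mc{E})-M\bigr)\mathbf{1}=(\tr(\mc{E}_C\mc{E})-2^{2q})/(2^{2q}-1)$, hence $\mu_1=(\tr(\mc{E}_C\mc{E})-1)/(2^{2q}-1)+O((1-F)^2)$, which is precisely the standard randomized-benchmarking relation between a decay rate and $F_{\mathrm{avg}}(\mc{E}_C\mc{E})$; since every $\lambda'$ yields this same $\mu_1$ and $f_{\mathrm{id}}=1$ (as $\mc{E}_C\mc{E}$ is CPTP), feeding the $f_\lambda$ into \cref{eq:fid_gateset} reproduces the stated relation up to $O((1-F)^2)$. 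The routine work is the twirl bookkeeping with the interleaved $\mc{C}$'s and the global inverse; the genuine obstacle is the perturbation analysis — showing the Perron eigenvalue, its projector and the spectral gap of $M$ are stable under $M\mapsto M(\mc{E}_C\mc{E})$ with $m$-uniform error control, and in particular that the subdominant decay rate in $\varepsilon_m$ can be taken to be $\gamma+O((1-F)^2)$ rather than the naive $\gamma+O(1-F)$, which requires absorbing the first-order shift of the subdominant spectrum into the prefactor $\delta_1$, in the spirit of Wallman's second-order treatment of randomized benchmarking.
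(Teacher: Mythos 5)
Your proposal follows essentially the same route as the paper's proof: reduce $k^{\lambda'}_m$ via sequential twirls and the character projection formula to a power of the mixing matrix sandwiched between a vector of noise-dependent coefficients and the $\lambda'$ basis vector, identify the Perron eigenvalue $1$ of the row-stochastic $M(\mc{I})$ with left eigenvector $\propto \tr(\mc{P}_\lambda)$, apply first-order perturbation theory to get $\mu_1 = (\tr(\mc{E}_C\mc{E})-1)/(2^{2q}-1) + O((1-F)^2)$, and convert to the fidelity statement via \cref{lem:trace_to_fid}. The obstacle you flag at the end — that the subdominant decay rate is claimed to be $\gamma + O((1-F)^2)$ rather than the $\gamma + O(1-F)$ that generic first-order perturbation of the subdominant spectrum would give — is indeed a point the paper asserts without detailed justification, so your plan is no less complete than the published argument on that score.
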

\begin{proof}
Consider the definition of $k_{\lambda'}^m$:
\begin{align}
k^{\lambda'}_m &= |\hat{\phi}|\!\!\!\!\!\!\avg_{\substack{\hat{G}\in \gr{\hat{G}}\\G_1,\ldots,G_m\in\gr{G}}}\!\!\!\!\!\!\chi_{\hat{\phi}}(\hat{G})\bra{Q}\mc{E}_{\mathrm{inv}}\mc{G}_{\mathrm{inv}}\mc{C}\mc{E}_C\mc{E}\mc{G}_m\notag\\
&\hspace{8em}\times\mc{C}\mc{E}_C\mc{E}\ldots \mc{C}\mc{E}_C\mc{E}\mc{G}_1\mc{\hat{G}}\ket{\rho},
\end{align}
where $G_{\mathrm{inv}} = G_1\ct C\ct \cdots G_m\ct C\ct $ and $\mc{E}_{\mathrm{inv}}$ is the noise associated to the inverse gate (which we assume to be constant).
Using the character projection formula and Schur's lemma we can write this as
\begin{align}
k^{\lambda'}_m &=\avg_{G_1,\ldots,G_{m-1}\in \gr{G}}\bra{Q}\mc{E}_{\mathrm{inv}}\mc{G}_1\ct\mc{C}\ct\cdots \mc{G}_{m-1}\ct\mc{C}\ct\notag\\
&\hspace{2em}\times\left[\sum_{\lambda_m\in R'_{\gr{G}}} \frac{\tr(P_{\lambda_m}\mc{E}_C\mc{E})}{\tr(\mc{P}_{\lambda_m})}\mc{P}_{\lambda_m}\right]\mc{C}\mc{E}_C\mc{E}\mc{G}_{m-1}\notag\\
&\hspace{6em}\times \mc{C}\mc{E}_C\mc{E}\ldots \mc{C}\mc{E}_C\mc{E}\mc{G}_1\mc{P}_{{\hat{\phi}}}\ket{\rho}.
\end{align}
Note now that in general $\mc{C}$ and $\mc{P}_{\lambda_m}$ do not commute. This means that we can not repeat the reasoning of \cref{lem:rand_bench_av} but must instead write (using Schur's lemma again):
\begin{align}
k^{\lambda'}_m &=\sum_{\lambda_m\in R'_{\gr{G}}}\frac{\tr(P_{\lambda_m}\mc{E}_C\mc{E})}{\tr(\mc{P}_{\lambda_m})} \!\!\!\!\!\avg_{G_1,\ldots,G_{m-2}\in \gr{G}}\!\!\!\!\!\!\bra{Q}\mc{E}_{\mathrm{inv}}\mc{G}_1\ct\mc{C}\ct\cdots \mc{G}_{m-2}\ct\mc{C}\ct\notag\\&\hspace{20mm}\times \left[\sum_{\lambda_{m-1}\in R'_{\gr{G}}} \frac{\tr(\mc{P}_{\lambda_{m-1}}\mc{C}\ct\mc{P}_{\lambda_m}\mc{C}\mc{E}_C\mc{E})}{\tr(\mc{P}_{\lambda_{m-1}})}\right]\notag\\
&\hspace{30mm}\times \mc{P}_{\lambda_{m-1}}     \mc{C}\mc{E}_C\mc{E}\mc{G}_{m-2} \mc{C}\mc{E}_C\notag\\
&\hspace{35mm}\times\mc{E}\ldots \mc{C}\mc{E}_C\mc{E}\mc{G}_1\mc{P}_{{\hat{\phi}}}\ket{\rho}.
\end{align}
Here we recognize the definition of the matrix element $M_{\lambda_{m-1},\lambda_m}(\mc{E}_C\mc{E})$. Moreover we can apply the above expansion to $G_{m-2}, G_{m-3}$ and so forth writing the result in terms of powers of the matrix $M(\mc{E}_C\mc{E})$. After some reordering we get
\begin{equation*}
k^{\lambda'}_m =\sum_{\lambda_1, \lambda_{m}\in R'_{\gr{G}}}\frac{\tr(P_{\lambda_m}\mc{E}_C\mc{E})}{\tr(\mc{P}_{\lambda_m})} [M^{m-1}]_{\lambda_1,\lambda_m}  \bra{Q}\mc{P}_{\lambda_1}\mc{P}_{\hat{\phi}}\ket{\rho}
\end{equation*}
where we have again absorbed the noise associated with the inverse $G_{\mathrm{inv}}$ into the measurement POVM element $Q$. Now recognizing that by construction $\mc{P}_{\hat{\phi}}\subset \mc{P}_{\lambda'}$ we can write $k^{\lambda'}_m$ as 
\begin{equation}\label{eq:matrix_form}
k^{\lambda'}_m = e_{\lambda'}M^m v^{T} \bra{Q}\mc{P}_{\hat{\phi}}\ket{\rho}
\end{equation}
where $e_{\lambda'}$ is the $\lambda'th$ standard basis row vector of length $R'_{\gr{G}}$ and $v = v(\mc{E}_C\mc{E})$ is a row vector of length $R'_{\gr{G}}$ with entries $[v]_{\lambda} =\frac{\tr(P_{\lambda_m}\mc{E}_C\mc{E})}{\tr(\mc{P}_{\lambda_m})}$. This looks somewhat like an exponential decay but not quite. Ideally we would like that $M^m$ has one dominant eigenvalue and moreover that the vector $v$ has high overlap with the corresponding eigenvector. This would guarantee that $k^{\lambda'}_m$ is close to a single exponential. The rest of the proof will argue that this is indeed the case. Now we use the assumption of the irreducibility of the mixing matrix $M = M(\mc{I})$. Subject to this assumption, the Perron-Frobenius theorem~\cite{maccluer2000many} states that the matrix $M$ has a non-degenerate eigenvalue $\gamma_{\mathrm{max}}(M(\mc{I}))$ that is strictly larger in absolute value than all other eigenvalues of $M(\mc{I})$ and moreover satisfies the inequality
\begin{equation}
\min_{\lambda\in R'_{\gr{G}}}\sum_{\hat{\lambda}\in R'_{\gr{G}}} M_{\lambda,\hat{\lambda}} \leq \gamma_{\mathrm{max}}(M(\mc{I})) \leq \max_{\lambda\in R'_{\gr{G}}}\sum_{\hat{\lambda}\in R'_{\gr{G}}} M_{\lambda,\hat{\lambda}}.
\end{equation}
It is easy to see from the definition of $M_{\lambda,\hat{\lambda}}$ that
\begin{align}
\sum_{\hat{\lambda}\in R'_{\gr{G}}} M_{\lambda,\hat{\lambda}} &= \sum_{\hat{\lambda}\in R'_{\gr{G}}}\frac{1}{\tr(\mc{P}_\lambda)}\tr\left(\mc{P}_\lambda  \mc{C}\mc{P}_{\hat{\lambda}}\mc{C}\ct\right)\\
&=\sum_{\hat{\lambda}\in R'_{\gr{G}}}\frac{1}{\tr(\mc{P}_\lambda)} \left(\mc{P}_\lambda  \mc{C}\sum_{\hat{\lambda}\in R'_{\gr{G}}}\mc{P}_{\hat{\lambda}}\mc{C}\ct\right)\\
& =\frac{\tr(\mc{P}_\lambda)}{\tr(\mc{P}_\lambda)}=1
\end{align}
for all $\lambda\in R'_{\gr{G}}$.  This means the largest eigenvalue of $M(\mc{I})$ is exactly $1$. Moreover, as one can easily deduce by direct calculation, the associated right-eigenvector is the vector $v^R = (1,1,\ldots,1)$. Note that this vector is precisely $v(\mc{E}_C\mc{E})$ (as defined in \cref{eq:matrix_form}) for $\mc{E}_C\mc{E} = \mc{I}$. Similarly the left-eigenvector of $M = M(\mc{I})$ is given by (in terms of its components) $v^L _\lambda = \tr(\mc{P}_\lambda)$. This allows us to calculate that $k^{\lambda'}_m =\bra{Q}\mc{P}_{\hat{\phi}}\ket{\rho}$ if $\mc{E}_C\mc{E} = \mc{I}$, which is as expected.\\

\noindent Now we will consider the map $\mc{E}_C\mc{E}$ as a perturbation of $\mc{I}$ with the perturbation parameter
\begin{equation}
\alpha = 1 - \frac{\tr(\mc{P}_{\mathrm{tot}}\mc{E}_C\mc{E})}{\tr(\mc{P}_{\mathrm{tot}})}
\end{equation}
with $\mc{P}_{\mathrm{tot}}= \sum_{\lambda\in R'_{\gr{G}}}\mc{P}_\lambda$. We can write the quantum channel $\mc{E}_C\mc{E}$ as $\mc{E}_C\mc{E} = \mc{I} -\alpha \mc{F}$ where $\mc{F}$ is some superoperator (not CP, but by construction trace-annihilating). Since $M(\mc{E}_C\mc{E})$ is linear in its argument we can write $M(\mc{E}_C\mc{E}) = M(\mc{I}) - \alpha M(\mc{F})$. From standard matrix perturbation theory~\cite[Section 5.1]{sakurai2014modern} we can approximately calculate the largest eigenvalue of $M(\mc{E}_C\mc{E})$ as
\begin{align}\label{eq:pert_eig}
\gamma_{\mathrm{max}}(M(\mc{E}_C\mc{E})) = \gamma_{\mathrm{max}}&(M(\mc{I})\notag\\& - \alpha \frac{v^L M(\mc{F}){v^R}^T}{v^L{v^R}^T} + O(\alpha^2)
\end{align}
We can now calculate the prefactor $\frac{v^L M(\mc{F}){v^R}^T}{v^L{v^R}^T}$ as
\begin{align}
\frac{v^L A(\mc{F}){v^R}^T}{v^L{v^R}^T} &= \frac{\sum_{\lambda\in R'_{\gr{G}}}\sum_{\hat{\lambda}\in R'_{\gr{G}}}v^L_\lambda M(\mc{F})_{\lambda,\hat{\lambda}}v^R_{\hat{\lambda}}}{\tr(\mc{P}_{\mathrm{tot}})}\\
& = \frac{\sum_{\lambda\in R'_{\gr{G}}}\sum_{\hat{\lambda}\in R'_{\gr{G}}}\tr(P_{\lambda}C\ct \mc{P}_{\hat{\lambda}}\mc{F})}{\tr(\mc{P}_{\mathrm{tot}})}\\
&= \frac{\tr\left(\mc{P}_{\mathrm{tot}}\mc{F}\right)}{\tr(\mc{P}_{\mathrm{tot}})}\\
&= \frac{1}{\alpha} \frac{\tr\left(\mc{P}_{\mathrm{tot}}[\mc{I}-\mc{E}_C\mc{E}]\right)}{\tr(\mc{P}_{\mathrm{tot}})}\\
& =1
\end{align}
where we used the definition of $\alpha$ in the last line. This means that $\gamma_{\mathrm{max}}(M(\mc{E}_C\mc{E})) = 1-\alpha$ up to $O(\alpha^2)$corrections. One could in principle calculate the prefactor of the correction term, but we will not pursue this here.  Now we know that the matrix $M(\mc{E}_C\mc{E})^{m-1}$ in \cref{eq:matrix_form} will be dominated by a factor $(1-\alpha + O(\alpha^2))^{m-1}$. However it could still be that the vector $v(\mc{E}_C\mc{E})$ in \cref{eq:matrix_form} has small overlap with the right-eigenvector $v^R(\mc{E}_C\mc{E})$ of $M(\mc{E}_C\mc{E})$ associated to the largest eigenvalue $\gamma_{\mathrm{max}}(M(\mc{E}_C\mc{E}))$. We can again use a perturbation argument to see that this overlap will be big. Again from standard perturbation theory~\cite[Section 5.1]{sakurai2014modern} we have 
\begin{align}
\norm{v^R(\mc{E}_C\mc{E}) -v^R(\mc{I}) } = O(|\alpha|).
\end{align}
Moreover, by definition of $v^R(\mc{I})$ and $v(\mc{E}_C\mc{E})$ we have that $v^Rv(\mc{E}_C\mc{E})^T = 1- \alpha$. By the triangle inequality we thus have
\begin{equation}
\norm{v^R(\mc{E}_C\mc{E}) - v(\mc{E}_C\mc{E})} = O(|\alpha|).
\end{equation}
One can again fill in the constant factors here if one desires a more precise statement.
Finally we note from \cref{lem:trace_to_fid} that
\begin{equation}\label{eq:al_fid}
\alpha = 1- \frac{\tr(\mc{P}_{\mathrm{tot}}\mc{E}_C\mc{E})}{\tr(\mc{P}_{\mathrm{tot}})} = \frac{2^q }{2^q-1}(F(\mc{E}_C\mc{E})-1)
\end{equation}
This means that in the relevant limit of high fidelity, $\alpha$ will be small, justifying our perturbative analysis.
Defining $\gamma$ to be the second largest (in absolute value) eigenvalue of $M(\mc{E}_C\mc{E})$, which by the same argument as above will be the second largest eigenvalue of $M(\mc{I})$ up to $O(\alpha^2)$ corrections, we get
\begin{equation*}
|k_m^{\lambda'} - \bra{Q}\mc{P}_{\hat{\phi}}\ket{\rho} -\gamma_{\mathrm{max}}(M(\mc{E}_C\mc{E}))^{m-1}   \bra{Q}\mc{P}_{\hat{\phi}}\ket{\rho}|  \leq \delta_1\delta_2^m
\end{equation*}
with $\delta_1 = O(1-F_{\mathrm{avg}}(\mc{E}_C\mc{E}))$ and $\delta_2 = |\gamma| + O((1- F_{\mathrm{avg}}(\mc{E}_C\mc{E}))^2)$. Moreover, we have from \cref{eq:pert_eig,eq:al_fid} that 
\begin{align}
\gamma_{\mathrm{max}}(A(\mc{E}_C\mc{E}))  &= 1-\frac{2^q }{2^q-1}(F(\mc{E}_C\mc{E})-1) \notag\\
&\hspace{5em}+ O\left([1-F_{\mathrm{avg}}(\mc{E}_C\mc{E})]^2\right)
\end{align}
which immediately implies
\begin{align}
&\left|\frac{1}{2^q}\sum_{\lambda\in R_{\gr{G}}} \tr(\mc{P}_\lambda) f_{\lambda} - \frac{2^q (F_{\mathrm{avg}}(\mc{E}_C\mc{E}) +1)}{2^q+1} \right|\notag\\
&\hspace{10em}\leq O\left([1-F_{\mathrm{avg}}(\mc{E}_C\mc{E})]^2\right)
\end{align}
proving the lemma.
\end{proof}

\noindent It is instructive to calculate the mixing matrix for a relevant example. We will calculate $M$ for $C$ the CPHASE gate and $\gr{G} = \gr{C}_1\tn{2}$ two copies of the single qubit Clifford gates. Recall from the main text that the PTM representation of $\gr{C}_1\tn{2}$ has three non-trivial subrepresentations. From their definitions in \cref{eq:2for1_reps} and the action of the CPHASE gate on the two qubit Pauli operators it is straightforward to see that the mixing matrix is of the form
\begin{equation}
M = \begin{pmatrix} 1/3 & 0 & 2/3 \\ 0 & 1/3 & 2/3 \\ 2/9 & 2/9 & 5/9 \end{pmatrix}.
\end{equation}
Calculating $M^2$ one can see that $M$ is indeed irreducible. Moreover $M$ has eigenvalues $1, 1/3$ and $-1/9$. This means that for 2-for-1 interleaved benchmarking the interleaved experiment produces data that deviates from a single exponential no more than $(1/3)^m$ (for sufficiently high fidelity) which will be negligible for even for fairly small $m$. This means that for 2-for-1 interleaved benchmarking the assumption that the interleaved experiment produces data described by a single exponential is good. We will see this confirmed numerically in the simulated experiment presented in Supplementary fig. 2. Finally we note that a similar result was achieved using different methods in \cite{erhard2019characterizing}.

\section{ Data Availability Statement} The data and analysis used to generate Supplementary fig. 2 will be available online at\\ https://doi.org/10.5281/zenodo.2549368 . No other supporting data was generated or analysed for this work.

\section{Competing Interests} The authors declare that there are no competing interests.
\section{Author Contributions} JH, XX, LMKV and SW conceived of the theoretical framework, detailed analysis was done by JH with input from XX, LMKV and SW, JH wrote the manuscript with input from XX, LMKV and SW, SW supervised the project.
\section{Acknowledgements} The authors would like to thank Thomas F. Watson, J\'er\'emy Ribeiro and Bas Dirkse for enlightening discussions. While preparing a new version of this manuscript the authors became aware of similar, independent work by Wallman \& Emerson~\cite{wallman2018determining}. JH and SW are funded by STW Netherlands, NWO VIDI, an ERC Starting Grant and by the NWO Zwaartekracht QSC grant. XX and LMKV are funded by the Army Research Office (ARO) under Grant Number W911NF-17-1-0274.

\bibliographystyle{naturemag} 
 \bibliography{characterRBlibrary}
\clearpage
\clearpage

\appendix

\onecolumngrid
\section{Supplementary Methods I: Background material}
In this section we present, for the benefit of the reader, some well known facts about representation theory and the representation of quantum channels.  In particular we will review representations and characters and explain in more detail the Pauli transfer matrix formalism for quantum channels. More background on representations and characters can be found in \cite{Goodman2009,Fulton2004} while our presentation of quantum channels is based on \cite{Wolf2012,Nielsen2011}. \\
\subsection{I.1 Representation theory}\label{ssec:rep_theory}
We recall some useful facts about the representations of finite groups. For a more in depth treatment of this topic we refer to~\cite{Fulton2004,Goodman2009}.
Let $\gr{G}$ be a finite group and let $V$ be some finite dimensional complex vector space. Let also $\mathrm{U}(V)$ be the group of unitary linear transformations of $V$. We can define a \emph{representation} $\phi$ of the group $G$ on the space $V$ as a map
\begin{equation}\label{eq:representation}
\phi:\gr{G} \rightarrow \mathrm{U}(V): G\mapsto\phi(G)
\end{equation}
that has the property
\begin{equation}
\phi(G)\phi(H) = \phi(GH),\;\;\;\;\;\; \forall G,H \in \gr{G}.
\end{equation}
In general we will assume the operators $\phi(G)$ to be unitary.
If there is a non-trivial subspace $W$ of $V$ such that for all vectors $w\in W$ we have
\begin{equation}\label{eq:subrep}
\phi(G)w\in W,\;\;\;\;\;\;\forall G\in \gr{G},
\end{equation}
then the representation $\phi$ is called \emph{reducible}. The restriction of $\phi$ to the subspace $W$ is also a representation, which we call a \emph{subrepresentation} of $\phi$. If there are no non-trivial subspaces $W$  such that \cref{eq:subrep} holds the representation $\phi$ is called \emph{irreducible}.
Two representations $\phi,\phi'$ of a group $\gr{G}$ on spaces $V,V'$ are called \emph{equivalent} if there exists an invertible linear map $T:V\rightarrow V'$ such that
\begin{equation}\label{eq:equivalent}
T\phi(G) = \phi'(G)T,\;\;\;\;\;\;\;  \forall G\in \gr{G}.
\end{equation}
We will denote this by $\phi\equiv \phi'$.
For a representation $\phi$ on a vector space $V$ we can, for all linear maps $A:V\rightarrow V$ also define the \emph{twirl} of $A$ with respect to $\phi$. This is denoted as $\mc{T}_\phi$ and has the form
\begin{equation}
\mc{T}_{\phi}(A):= \frac{1}{|\gr{G}|}\sum_{G\in \gr{G}} \phi(G)A\phi(G)\ct.
\end{equation}
A general result called Maschke's lemma ensures that every representation of a group can be written as a direct sum of irreducible representations. That is we have for all representations $\phi$
\begin{align}
\phi(G) \simeq \bigoplus_{\lambda\in R_\gr{G}} \phi_\lambda(G)\tn{m_\lambda},\;\;\;\;\;\;\forall G\in \gr{G}
\end{align}
where the sum ranges over irreducible representations $\phi_\lambda$ of $\gr{G}$ and $m_\lambda$ is an integer denoting the multiplicity of $\phi_\lambda$ in $\phi$, that is, how many equivalent copies of the representation $\phi_\lambda$ are present in $\phi$. In this paper we will, for simplicity, mostly deal with representations $\phi$ that are \emph{multiplicity-free}. These are representations where $m_\lambda=1 $ for all $\phi_\lambda$.
The following corollary of Schur's lemma, an essential result from representation theory~\cite{Fulton2004,Goodman2009}, allows us to evaluate twirls over multiplicity-free representations.
\begin{lemma}[Lemma 1.7 and Prop. 1.8 in \cite{Fulton2004}]\label{lem:Schur}
Let $\gr{G}$ be a finite group and let $\phi$ be a multiplicity-free representation of $\gr{G}$ on a complex vector space $V$ with decomposition
\begin{align}
\phi(G) \simeq \bigoplus_{\lambda \in R_\gr{G}} \phi_\lambda(G),\;\;\;\;\;\;\forall G\in \gr{G}
\end{align}
into inequivalent irreducible subrepresentations $\phi_\lambda$.
Then for any linear map $A:V\to V$ the twirl of $A$ over $G$ takes the form
\begin{align}
\mc{T}_{\phi}(A) = \sum_{\lambda \in R_\gr{G}} \frac{\tr(AP_{\lambda})}{\tr(P_\lambda)} P_\lambda
\end{align}
where $P_\lambda$ is the projector onto the support of the representation $\phi_\lambda$.
\end{lemma}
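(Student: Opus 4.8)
\emph{Proof sketch.} The plan is to recognise $\mc{T}_\phi(A)$ as an element of the commutant of the representation $\phi$, use Schur's lemma to conclude that it is a linear combination of the subrepresentation projectors $P_\lambda$, and then fix the coefficients by a single trace computation. First I would check that $\mc{T}_\phi(A)$ is invariant under conjugation by the representation: for any $H\in\gr{G}$,
\begin{equation}
\phi(H)\,\mc{T}_\phi(A)\,\phi(H)\ct=\frac{1}{|\gr{G}|}\sum_{G\in\gr{G}}\phi(HG)\,A\,\phi(HG)\ct=\mc{T}_\phi(A),
\end{equation}
where the last step is the substitution $G\mapsto HG$, which merely permutes $\gr{G}$. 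Since $\phi$ is unitary this is equivalent to $\phi(H)\mc{T}_\phi(A)=\mc{T}_\phi(A)\phi(H)$ for all $H\in\gr{G}$, i.e. $\mc{T}_\phi(A)$ intertwines $\phi$ with itself.

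Next I would decompose $V=\bigoplus_{\lambda\in R_\gr{G}}V_\lambda$ into the (orthogonal, since $\phi$ is unitary) supports of the irreducible subrepresentations $\phi_\lambda$, and write $\mc{T}_\phi(A)$ in block form relative to this decomposition. Each block $V_\lambda\to V_\mu$ intertwines $\phi_\lambda$ with $\phi_\mu$; by Schur's lemma it vanishes whenever $\phi_\lambda$ and $\phi_\mu$ are inequivalent, which by the multiplicity-free hypothesis means whenever $\lambda\neq\mu$, and for $\lambda=\mu$ it is a scalar multiple of the identity on $V_\lambda$ because $\phi_\lambda$ is irreducible. Hence there are scalars $c_\lambda$ with $\mc{T}_\phi(A)=\sum_{\lambda\in R_\gr{G}}c_\lambda P_\lambda$, where $P_\lambda$ is the projector onto $V_\lambda$.

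Finally I would read off the $c_\lambda$. Using $P_\lambda P_\mu=\delta_{\lambda\mu}P_\lambda$ gives $\tr(\mc{T}_\phi(A)P_\lambda)=c_\lambda\tr(P_\lambda)$. On the other hand $P_\lambda$ commutes with every $\phi(G)$ because it projects onto a subrepresentation, so by cyclicity of the trace and unitarity of $\phi(G)$,
\begin{equation}
\tr\big(\mc{T}_\phi(A)P_\lambda\big)=\frac{1}{|\gr{G}|}\sum_{G\in\gr{G}}\tr\big(\phi(G)A P_\lambda\phi(G)\ct\big)=\tr(A P_\lambda),
\end{equation}
and comparing the two expressions yields $c_\lambda=\tr(A P_\lambda)/\tr(P_\lambda)$, which is exactly the claimed identity. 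The only genuine mathematical input is Schur's lemma, invoked in the second step; everything else is bookkeeping with the group average and the trace. I therefore do not anticipate a real obstacle — the only points needing care are making sure the inequivalence hypothesis is what kills the off-diagonal intertwiners, and that the $P_\lambda$ do commute with the $\phi(G)$ (so that the trace manipulation in the last display is valid).
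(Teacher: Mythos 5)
Your proof is correct. The paper does not actually prove this lemma itself — it is quoted directly from Fulton \& Harris (Lemma 1.7 and Prop.\ 1.8) — and your argument (the twirl lies in the commutant, Schur's lemma plus multiplicity-freeness forces it to be $\sum_\lambda c_\lambda P_\lambda$, and the coefficients are fixed by tracing against $P_\lambda$ using that $P_\lambda$ commutes with every $\phi(G)$) is exactly the standard proof the citation defers to.
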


We next recall the character of a representation. Let $\phi: \gr{G}\to V$
be a representation of a finite group $\gr{G}$ on a finite dimensional (real or
complex) vector space $V$. The character $\chi_\phi$ of
a representation $\phi$ is defined as
\begin{align}
\chi_{\phi}: \gr{G}\to\md{R}:G\mapsto  \chi_{\phi}(G) = \tr_V(\phi(G)),
\end{align}
where $\tr_V(\;)$ denotes the trace over the vector space $V$. Note that characters can in general be complex (that is, functions from $\gr{G}$ to $\md{C}$) but we will only consider representations with real valued characters here.
Characters have a number of useful properties~\cite{Fulton2004} which we recall here.
For representations $\phi,\phi'$ we have the relations
\begin{align}
\chi_{\phi\otimes \phi'} &= \chi_\phi\; \chi_{\phi'},\\
\chi_{\phi\oplus \phi'} &= \chi_\phi +\chi_{\phi'},
\end{align}
with suitable generalizations to multiple direct sums and tensor products. The following lemma, often referred to as the generalized projection formula, is of great use to us.
\begin{lemma}[Formula 2.32 in \cite{Fulton2004}]\label{lem:char_proj_form}
Let $\gr{G}$ be a group and let $\phi$  be a representation of $\gr{G}$.  Let also $\hat{\phi}$ be an irreducible subrepresentation of $\phi$ with associated character function $\chi_{\hat{\phi}}$. Then the following formula holds
\begin{equation}\label{eq:char_proj}
\frac{|\hat{\phi}|}{|\gr{G}|}\sum_{G\in \gr{G}}{\chi_{\hat{\phi}}(G)}\phi(G) = P_{\hat{\phi}},
\end{equation}
where $P_{\hat{\phi}}$ is the projector onto the support of all subrepresentations of $\phi$ that are equivalent to $\hat{\phi}$.
\end{lemma}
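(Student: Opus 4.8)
The plan is to prove the generalized projection formula by block-diagonalizing $\phi$ and reducing to a one-line application of Schur's lemma together with the orthogonality relations for irreducible characters; this is essentially the textbook argument behind Formula~2.32 of~\cite{Fulton2004}, recorded here for completeness.

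First I would invoke Maschke's lemma to decompose $\phi\simeq\bigoplus_{\mu\in R_{\gr{G}}}\phi_\mu^{\oplus m_\mu}$ into (possibly repeated) irreducible representations and pass to a basis making this block structure explicit, so that $\phi(G)=\bigoplus_\mu\phi_\mu(G)\otimes\id_{m_\mu}$ for every $G\in\gr{G}$. Then $\sum_{G}\chi_{\hat{\phi}}(G)\phi(G)$ is block diagonal with the same pattern, while $P_{\hat{\phi}}$ is the identity on precisely the blocks whose irreducible type is equivalent to $\hat{\phi}$ and zero on the rest. It therefore suffices to show, for each irreducible $\phi_\mu$, that $\tfrac{|\hat{\phi}|}{|\gr{G}|}\sum_{G}\chi_{\hat{\phi}}(G)\phi_\mu(G)$ equals $\id$ when $\phi_\mu\equiv\hat{\phi}$ and $0$ otherwise; the multiplicities $m_\mu$ are irrelevant here.

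Next I would observe that $T_\mu:=\sum_{G\in\gr{G}}\chi_{\hat{\phi}}(G)\phi_\mu(G)$ intertwines $\phi_\mu$ with itself: since $\chi_{\hat{\phi}}$ is a class function and the sum runs over all of $\gr{G}$, the substitution $G\mapsto H^{-1}GH$ gives $\phi_\mu(H)T_\mu\phi_\mu(H)\ct=T_\mu$ for every $H\in\gr{G}$. By Schur's lemma $T_\mu=c_\mu\id$ for some scalar $c_\mu$, and taking the trace gives $c_\mu|\phi_\mu|=\sum_{G}\chi_{\hat{\phi}}(G)\chi_{\phi_\mu}(G)$. The orthogonality relations for irreducible characters (real-valued in our setting, so no conjugation is needed) evaluate the right-hand side to $|\gr{G}|$ when $\phi_\mu\equiv\hat{\phi}$ and to $0$ otherwise; since equivalent irreducibles share the same dimension, $\tfrac{|\hat{\phi}|}{|\gr{G}|}T_\mu$ is $\id$ or $0$ exactly as needed. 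Reassembling the blocks then yields $P_{\hat{\phi}}$, completing the argument.

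I do not expect a genuine obstacle, since the whole proof rests on Schur's lemma and character orthogonality, both standard and already cited. The only points requiring care are bookkeeping ones: verifying that $P_{\hat{\phi}}$ is indeed the orthogonal projector onto the $\hat{\phi}$-isotypic subspace (which uses that $\phi$ may be taken unitary, as assumed throughout), checking that the block-diagonalization is compatible with the multiplicities, and the passage between real and complex scalars — Schur's lemma is applied over $\mathbb{C}$, but every ingredient is real, so the final operator identity descends to the real PTM setting. One could also phrase the computation intrinsically, without choosing a basis, by decomposing $\chi_\phi$ into irreducible characters and applying orthogonality on each isotypic component, but the explicit block argument above is the most transparent.
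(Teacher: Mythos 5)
Your proof is correct; it is the standard textbook argument (Maschke decomposition into isotypic blocks, the conjugation-averaging trick to produce an intertwiner, Schur's lemma, and character orthogonality to evaluate the scalar), and the points you flag — unitarity making $P_{\hat{\phi}}$ an orthogonal projector, and the real-versus-complex scalar issue — are exactly the right ones to watch. The paper does not prove this lemma itself but imports it directly as Formula~2.32 of~\cite{Fulton2004}, so your argument simply supplies the proof the paper defers to the reference.
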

Note that in the presence of representations equivalent to $\hat{\phi}$, the projector on the RHS of \cref{eq:char_proj} projects onto  all subrepresentations that are equivalent to $\hat{\phi}$ rather than just $\hat{\phi}$.

\subsection{I.2 Pauli transfer matrix representation of quantum channels}\label{ssec:PTM}

Quantum channels~\cite{Wolf2012,Nielsen2011} are completely positive and trace-preserving (CPTP) linear maps $\mc{E}:\M \rightarrow \M$ where $\M$ is the Hilbert space of $2^q\times 2^q$ Hermitian matrices. 
We will denote quantum channels by calligraphic font throughout.
The canonical example of a quantum channel is conjugation by a unitary $U$, which we denote by the corresponding calligraphic letter, i.e. $\mc{U}(\rho) = U\rho U\ct$ for all density matrices $\rho$. We will denote the noisy implementation of a unitary channel by an overset tilde, e.g. $\widetilde{\mc{G}}$ denotes a noisy implementation some ideal unitary quantum channel channel $\mc{G}$.\\

It is often useful to think of quantum channels as matrices acting on vectors~\cite{Wolf2012,Wallman2014,Ruskai2002}. In order to do this we must choose a basis of the vector space $\M$. A convenient basis will be the basis of normalized Pauli matrices $\{\sigma_0\}\cup \bsq$ where 
$\sigma_0 := 2^{q/2}\id_{2^q}$ is the normalized identity matrix and 
\begin{equation}\label{pauli_norm}
\bsq :=\bigg\{2^{q/2}\{\id_2,X,Y,Z\}\tn{q}\bigg\}\backslash\{\sigma_0\},
\end{equation}
where $X,Y,Z$ are the standard single qubit Pauli matrices.
This set spans $\M$ and becomes an orthonormal basis when we equip $\M$ with the trace (or Hilbert-Schmidt) inner product defined as 
\begin{equation}
\inp{A}{B} := \tr(AB\ct),\;\;\;\;\;\; \forall A,B \in \M.
\end{equation}
For any element $A$ of $\M$ we will denote its vectorization as $\ket{A}$. $\ket{A}$ is a column vector of length $2^{2q}$ obtained by considering the set $\{\sigma_0\}\cup \boldsymbol{\sigma}_q$ as a basis for $\M$, that is 
\begin{equation}
\ket{A} = \sum_{\sigma \in \{\sigma_0\}\cup \bsq} \inp{A}{\sigma}\ket{\sigma}.
\end{equation}
$\ket{\;}$ has a natural dual which we denote by $\bra{\;}$.

As any quantum channel $\mc{E}$ is a linear map from $\M$ to itself we have
\begin{align}
|\mc{E}(\rho)\rangle = \sum_{\sigma \in\{\sigma_0\}\cup \bsq} |\mc{E}(\sigma)\rangle\!\rangle\!\langle\!\langle\sigma|\rho\rangle\!\rangle,
\end{align}
so that we can represent $\mc{E}$ by the matrix
\begin{align}
\mc{E} = \sum_{\sigma \in \{\sigma_0\}\cup\bsq} |\mc{E}(\sigma)\rangle\!\rangle\!\langle\!\langle \sigma|,
\end{align}
where we abuse notation by using the same symbol to refer to an abstract channel and its matrix representation. We will call this matrix the Pauli Transfer Matrix (PTM) representation of the channel $\mc{E}$.
The action of a channel $\mc{E}$ on a density matrix $\rho$ now corresponds to the standard matrix action on the vector $\ket{\rho}$, hence for a density matrix $\rho$ and a POVM element $Q$ in $\M$ we have
\begin{align}
\mc{E}|\rho\rangle &= \ket{\mc{E}(\rho)},\\
\tr(Q\mc{E}(\rho)) &= \bra{Q}\mc{E}\ket{\rho}. 
\end{align}

The PTM representation has the properties (as can be easily checked) that the composition of quantum channels is equivalent to matrix multiplication of their PTM representations and that tensor products of channels correspond to tensor products of the corresponding PTM representations, that is, for all channels $\mc{E}_1$ and $\mc{E}_2$ and all $A\in\M$,
\begin{align}
\ket{\mc{E}_1\circ\mc{E}_2(A)} &= \mc{E}_1\mc{E}_2\ket{A},\label{eq:composition} \\
\ket{\mc{E}_1\otimes \mc{E}_2(A\tn{2})} &= \mc{E}_1\otimes \mc{E}_2\ket{A\tn{2}}.
\end{align}

Another property of the PTM representation is that it is an actual representation (in the sense of \cref{eq:representation}) of any subgroup of the unitary group $U(2^q)$. This essentially follows from \cref{eq:composition}. For $U,V \in U(2^q)$ set $W = UV$. We then have for the PTM representation
\begin{equation}
\mc{U}\mc{V}\ket{X} =\ket{\mc{U}(\mc{V}(X))}= \ket{VUXU\ct V\ct} = \ket{VU X(VU)\ct} = \ket{WXW\ct}= \ket{\mc{W}(X)} = \mc{W}\ket{X},\;\;\;\;\;\;\;\forall X\in \M,
\end{equation}
which is essentially the definition of a representation.\\

\section{Supplementary Methods II: Standard randomized benchmarking with arbitrary finite groups}
In this section we give a quick overview of the standard randomized benchmarking procedure and how it applies to arbitrary finite groups. For a thorough exposition of randomized benchmarking with arbitrary finite groups, which also covers the case of groups with non-multiplicity-free PTM representations, see~\cite{francca2018approximate}. We will limit ourselves to gate-independent noise for ease of exposition. We begin by reviewing the randomized benchmarking procedure:
\begin{enumerate}
	\setlength\itemsep{-0.2em}
	\item Choose a state $\rho$ and a two-component POVM $\{Q, \id -Q\}$ such that $\tr(Q\rho)$ is large
	\item Sample $ \vec{G}  =G_1,\ldots, G_m$ uniformly at random from $\gr{G}$  
	\item Prepare the state $\rho$ and apply the gates $G_1,G_2,\ldots G_m$ 
	\item Compute the inverse $G_{\mathrm{inv}} = (G_m\cdots G_1)\ct$ and apply it to $\rho$
	\item Estimate the survival probability $p_m(\vec{G}) :=\bra{Q}\widetilde{\mc{G}}_{\mathrm{inv}}\widetilde{\mc{G}}_m\cdots \widetilde{\mc{G}}_1\ket{\rho}$
	\item Repeat steps 3-8 for many $\vec{G}$ and estimate the average $p_m := \md{E}_{\vec{G}}(p_m(\vec{G}))$
	\item Repeat steps 3-9 for all sequence lengths $m\in \md{M}$ (where $\md{M}$ is some pre-chosen set)
	\item Output $\{p_m\}_{m\in \md{M}}$
\end{enumerate}
We now give justification for \cref{eq:rand_bench_av,eq:rand_bench_av_proj} in the main text. We have the following lemma.
\begin{lemma}[\cref{eq:rand_bench_av}]\label{lem:rand_bench_av}
Let $\gr{G}$ be a finite subgroup of $U(2^q)$ such that the PTM representation $\mc{G}=\oplus_{\lambda\in R_{\gr{G}}}\phi_\lambda(G)$ is multiplicity-free.  Let $\tilde{\mc{G}} = \mc{E} \mc{G}$ be some implementation of the operation $G\in \gr{G}$ with $\mc{E}$ a CPTP map. Consider the average survival probability $p_m$ of a randomized benchmarking experiment of sequence length $m$ with an input state $\rho$ and an output two-component POVM $\{Q, \id-Q\}$,
\begin{equation}
p_m = \avg_{G_1, \ldots G_m}\bra Q\widetilde{\mc{G}}_\mathrm{inv} \widetilde{\mc{G}}_m \cdots \widetilde{\mc{G}}_1\ket{\rho}.
\end{equation}
We now have that 
\begin{equation}
p_m = \bra{Q} \left(\avg_{G\in \gr{G}}\mc{G}\ct \mc{E}\mc{G}\right)^m\ket{\rho} = \sum_{\lambda\in R_{\gr{G}}}\bra{Q}\mc{P}_\lambda\ket{\rho} f_{\lambda}^m.
\end{equation}
\end{lemma}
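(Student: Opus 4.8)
The plan is to first collapse the sequence-average into a single twirl using the group-invariance of the Haar-like uniform average, and then apply Schur's lemma (\cref{lem:Schur}) to diagonalize that twirl.

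First I would re-express each noisy gate as $\widetilde{\mc{G}}_i = \mc{E}\mc{G}_i$ and insert resolutions of the identity in the form $\mc{G}_i\ct\mc{G}_i$ to rewrite the sequence in terms of ``dressed'' gates. Concretely, define $\mc{G}_{i}' = \mc{G}_i\mc{G}_{i-1}\cdots \mc{G}_1$ (the partial product), so that $\mc{G}_{\mathrm{inv}} = (\mc{G}_m')\ct$. Then one rewrites
\[
\widetilde{\mc{G}}_{\mathrm{inv}}\widetilde{\mc{G}}_m\cdots\widetilde{\mc{G}}_1 = \mc{E}(\mc{G}_m')\ct\,\mc{E}\mc{G}_m\mc{G}_{m-1}\cdots\mc{G}_1 = (\text{telescoping})
\]
and after inserting the partial products the expression becomes a product $\prod_{i=1}^{m}\big((\mc{G}_i')\ct\mc{E}\mc{G}_i'\big)$ times a leftover $\mc{E}$ factor absorbed into $\bra Q$ (or handled as the $m{=}0$ boundary term). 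The key point is that since each $G_i$ is sampled independently and uniformly, the partial products $\mc{G}_i'$ are \emph{also} each uniformly distributed over $\gr{G}$ and, crucially, mutually independent (this is the standard ``random walk on a group is uniform at each step'' argument, using invariance of the uniform measure under left multiplication). Hence the expectation factorizes:
\[
p_m = \bra Q\Big(\avg_{G\in\gr{G}}\mc{G}\ct\mc{E}\mc{G}\Big)^m\ket{\rho},
\]
which is the first claimed equality.

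Next I would apply Schur's lemma. Writing $\mc{G} = \bigoplus_{\lambda\in R_{\gr{G}}}\phi_\lambda(G)$ with the $\phi_\lambda$ inequivalent and irreducible, the map $A\mapsto \avg_{G}\mc{G}\ct A\mc{G}$ is exactly the twirl $\mc{T}_{\mc{G}}$ of \cref{lem:Schur} (note $\mc{G}\ct = \mc{G}^{-1}$ since $\mc{G}$ is a real orthogonal matrix, so $\mc{G}\ct A\mc{G} = \mc{G}(A)\,A\,\mc{G}(A)\ct$ up to the usual identification). By \cref{lem:Schur}, $\avg_G \mc{G}\ct\mc{E}\mc{G} = \sum_\lambda \frac{\tr(\mc{E}\mc{P}_\lambda)}{\tr(\mc{P}_\lambda)}\mc{P}_\lambda = \sum_\lambda f_\lambda \mc{P}_\lambda = \mc{D}_{\gr{G}}$. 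Since the $\mc{P}_\lambda$ are mutually orthogonal projectors summing to the identity, $\mc{D}_{\gr{G}}^m = \sum_\lambda f_\lambda^m \mc{P}_\lambda$, and inserting this into $\bra Q\,\mc{D}_{\gr{G}}^m\,\ket{\rho}$ gives $p_m = \sum_\lambda \bra Q\mc{P}_\lambda\ket\rho\, f_\lambda^m$, completing the proof.

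The main obstacle is the factorization step: one has to be careful that the reindexing by partial products is measure-preserving and that the resulting partial products really are i.i.d.\ uniform, so that the expectation of the product of $m$ factors equals the product of $m$ identical expectations. This is where the argument would have to be spelled out most carefully; the Schur's lemma step is then essentially mechanical given \cref{lem:Schur}. A minor point to handle cleanly is the boundary/leftover $\mc{E}$ factor from the last gate — one absorbs it into the effective POVM $\bra{Q}\mc{E}$, or equivalently notes the inverse gate is also noisy and its noise is absorbed there; this does not affect the exponential structure in $m$, only the SPAM-dependent prefactors $\bra Q\mc{P}_\lambda\ket\rho$.
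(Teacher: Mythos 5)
Your proposal is correct, and it reaches the first equality by a genuinely different (though closely related) mechanism than the paper. You perform a global change of variables to the partial products $G_i' = G_i\cdots G_1$, observe that the map $(G_1,\ldots,G_m)\mapsto(G_1',\ldots,G_m')$ is a bijection of $\gr{G}^{\times m}$ so the $G_i'$ are i.i.d.\ uniform, telescope the sequence into $\mc{E}\prod_i\bigl[(\mc{G}_i')\ct\mc{E}\mc{G}_i'\bigr]$, and factorize the expectation; this is the classic Magesan-style re-indexing. The paper instead never re-indexes: it averages over the innermost remaining gate $G_m$ first (which already appears in the required sandwich form $\mc{G}_m\ct\mc{E}\mc{G}_m$ because $\mc{G}_{\mathrm{inv}}=\mc{G}_1\ct\cdots\mc{G}_m\ct$), notes that the resulting twirled operator $\avg_{G}\mc{G}\ct\mc{E}\mc{G}$ commutes with every $\mc{G}$ (a consequence of Schur's lemma), commutes it out of the way, and iterates over $G_{m-1},\ldots,G_1$. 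Your route buys a cleaner separation of the probabilistic content (the i.i.d.\ claim, which you correctly identify as the step requiring care, and which is settled by the bijection argument) from the representation theory, which then enters only once at the Schur step; the paper's route avoids any change of variables but invokes the commutation property repeatedly. Both handle the leftover $\mc{E}$ the same way, by absorbing it into the POVM element ($Q\to\mc{E}\ct(Q)$), and the final Schur's-lemma diagonalization $\avg_G\mc{G}\ct\mc{E}\mc{G}=\sum_\lambda f_\lambda\mc{P}_\lambda$ followed by $\mc{D}_{\gr{G}}^m=\sum_\lambda f_\lambda^m\mc{P}_\lambda$ is identical in the two arguments.
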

\begin{proof}
We begin by noting that $\tilde{\mc{G}}_{\mathrm{inv}}=\mc{E}\mc{G}_1\ct \cdots\mc{G}_m\ct $ Using this and the fact that $\tilde{\mc{G}} = \mc{E} \mc{G}$ for all $G\in \gr{G}$ we can write
\begin{align}
p_m &= \avg_{G_1, \ldots G_m}\bra Q\widetilde{\mc{G}}_\mathrm{inv} \widetilde{\mc{G}}_m \cdots \widetilde{\mc{G}}_1\ket{\rho}\\
&= \bra{Q} \avg_{G_1, \ldots G_m} \mc{E}\mc{G}_1\ct \cdots \mc{G}_m\ct \mc{E}\mc{G}_m \mc{E}\mc{G}_{m-1}\cdots \mc{E}\mc{G}_1\ket{\rho}.
\end{align}
Noting that the operator $\md{E}_{G_m\in \gr{G}}\mc{G}_m\ct \mc{E}\mc{G}_m$ commutes with $\mc{G}$ for all $G\in \gr{G}$ we can write
\begin{align}
p_m  &= \bra{Q} \avg_{G_1, \ldots G_{m-1}} \mc{E}\mc{G}_1\ct \cdots\mc{G}_{m-1}\left(\avg_{G_m\in \gr{G}} \mc{G}_m\ct \mc{E}\mc{G}_m\right) \mc{E}\mc{G}_{m-1}\cdots \mc{E}\mc{G}_1\ket{\rho}\\
&= \bra{Q}\avg_{G_1,\ldots,G_{m-2}}\mc{E}\mc{G}_1\ct \cdots\mc{G}_{m-2}\left(\avg_{G_m\in \gr{G}} \mc{G}_m\ct \mc{E}\mc{G}_m\right) \left(\avg_{G_{m-1}\in \gr{G}} \mc{G}_{m-1}\ct \mc{E}\mc{G}_{m-1}\right)\mc{E}\mc{G}_{m-2}\cdots \mc{E}\mc{G}_1\ket{\rho}.
\end{align}
Repeating this procedure we obtain
\begin{equation}
p_m = \bra{Q}\left(\avg_{G\in \gr{G}} \mc{G}\ct \mc{E}\mc{G}\right)^m\ket{\rho}
\end{equation}
Where we have set $Q\rightarrow \mc{E}\ct(Q)$. Now we use Schur's lemma (\cref{lem:Schur}) and the fact that $\mc{G}= \bigoplus_{\lambda\in R_{\gr{G}}}\phi_{\lambda}(G)$ to obtain
\begin{align}
p_m &= \bra{Q}\left(\sum_{\lambda\in R_{\gr{G}}} f_\lambda \mc{P}_\lambda\right)^m\ket{\rho}\\
&=\sum_{\lambda\in R_{\gr{G}}}f_\lambda^m \bra{Q}\mc{P}_\lambda \ket{\rho}
\end{align}
where we have set $f_{\lambda}: = \tr(\mc{P}_\lambda\mc{E})/\tr(\mc{P}_\lambda)$. This completes the proof.
\end{proof}

\section{Supplementary Methods III: Average fidelity and quality parameters}
In this section we discuss the relation of the average fidelity of a quantum channel to the quality parameters $f_\lambda$ generated by character randomized benchmarking, under the assumption of gate-independent noise.
We begin by recalling the definition of the average fidelity (to the identity) of a quantum channel $\mc{E}$.
\begin{definition}
Let $\mc{E}$ be a quantum channel. Its average fidelity (with respect to the identity channel) $F_{\mathrm{avg}}(\mc{E})$ is defined as
\begin{equation}
F_{\mathrm{avg}}(\mc{E}) := \int d\psi \tr(\dens{\psi}\mc{E}(\dens{\psi})),
\end{equation}
where $\dens{\psi}$ is the regular density matrix of the pure state $\psi$ and the integral is taken over the Haar measure on the set of pure states.
\end{definition}
The average fidelity  $F_{\mathrm{avg}}(\mc{E})$ of a quantum channel $\mc{E}$ is related to the trace (taken over superoperators) of the Pauli transfer matrix of $\mc{E}$. We have the following lemma.

\begin{lemma}\label{lem:trace_to_fid}
Let $\mc{E}$ be a CPTP map acting on a system of $q$ qubits. We have that 
\begin{equation}
F_{\mathrm{avg}}(\mc{E}) = \frac{2^{-q}\tr(\mc{E})+1}{2^q+1}
\end{equation}
\end{lemma}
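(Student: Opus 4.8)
The plan is to compute the Haar integral defining $F_{\mathrm{avg}}(\mc{E})$ directly in the PTM picture. First I would rewrite the integrand using $\tr(Q\mc{E}(\rho)) = \bra{Q}\mc{E}\ket{\rho}$ (established above), so that $\tr(\dens{\psi}\mc{E}(\dens{\psi})) = \bra{\psi}\mc{E}\ket{\psi}$ where $\ket{\psi}$ denotes the PTM vector of $\dens{\psi}$. Expanding $\ket{\psi}$ in the orthonormal Pauli basis $\{\sigma_0\}\cup\bsq$ turns the quantity into $\sum_{\sigma,\tau}\bra{\sigma}\mc{E}\ket{\tau}\,\langle\!\langle\psi|\sigma\rangle\!\rangle\langle\!\langle\tau|\psi\rangle\!\rangle$, and after integrating, the problem reduces to evaluating $\int d\psi\,\langle\!\langle\psi|\sigma\rangle\!\rangle\langle\!\langle\tau|\psi\rangle\!\rangle = \tr\!\big((\sigma\otimes\tau)\!\int d\psi\, \dens{\psi}\tn{2}\big)$ for each pair $\sigma,\tau$ in the Pauli basis.

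The key input — and the only real obstacle — is the standard second-moment formula $\int d\psi\, \dens{\psi}\tn{2} = \tfrac{2}{2^q(2^q+1)}\Pi_{\mathrm{sym}}$, with $\Pi_{\mathrm{sym}} = \tfrac12(\id + \mathbb{F})$ the projector onto the symmetric subspace of $(\mathbb{C}^{2^q})\tn{2}$ and $\mathbb{F}$ the swap operator. I would obtain this from Schur's lemma in the form already quoted (\cref{lem:Schur}): the left-hand operator is invariant under conjugation by $U\tn{2}$ for all unitaries $U$, hence lies in the commutant of that representation, which is spanned by $\id$ and $\mathbb{F}$; it is supported on the symmetric subspace and has unit trace, which pins down the coefficient. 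Feeding this in and using $\tr(\mathbb{F}\, A\otimes B) = \tr(AB)$ gives $\int d\psi\,\langle\!\langle\psi|\sigma\rangle\!\rangle\langle\!\langle\tau|\psi\rangle\!\rangle = \tfrac{1}{2^q(2^q+1)}\big(\tr(\sigma)\tr(\tau) + \tr(\sigma\tau)\big)$.

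Next I would evaluate this on the Pauli basis using $\tr(\sigma_0) = 2^{q/2}$, $\tr(\sigma) = 0$ for $\sigma\in\bsq$, and $\tr(\sigma\tau) = \delta_{\sigma\tau}$ (orthonormality together with Hermiticity of the Paulis). Only diagonal terms survive: the $\sigma_0$ term contributes $\tfrac{1}{2^q}\bra{\sigma_0}\mc{E}\ket{\sigma_0}$ and each $\sigma\in\bsq$ contributes $\tfrac{1}{2^q(2^q+1)}\bra{\sigma}\mc{E}\ket{\sigma}$. Since $\mc{E}$ is trace-preserving, the first row of its PTM is $(1,0,\dots,0)$, so $\bra{\sigma_0}\mc{E}\ket{\sigma_0} = 1$ and $\tr(\mc{E}) = 1 + \sum_{\sigma\in\bsq}\bra{\sigma}\mc{E}\ket{\sigma}$. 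Substituting $\sum_{\sigma\in\bsq}\bra{\sigma}\mc{E}\ket{\sigma} = \tr(\mc{E}) - 1$ gives $F_{\mathrm{avg}}(\mc{E}) = \tfrac{1}{2^q} + \tfrac{\tr(\mc{E})-1}{2^q(2^q+1)}$, which simplifies over the common denominator $2^q(2^q+1)$ to $\tfrac{2^q + \tr(\mc{E})}{2^q(2^q+1)} = \tfrac{2^{-q}\tr(\mc{E})+1}{2^q+1}$, as claimed. Everything after the second-moment formula is routine bookkeeping; an alternative route through the entanglement fidelity $F_e$ and the identity $F_{\mathrm{avg}} = \tfrac{2^q F_e + 1}{2^q+1}$ is available but ultimately amounts to the same computation, so I would present the direct one.
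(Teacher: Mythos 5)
Your proof is correct, but it takes a genuinely different route from the paper's. The paper never evaluates the Haar integral over pure states directly: it observes that both $F_{\mathrm{avg}}$ and $\tr$ are linear and invariant under unitary conjugation $\mc{E}\mapsto\mc{U}\ct\mc{E}\mc{U}$, so both may be evaluated on the fully twirled channel $\int dU\,\mc{U}\ct\mc{E}\mc{U}$, which (citing Nielsen) is a depolarizing channel with some parameter $p$; computing $F_{\mathrm{avg}} = p + (1-p)/2^q$ and $\tr = 1+(2^{2q}-1)p$ on that one-parameter family and eliminating $p$ gives the claim. You instead attack the defining integral head-on via the second-moment formula $\int d\psi\,\dens{\psi}\tn{2} = \tfrac{2}{2^q(2^q+1)}\Pi_{\mathrm{sym}}$ and the swap identity, then use trace preservation to set $\bra{\sigma_0}\mc{E}\ket{\sigma_0}=1$. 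Your bookkeeping checks out: the diagonal $\sigma_0$ term contributes $(2^q+1)/(2^q(2^q+1)) = 2^{-q}$, each traceless diagonal term contributes $1/(2^q(2^q+1))$, off-diagonal terms vanish, and the final algebra is right. What your route buys is self-containedness --- you do not need the external fact that the unitary twirl of a channel is depolarizing, only the standard symmetric-subspace moment formula, which you justify from commutant considerations. One small caution: \cref{lem:Schur} as stated is for \emph{finite} groups, whereas you are twirling over the full compact unitary group; the statement carries over with the sum replaced by a Haar integral (or one can substitute any finite unitary 2-design such as the Clifford group), but you should say so rather than cite the finite-group lemma verbatim. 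Both arguments ultimately rest on the same degree-2 moment information about the Haar measure.
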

\begin{proof}
Note that $F_{\mathrm{avg}}(\mc{E})$ is invariant under unitary conjugation~\cite{Nielsen2002}, that is $F_{\mathrm{avg}}(\mc{E}) = F_{\mathrm{avg}}(\mc{U}\ct\mc{E}\mc{U})$ for all $U\in U(2^q)$. Similarly we have, by cyclicity of the trace that $\tr(\mc{E}) = \tr(\mc{U}\ct\mc{E}\mc{U})$. Because both the trace and $F_{\mathrm{avg}}$ are linear we moreover have that
\begin{align}
F_{\mathrm{avg}}(\mc{E}) = F_{\mathrm{avg}}\left(\int dU \mc{U}\ct\mc{E}\mc{U}\right),\\
\tr(\mc{E}) = \tr\left(\int dU \mc{U}\ct\mc{E}\mc{U}\right).
\end{align}
From \cite{Nielsen2002} it is known that there exist a $p\in [-1/(2^{2q}-1),1]$ such that
\begin{equation}
\int dU \mc{U}\ct\mc{E}\mc{U}\ket{\rho} = p\ket{X} + \frac{1-p}{d}\tr(X)\ket{\id}
\end{equation}
for all operators $X\in \M$, i.e. $\int dU \mc{U}\ct\mc{E}\mc{U}$ is a depolarizing channel. Evaluating the average fidelity we get
\begin{equation}
F_{\mathrm{avg}}(\mc{E}) = F_{\mathrm{avg}}\left(\int dU \mc{U}\ct\mc{E}\mc{U}\right) = p + \frac{1-p}{2^q}
\end{equation}
and similarly evaluating the trace we get 
\begin{equation}
\tr(\mc{E}) = \tr\left(\int dU \mc{U}\ct\mc{E}\mc{U}\right) = 1 + (2^{2q}-1)p
\end{equation}
from which the lemma follows.
\end{proof}

In the context of character randomized benchmarking, if $\gr{G}$ is a group with implementation $\mc{E}\mc{G}$ (for all $\mc{G}$) we can relate the average fidelity $F_{\mathrm{avg}}(\mc{E})$ of the quantum channel $\mc{E}$ to the quality parameters $f_\lambda,\; \lambda\in R_\gr{G}$ generated by the character randomized benchmarking experiment. More precisely we have the following lemma which shows that the average fidelity can be related to a weighted average of the quality parameters.

\begin{lemma}\label{lem:trace_to_quality}
Let $\gr{G}$ be a subgroup of $U(2^q)$ such that the PTM representation $\mc{G}= \bigoplus_{\lambda\in R_{\gr{G}}}\phi_\lambda(G)$ for $\lambda \in R_\gr{G}$ is multiplicity-free. We have for any quantum channel $\mc{E}$ that the twirl of $\mc{E}$ with respect to $\mc{G}$ is of the form
 \begin{equation}\label{eq:thm_schur_decomp}
\avg_{G\in \gr{G}} \mc{G}\ct\mc{E}\mc{G} = \sum_{\lambda\in R_\gr{G}} f_\lambda \mc{P}_\lambda
\end{equation}
where $f_\lambda = \tr(\mc{P}_\lambda\mc{E})/\tr(\mc{P}_\lambda)$ and $\mc{P}_\lambda$ is the projection onto the support of the representation $\phi_\lambda$. Moreover the average fidelity $F_{\mathrm{avg}}(\mc{E})$ of $\mc{E}$ is given by
\begin{equation}
F_{\mathrm{avg}}(\mc{E}) = \frac{2^{-q}\sum_{\lambda\in R_\gr{G}} f_\lambda \mc{P}_\lambda+1}{2^q+1}
\end{equation}
\end{lemma}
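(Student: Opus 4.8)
The plan is to stack the two results already established in the excerpt: the Schur-type twirl formula (\cref{lem:Schur}) and the trace--to--fidelity identity (\cref{lem:trace_to_fid}). First I would obtain the decomposition \cref{eq:thm_schur_decomp} by applying \cref{lem:Schur} verbatim with the linear map $A = \mc{E}$ and the representation $\phi = \mc{G}$, the PTM representation of $\gr{G}$. Since by assumption $\mc{G} = \bigoplus_{\lambda\in R_\gr{G}}\phi_\lambda(G)$ is multiplicity-free (the $\phi_\lambda$ being mutually inequivalent irreducibles), \cref{lem:Schur} immediately yields $\avg_{G\in\gr{G}}\mc{G}\ct\mc{E}\mc{G} = \sum_{\lambda\in R_\gr{G}} \frac{\tr(\mc{E}\mc{P}_\lambda)}{\tr(\mc{P}_\lambda)}\,\mc{P}_\lambda$, which is the claimed form with $f_\lambda := \tr(\mc{P}_\lambda\mc{E})/\tr(\mc{P}_\lambda)$. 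One should recall here that the PTM of a unitary is an honest (real, orthogonal) representation, so the twirl is well defined and $\mc{G}\ct = \mc{G}^{-1}$.

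For the fidelity statement I would begin from \cref{lem:trace_to_fid}, which gives $F_{\mathrm{avg}}(\mc{E}) = \frac{2^{-q}\tr(\mc{E})+1}{2^q+1}$ in terms of the trace of $\mc{E}$ taken over superoperators. The crucial (and only) new ingredient is that this superoperator trace is invariant under the twirl: for each fixed $G$, cyclicity of the trace together with $\mc{G}\ct\mc{G} = \id$ gives $\tr(\mc{G}\ct\mc{E}\mc{G}) = \tr(\mc{E})$, and averaging over $G\in\gr{G}$ gives $\tr(\mc{E}) = \tr\big(\avg_{G\in\gr{G}}\mc{G}\ct\mc{E}\mc{G}\big)$. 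Inserting the decomposition from the first part and using linearity of the trace, $\tr(\mc{E}) = \sum_{\lambda\in R_\gr{G}} f_\lambda\,\tr(\mc{P}_\lambda)$. Substituting back into \cref{lem:trace_to_fid} then gives $F_{\mathrm{avg}}(\mc{E}) = \frac{2^{-q}\sum_{\lambda\in R_\gr{G}} f_\lambda\,\tr(\mc{P}_\lambda)+1}{2^q+1}$, matching \cref{eq:fid_gateset} in the main text (the weight $\tr(\mc{P}_\lambda)$ multiplying each $f_\lambda$ has been dropped in the displayed line of the lemma and should be restored).

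I do not expect a genuine obstacle: the argument is pure bookkeeping once \cref{lem:Schur} and \cref{lem:trace_to_fid} are available. The points that need a moment's care are (i) keeping the trace symbol unambiguously the $2^{2q}$-dimensional superoperator trace rather than the trace on density matrices; (ii) recording that $\sum_{\lambda} \tr(\mc{P}_\lambda) = 2^{2q}$, so the formula degenerates correctly to the standard Clifford expression when only one nontrivial sector is present; and (iii) flagging that multiplicity-freeness is exactly what makes the twirl diagonal in \cref{lem:Schur} --- with multiplicities the $f_\lambda$ would become matrices and the statement would need rephrasing. I would dispatch each of these in a single sentence rather than expand them.
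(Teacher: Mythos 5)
Your proposal is correct and follows essentially the same route as the paper: apply \cref{lem:Schur} to get the twirl decomposition, use cyclicity and linearity of the superoperator trace to obtain $\tr(\mc{E}) = \sum_{\lambda\in R_\gr{G}} f_\lambda \tr(\mc{P}_\lambda)$, and conclude via \cref{lem:trace_to_fid}. Your observation that the displayed fidelity formula in the lemma statement should read $\tr(\mc{P}_\lambda)$ rather than $\mc{P}_\lambda$ is also correct; that is a typo in the statement, consistent with \cref{eq:fid_gateset} in the main text.
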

\begin{proof}
\Cref{eq:thm_schur_decomp} follows from a standard application of Schur's lemma (\cref{lem:Schur}). Now consider the trace of $\mc{E}$, from \cref{eq:thm_schur_decomp} and the linearity and cyclicity of the trace we have that
\begin{equation}
\tr(\mc{E}) = \tr\left(\avg_{G\in \gr{G}} \mc{G}\ct\mc{E}\mc{G}\right) = \sum_{\lambda \in R_\gr{G}}\tr(\mc{P}_\lambda)f_\lambda.
\end{equation}
Using \cref{lem:trace_to_fid} we obtain the lemma statement.
\end{proof}

\section{Supplementary Methods IV: Character randomized benchmarking}\label{sec:char_randomized benchmarking}

In this section we will more formally write down the central results of the main text. We will give an analysis of character randomized benchmarking in the case of gate independent noise (which is a formalization of the results given in the main text) and an analysis of character randomized benchmarking in the case of gate-dependent noise. This last part is significantly more technical than the first two. We begin by formally writing down what we mean by a `character randomized benchmarking experiment'
\begin{definition}\label{def:char_randomized benchmarking}
A character randomized benchmarking experiment is defined by a tuple $(\gr{G}, \hat{\gr{G}}, \lambda',\md{M})$ where $\gr{G}$ is a group such that the PTM representation $\mc{G}=\oplus_{\lambda\in R_\gr{G}}\phi_\lambda(G)$ is multiplicity-free, $\gr{\hat{G}}$ is a subgroup of $\gr{G}$, $\lambda'\in R_\gr{G}$ is an element of the index set $R_{\gr{G}}$ labeling the irreducible subrepresentations of $\mc{G}$ and $\md{M}$ is a set of integers denoting the sequence lengths. A character randomized benchmarking experiment outputs a list of real numbers $\{k_m^{\lambda'}\}_{m\in \md{M}}$ given by the following procedure
\begin{enumerate}
	\setlength\itemsep{-0.2em}
	\item Choose an irreducible subrepresentation $\hat{\phi}$ of the PTM representation $\mc{\hat{G}}$ of $\gr{G}$ such that $\mc{P}_{\hat{\phi}}\mc{P}_{\lambda'} = \mc{P}_{\hat{\phi}}$.
	\item Choose a state $\rho$ and a two-component POVM $\{Q, \id -Q\}$ such that $\tr(Q\mc{P}_{\hat{\phi}}(\rho))$ is maximized
	\item Sample $ \vec{G}  =G_1,\ldots, G_m$ uniformly at random from $\gr{G}$ 
	\item Sample $\hat{G}$ uniformly at random from $\hat{\gr{G}}$ 
	\item Prepare the state $\rho$ and apply the gates $(G_1\hat{G}),G_2,\ldots G_m$ (note that we compile $G_1,\hat{G}$ into a single gate)
	\item Compute the inverse $G_{\mathrm{inv}} = (G_m\cdots G_1)\ct$ and apply it to $\rho$ (note that $\hat{G}$ is not inverted)
	\item Estimate the weighted `survival probability' $k^{\lambda}_m(\vec{G},\hat{G}) :=|\hat{\phi}|\chi_{\hat{\phi}}(\hat{G})\bra{Q}\mc{\widetilde{G}}_{\mathrm{inv}}\mc{\widetilde{G}}_m\cdots \widetilde{(\mc{G}_1\mc{\hat{G}})}\ket{\rho}$ with $\chi_{\hat{\phi}}$ the character function of $\hat{\phi}$
	\item Repeat steps 3-7 for many $\hat{G}\in \hat{\gr{G}}$ and estimate the average $k^{\lambda'}_m(\vec{G}) := \md{E}_{\hat{G}}(k^{\lambda'}_m(\vec{G},\hat{G}))$
	\item Repeat steps 3-8 for many $\vec{G}$ and estimate the average $k_m^{\lambda'} := \md{E}_{\vec{G}}(k^{\lambda'}_m(\vec{G}))$
	\item Repeat steps 3-9 for all $m\in \md{M}$
	\item Output $\{k^{\lambda'}_m\}_{m\in \md{M}}$
\end{enumerate}

\end{definition}
The set of numbers $\{k_m^{\lambda'}\}_{m\in \md{M}}$ can then be fitted to an exponential decay, to extract the quality parameter $f_{\lambda'}$  Given a group $\gr{G}$ we can perform character randomized benchmarking experiments for each $\lambda'\in R_{\gr{G}}$ obtaining a list $\{f_{\lambda'}\;\|\;\lambda\in R_{\gr{G}}\}$. This list of quality parameters can be associated to the average fidelity of the gateset $\gr{G}$ using \cref{lem:trace_to_quality,lem:trace_to_fid}.
For completeness we also give an interleaved version of the character randomized benchmarking protocol. 
\begin{definition}\label{def:int_char_randomized benchmarking}
An interleaved  character randomized benchmarking experiment is defined by a tuple $(\gr{G}, \hat{\gr{G}}, \lambda',\md{M},C)$ where $\gr{G}$ is a group such that the PTM representation $\mc{G}=\oplus_{\sigma\in R_\gr{G}}\phi_\sigma(G)$ is multiplicity-free, $\gr{\hat{G}}$ is a subgroup of $\gr{G}$, $\lambda'\in R_\gr{G}$ is an element of the index set $R_{\gr{G}}$ labeling the irreducible subrepresentations of $\mc{G}$, $\md{M}$ is a set of integers denoting the sequence lengths and $C$ is a quantum gate such that $\langle\gr{G},C\rangle$ is a finite group. An interleaved character randomized benchmarking experiment outputs a list of real numbers $\{k_m^{\lambda'}\}_{m\in \md{M}}$ given by the following procedure
\begin{enumerate}
	\setlength\itemsep{-0.2em}
	\item Choose an irreducible subrepresentation $\hat{\phi}$ of the PTM representation $\mc{\hat{G}}$ of $\gr{G}$ such that $\mc{P}_{\hat{\phi}}\mc{P}_{\lambda'} = \mc{P}_{\hat{\phi}}$.
	\item Choose a state $\rho$ and a two-component POVM $\{Q, \id -Q\}$ such that $\tr(Q\mc{P}_{\hat{\phi}}(\rho))$ is maximized
	\item Sample $ \vec{G}  =G_1,\ldots, G_m$ uniformly at random from $\gr{G}$ 
	\item Sample $\hat{G}$ uniformly at random from $\hat{\gr{G}}$ 
	\item Prepare the state $\rho$ and apply the gates $(G_1\hat{G}),C, G_2,C, \ldots C,G_m,C$ (note that we compile $G_1,\hat{G}$ into a single gate)
	\item Compute the inverse $G_{\mathrm{inv}} = (CG_m C\cdots C G_1)\ct$ and apply it to $\rho$ (note that $\hat{G}$ is not inverted, but $C$ is)
	\item Estimate the weighted `survival probability' $k^{\lambda}_m(\vec{G},\hat{G}) :=|\hat{\phi}|\chi_{\hat{\phi}}(\hat{G})\bra{Q}\mc{\widetilde{G}}_{\mathrm{inv}}\mc{\widetilde{G}}_m\cdots \widetilde{(\mc{G}_1\mc{\hat{G}})}\ket{\rho}$ with $\chi_{\hat{\phi}}$ the character function of $\hat{\phi}$
	\item Repeat steps 3-7 for many $\hat{G}\in \hat{\gr{G}}$ and estimate the average $k^{\lambda'}_m(\vec{G}) := \md{E}_{\hat{G}}(k^{\lambda'}_m(\vec{G},\hat{G}))$
	\item Repeat steps 3-8 for many $\vec{G}$ and estimate the average $k_m^{\lambda'} := \md{E}_{\vec{G}}(k^{\lambda'}_m(\vec{G}))$
	\item Repeat steps 3-9 for all $m\in \md{M}$
	\item Output $\{k^{\lambda'}_m\}_{m\in \md{M}}$
\end{enumerate}

\end{definition}

\section{Supplementary Methods V: Examples of character randomized benchmarking}
In this section we give a more detailed overview of the two examples given in the text; benchmarking a gateset with a $T$-gate and 2-for-1 interleaved benchmarking. We begin with an exposition of the irreducible representations of the PTM representation of the the Pauli group, as this is the choice for $\gr{\hat{G}}$ in both examples.

\subsection{V.1 Representations of the Pauli group}\label{ssec:rep_pauli}
Probably the most useful choice for the group $\gr{\hat{G}}$ is the multi-qubit Pauli group. This group is defined as  $\gr{P}_q = \langle i\id, X, Z\rangle\tn{q}$. The reason this group is useful lies in the fact that the irreducible subrepresentations of the Pauli transfer matrix representations of $\gr{P}_q$ are all of dimension one and moreover that they are all inequivalent. We have the following lemma
\begin{lemma}\label{lem:rep_pauli}
Let $\gr{P}_q$ be the Pauli group on $q$ qubits and consider its PTM representation. The PTM representation decomposes as
\begin{equation}
\mc{P} = \bigoplus_{\sigma \in \{\sigma_0\}\cup\bf{\sigma}_q} \phi_\sigma(P),\;\;\;\;\;\;\;\forall P\in \gr{P}_q
\end{equation}
with the projector $\mc{P}_\sigma$ onto the support of $\phi_\sigma$ given by
\begin{equation}
\mc{P}_\sigma = \ket{\sigma}\!\bra{\sigma}
\end{equation}
for all $\sigma \in\{\sigma_0\}\cup\bf{\sigma}_q $. Moreover all representations $\phi_\sigma$ are one-dimensional, mutually inequivalent and have character functions $\chi_\sigma$ given by
\begin{equation}
\chi_{\sigma}(P) = (-1)^{\inp{\sigma}{P}}
\end{equation}
with 
\begin{equation}
\inp{\sigma}{P} = \begin{cases} 0  \iff \text{$P$ and $\sigma$ commute}\\1  \iff \text{$P$ and $\sigma$ anti-commute}. 
\end{cases}
\end{equation}
\end{lemma}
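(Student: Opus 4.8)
The plan is to exploit the fact that the PTM representation of $\gr{P}_q$ is \emph{already} diagonal in the Pauli basis $\{\sigma_0\}\cup\bsq$, so that the decomposition, the projectors, and the one-dimensionality all read off immediately, and then to dispatch the character formula and the inequivalence claim by elementary calculation. First I would compute the PTM of an arbitrary group element $P\in\gr{P}_q = \langle i\id, X, Z\rangle\tn{q}$. The central phases $i^k\id$ act trivially under conjugation, and for any (normalized) Pauli $\sigma\in\{\sigma_0\}\cup\bsq$ the two Pauli strings $P$ and $\sigma$ either commute or anticommute, so $P\sigma P\ct = (-1)^{\inp{\sigma}{P}}\sigma$ with the convention $\inp{\sigma_0}{P}=0$. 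Hence $\mc{P}\ket{\sigma} = \ket{P\sigma P\ct} = (-1)^{\inp{\sigma}{P}}\ket{\sigma}$, i.e. $\mc{P}$ is simultaneously diagonalized by the Pauli basis for every $P$.

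This exhibits $\mc{P} = \bigoplus_{\sigma\in\{\sigma_0\}\cup\bsq}\phi_\sigma(P)$ with $\phi_\sigma(P) := (-1)^{\inp{\sigma}{P}}$ acting on the line $\vspan{\ket{\sigma}}$, so that the projector onto the support of $\phi_\sigma$ is $\mc{P}_\sigma = \ket{\sigma}\!\bra{\sigma}$, as claimed. Each $\phi_\sigma$ is a bona fide one-dimensional representation — this follows either from the general fact recalled in Supplementary Methods I.2 that the PTM is a representation of any subgroup of $\mathrm{U}(2^q)$, or directly from bilinearity of the commutation pairing, which gives $\inp{\sigma}{PP'} = \inp{\sigma}{P} + \inp{\sigma}{P'} \pmod 2$. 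Being one-dimensional, each $\phi_\sigma$ is automatically irreducible, and its character is the representation itself: $\chi_\sigma(P) = \tr(\phi_\sigma(P)) = (-1)^{\inp{\sigma}{P}}$, which is exactly the asserted formula.

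It then remains to prove that the $\phi_\sigma$ are mutually inequivalent. Since they are one-dimensional, any intertwiner $T$ with $T\phi_\sigma(P) = \phi_{\sigma'}(P)T$ for all $P$ is a nonzero scalar, so $\phi_\sigma\equiv\phi_{\sigma'}$ forces $\phi_\sigma = \phi_{\sigma'}$, i.e. $\inp{\sigma}{P} = \inp{\sigma'}{P}\pmod 2$ for all $P\in\gr{P}_q$. I would then show this can hold only when $\sigma = \sigma'$: identifying Pauli operators modulo phase with vectors in $\mbb{F}_2^{2q}$, the pairing $\inp{\cdot}{\cdot}$ is the reduction of the standard symplectic form, which is non-degenerate; hence if $\sigma\neq\sigma'$ the vector attached to $\sigma\sigma'$ is nonzero and there exists $P\in\gr{P}_q$ with $\inp{\sigma}{P} + \inp{\sigma'}{P} = \inp{\sigma\sigma'}{P} = 1$, contradicting equality. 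The case $\sigma = \sigma_0$ is the same observation: $\inp{\sigma_0}{P} = 0$ for all $P$, whereas every non-identity Pauli anticommutes with some element of $\gr{P}_q$. The only genuinely non-routine step is this last one, the non-degeneracy of the commutation pairing on the Pauli group; everything else is a direct unwinding of the definition of the PTM, so I would present the symplectic-form argument explicitly (or cite the standard structure theory of $\gr{P}_q$) and keep the rest terse.
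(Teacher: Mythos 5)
Your proof is correct, and its first three quarters (computing $\mc{P}\ket{\sigma} = (-1)^{\inp{\sigma}{P}}\ket{\sigma}$, reading off the one-dimensional invariant lines, the projectors $\ket{\sigma}\!\bra{\sigma}$, and the character as the trace of a $1\times 1$ block) coincide with the paper's argument essentially line for line. The only place you genuinely diverge is the inequivalence claim. The paper proves it by computing the character inner product $\inp{\chi_\sigma}{\chi_{\sigma'}} = \md{E}_{P}(-1)^{\inp{P}{\sigma\sigma'}}$ and invoking the fact that a non-identity Pauli anticommutes with \emph{exactly half} of the Pauli group, so the average vanishes and orthogonality of characters gives inequivalence. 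You instead observe that one-dimensional representations are equivalent if and only if they are literally equal as functions, and then use non-degeneracy of the symplectic pairing on $\mbb{F}_2^{2q}$ to produce a \emph{single} $P$ with $\inp{\sigma\sigma'}{P}=1$. Your route is marginally more elementary: it needs only the existence of one anticommuting element rather than the exact $50/50$ counting, and it bypasses character orthogonality theory entirely. Both arguments ultimately rest on the same two facts — bilinearity of the commutation pairing ($(-1)^{\inp{P}{\sigma}}(-1)^{\inp{P}{\sigma'}} = (-1)^{\inp{P}{\sigma\sigma'}}$ up to the phase ambiguity in $\sigma\sigma'$, which is harmless since conjugation kills phases) and the non-triviality of that pairing for non-identity Paulis — so the difference is one of packaging rather than substance. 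One small presentational point: you should state, as you implicitly do, the convention $\inp{\sigma_0}{P}=0$ so that the identity line carries the trivial representation and is covered uniformly by the same formula.
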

\begin{proof}
Consider the action of $\mc{P}$ on the vector $\ket{\sigma}$ for $\sigma \in\{\sigma_0\}\cup\bf{\sigma}_q $ and $P\in \gr{P}_q$:
\begin{equation}
\mc{P}\ket{\sigma} = \ket{P\sigma P\ct} =  (-1)^{\inp{\sigma}{P}}\ket{PP\ct \sigma} = (-1)^{\inp{\sigma}{P}} \ket{\sigma}.
\end{equation}
This means that $\ket{\sigma}$ spans a subrepresentation of $\mc{P}$. Since the space spanned by $\ket{\sigma}$ is one dimensional, this subrepresentation is also irreducible. We call this subrepresentation $\phi_\sigma$. By construction $\mc{P}_\sigma = \ket{\sigma}\bra{\sigma}$. Moreover the character function $\chi_\sigma$ is given as
\begin{equation}
\chi_{\sigma}(P) = \tr(\mc{P} \ket{\sigma}\bra{\sigma}) =\bra{\sigma}\mc{P}\ket{\sigma} = (-1)^{\inp{\sigma}{P}}
\end{equation}
It remains to prove that for $\sigma \neq \sigma'$ the representations $\sigma, \sigma'$ are inequivalent. We do this by leveraging the following fundamental result from character theory.
We calculate the character inner product for representations $\phi_\sigma,\phi_{\sigma'}$ of $\gr{P}_q$
 as follows:
\begin{equation}
\inp{\chi_{\sigma}}{\chi_{\sigma'}} = \avg_{P\in \gr{P}_q} \chi_{\sigma}(P)\bar{\chi}_{\sigma'}(P) = \avg_{P\in \gr{P}_q} (-1)^{\inp{P}{\sigma}}(-1)^{\inp{P}{\sigma'}}.
\end{equation}
It is easy to verify by explicit computation that $(-1)^{\inp{P}{\sigma}}(-1)^{\inp{P}{\sigma'}} = (-1)^{\inp{P}{\tau}}$ with $\tau \approx \sigma \sigma'$, i.e $\tau$ is equal to $\sigma \sigma'$ up to a proportionality factor. Since $\sigma \sigma'\approx \id$ if and only if $\sigma = \sigma'$ we have that $\tau \neq \id$. Since a non-identity Pauli matrix (such as $\tau$) commutes with precisely half of the elements of the Pauli group and anti-commutes with the other half (for a proof of this fact see for instance~\cite[Lemma 1]{Clifford2016}) we have that $\inp{\chi_{\sigma}}{\chi_{\sigma'}}=0 $, completing the lemma.
\end{proof}

Note that for two Pauli matrices $P, P'$ we can also efficiently (in the number of qubits $q$) decide whether they commute or anti-commute. This means that the character function $\chi_\sigma(P)$ can be efficiently computed on the fly for any $\sigma$ and $P$. This is important because we must compute an instantiation of the character function for every random sample drawn during the character randomized benchmarking procedure. Note however that this can be done in post-processing so high speed (not just efficient) calculation of the character function is not a requirement for the success of the character randomized benchmarking procedure.

\subsection{V.2 Benchmarking a $T$ gate}
In this section we give some more background information on how to perform character randomized benchmarking on the CNOT-dihedral gateset $\gr{T}_q$ which is defined as all gates that can be synthesized from a combination of $T$ gates, $X$ gates and CNOT gates, or more formally
\begin{equation}
\gr{T}_q = \langle \CNOT_{i,j}, T_k, X_l \;\;\|\;\; i,j, k, l\in\{1, \ldots,q\}, \;i\neq j\rangle
\end{equation}
where $\CNOT_{i,j}$ indicates the $\CNOT$ gate with the $i$'th qubit as control and the $j$'th qubit as target, $T_k$ indicates the $T$-gate applied to the $k$'th qubit and $X_l$ indicates the $X$-gate applied to the $l$'th qubit.
 The PTM representation of this group has, as mentioned in the main text, three irreducible subrepresentations $\phi_1,\phi_2\phi_3$, with associated projections:
\begin{align}
\mc{P}_1 &= \ddens{\sigma_0}\\
\mc{P}_2 &= \sum_{\sigma \in \mc{Z}} \ddens{\sigma}\\
\mc{P}_3 &= \sum_{\sigma \in \bf{\sigma}_q\backslash \mc{Z}} \ddens{\sigma}
\end{align}
where $\mc{Z}$ is defined as the subset of normalized Pauli's consisting of only $Z$ and $\id$ tensor factors. The above was proven in \cite{Cross_2016}. Since there are three representations, we must estimate three quality parameters $f_1,f_2,f_3$ in order to estimate the average fidelity. However, assuming the noisy gates are CPTP maps it is easy to see that $f_1=1$. This leaves us with estimating the parameters $f_2,f_3$. This we do by two character randomized benchmarking experiments which we describe explicitly below.\\

\noindent {\bf Estimating $f_2$}\\

To estimate the quality parameter $f_2$ we must perform the following set of steps

\begin{enumerate}
\item Choose $\gr{G} = \gr{T}_q$ the CNOT-dihedral group on $q$ qubits and choose $\gr{\hat{G}} = \gr{P}_q$ the $q$ qubit Pauli group
\item Choose $\{Q,\id- Q\}$ a two component POVM with $Q= \frac{1}{2}(\id+ Z\tn{q})$ and choose $\rho= 2^{-q}(\id+ Z\tn{q})$ (see section VI on how to prepare this non-pure state efficiently)
\item Choose $\phi_{\sigma}$ with $\sigma =2^{q/2}Z\tn{q}$ an irreducible subrepresentation of the PTM representation of $\gr{P}_q$ with character function $\chi_{\sigma}$ (which can be computed from \cref{lem:rep_pauli})
\item Perform a character randomized benchmarking experiment (as given in \cref{def:char_randomized benchmarking}) $(\gr{T}_q,\gr{P}_q, 2, \md{M})$ (for suitably chosen $\md{M}$) with $\phi = \phi_\sigma$ to obtain the quality parameter $f_2$.
\end{enumerate}

\noindent {\bf Estimating $f_3$}\\

To estimate the quality parameter $f_3$ we must perform the following set of steps

\begin{enumerate}
\item Choose $\gr{G} = \gr{T}_q$ the CNOT-dihedral group on $q$ qubits and choose $\gr{\hat{G}} = \gr{P}_q$ the $q$ qubit Pauli group
\item Choose $\{Q,\id- Q\}$ a two component POVM with $Q= \frac{1}{2}(\id+ X\tn{q})$ and choose $\rho= 2^{-q}(\id+ X\tn{q})$ (see section VI on how to prepare this non-pure state efficiently)
\item Choose $\phi_{\sigma}$ with $\sigma =2^{q/2} X\tn{q}$ an irreducible subrepresentation of the PTM representation of $\gr{P}_q$ with character function $\chi_{\sigma}$ (which can be computed from \cref{lem:rep_pauli})
\item Perform a character randomized benchmarking experiment (as given in \cref{def:char_randomized benchmarking}) $(\gr{T}_q,\gr{P}_q, 3, \md{M})$ (for suitably chosen $\md{M}$) with $\phi = \phi_\sigma$ to obtain the quality parameter $f_3$.
\end{enumerate}

\noindent {\bf Computing the average fidelity}\\

The average fidelity can now be computed from \cref{lem:trace_to_quality} and \cref{lem:trace_to_fid}, provided we know the quantities $\tr(\mc{P}_2)$ and $\tr(\mc{P}_3)$. These were derived in \cite{Cross_2016} giving an average fidelity formula of the form
\begin{equation}
F_{\mathrm{avg}} = \frac{2^q-1}{2^q}\left(1 - \frac{f_2 + 2^q f_3}{2^q+1}\right)
\end{equation}

\subsection{V.3 2-for-1 interleaved benchmarking}\label{ssec:2for1_benchmarking}
In this section we give some more detailed information on 2-for-1 interleaved benchmarking. The aim of this section is two-fold: (1) gather all information needed to perform 2-for-1 randomized benchmarking in one place and (2) detail a simulation showcasing the benefits of 2-for-1 randomized benchmarking. The goal of this protocol is to extract the average fidelity associated to a single two qubit gate $C$. This is usually done using interleaved randomized benchmarking on the $2$-qubit Clifford group. Here we will replace this $2$-qubit Clifford group by two copies of the single qubit Clifford group. We begin by analyzing the behavior of character randomized benchmarking using $\gr{C}_1\tn{2}$. We have the following lemma, which justifies \cref{eq:2for1_reps} in the main text.
\begin{lemma}
Let $\gr{G}=\gr{C}_1\tn{2}$ be the two-fold tensor product of the single qubit Clifford group. The PTM representation of this group (acting on two qubits), decomposes into four inequivalent irreducible subrepresentations $\phi_w$ indexed by $w\in \{0,1\}^{\times 2}$ with projectors onto the supports of $\phi_w$ given by
\begin{align}
\mc{P}_{(0,0)} = \ddens{\sigma_0\otimes\sigma_0}\\
\mc{P}_{(1,0)} = \sum_{\sigma\in \bf{\sigma}_1 }\ddens{\sigma\otimes\sigma_0}\\
\mc{P}_{(0,1)} = \sum_{\sigma \in \bf{\sigma}_1}\ddens{\sigma_0\otimes\sigma}\\
\mc{P}_{(1,1)} = \sum_{\sigma,\sigma'\in \bf{\sigma}_1 }\ddens{\sigma\otimes\sigma'}
\end{align}
\end{lemma}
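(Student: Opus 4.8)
The plan is to bootstrap from the representation theory of a single copy of the Clifford group and then exploit the fact that the PTM representation turns a tensor product of gates into a tensor product of matrices (Supplementary Methods~I.2). First I would recall the decomposition of the PTM representation $\mc{C}_1$ of the single-qubit Clifford group $\gr{C}_1$, which acts on the four-dimensional space spanned by $\{\sigma_0\}\cup{\bf\sigma}_1$. Since every Clifford unitary sends Pauli operators to signed Pauli operators, the vector $\ket{\sigma_0}$ spans a one-dimensional (trivial) subrepresentation $\phi_0$, and its orthogonal complement $\vsp\{\ket{\sigma}\;\|\;\sigma\in{\bf\sigma}_1\}$ is likewise invariant, carrying a three-dimensional subrepresentation $\phi_1$. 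I would then verify that $\phi_1$ is irreducible, for instance by checking $\inp{\chi_{\phi_1}}{\chi_{\phi_1}}=1$ directly, or by noting that the image of $\gr{C}_1$ inside $\mathrm{GL}_3(\md{R})$ is the full group of signed permutations of the three Bloch axes, which acts irreducibly on $\md{R}^3$. Together with the dimension count $1+3=2^{2}$ this yields $\mc{C}_1=\phi_0\oplus\phi_1$ with $\phi_0\not\equiv\phi_1$.

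Next, using that the PTM representation respects tensor products, the PTM representation of $\gr{G}=\gr{C}_1\tn{2}\cong\gr{C}_1\times\gr{C}_1$ is $\mc{C}_1\otimes\mc{C}_1$, and distributing the tensor product over the direct sum gives
\[
\mc{C}_1\otimes\mc{C}_1=\bigoplus_{w=(w_1,w_2)\in\{0,1\}^{\times 2}}\phi_{w_1}\otimes\phi_{w_2}=:\bigoplus_{w}\phi_w .
\]
It then remains to check that each $\phi_w$ is irreducible and that the four pieces are mutually inequivalent. Both facts follow from the multiplicativity of characters under (external) tensor products, $\chi_{\phi_{w}}(G,H)=\chi_{\phi_{w_1}}(G)\,\chi_{\phi_{w_2}}(H)$, which together with the factorization of the uniform average over $\gr{C}_1\times\gr{C}_1$ gives
\[
\inp{\chi_{\phi_w}}{\chi_{\phi_{w'}}}=\inp{\chi_{\phi_{w_1}}}{\chi_{\phi_{w_1'}}}\,\inp{\chi_{\phi_{w_2}}}{\chi_{\phi_{w_2'}}}=\delta_{w_1 w_1'}\,\delta_{w_2 w_2'},
\]
using the orthonormality of $\{\chi_{\phi_0},\chi_{\phi_1}\}$ established above. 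Taking $w=w'$ shows each $\phi_w$ is irreducible, and taking $w\neq w'$ shows inequivalence; in particular this settles the one case that a dimension count cannot, namely $\phi_{(1,0)}\not\equiv\phi_{(0,1)}$ (both are three-dimensional, but they are inequivalent because $\phi_0\not\equiv\phi_1$ as representations of $\gr{C}_1$).

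Finally I would read off the projectors. The support of $\phi_w=\phi_{w_1}\otimes\phi_{w_2}$ is the tensor product of the supports of $\phi_{w_1}$ and $\phi_{w_2}$; since the support of $\phi_0$ is $\vsp\{\ket{\sigma_0}\}$ and that of $\phi_1$ is $\vsp\{\ket{\sigma}\;\|\;\sigma\in{\bf\sigma}_1\}$, the orthogonal projector onto the support of $\phi_w$ equals $\sum\ddens{\sigma\otimes\sigma'}$ with $\sigma$ ranging over $\{\sigma_0\}$ when $w_1=0$ and over ${\bf\sigma}_1$ when $w_1=1$, and likewise for $\sigma'$ depending on $w_2$; writing this out for the four values of $w$ reproduces the stated $\mc{P}_{(0,0)},\mc{P}_{(1,0)},\mc{P}_{(0,1)},\mc{P}_{(1,1)}$. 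The only step that requires genuine rather than purely formal work is the irreducibility of the three-dimensional single-qubit piece $\phi_1$; once that base case is secured the rest is character bookkeeping, and the one point to watch is keeping the Pauli-basis normalization consistent so that the $\ddens{\cdot}$ are genuine orthogonal projectors.
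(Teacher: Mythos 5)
Your proposal is correct and reaches the same four invariant subspaces by the same observation (every single-qubit Clifford fixes $\ket{\sigma_0}$ and preserves the span of the traceless Paulis), but it organizes the irreducibility/inequivalence argument differently from the paper. The paper works top-down: it computes the norm of the character of the full two-qubit PTM representation, $\inp{\chi_{\mathrm{PTM}}}{\chi_{\mathrm{PTM}}}=\bigl(\avg_{C\in\gr{C}_1}|\tr(\mc{C})|^2\bigr)^2=4$ using the fact that $\gr{C}_1$ is a unitary 2-design, and then concludes from additivity of characters and $\inp{\chi_\phi}{\chi_\phi}\geq 1$ that each of the four exhibited subrepresentations must be irreducible (the same count also forces the cross terms to vanish, giving inequivalence, though the paper leaves that implicit). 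You work bottom-up: first establish $\mc{C}_1=\phi_0\oplus\phi_1$ with $\{\chi_{\phi_0},\chi_{\phi_1}\}$ orthonormal, then use the external-tensor-product structure of representations of $\gr{C}_1\times\gr{C}_1$ and multiplicativity of characters to get $\inp{\chi_{\phi_w}}{\chi_{\phi_{w'}}}=\delta_{w_1w_1'}\delta_{w_2w_2'}$ in one stroke. The two computations are the same character bookkeeping in disguise ($\avg|\tr\mc{C}|^2=2$ is exactly $\inp{\chi_{\phi_0}}{\chi_{\phi_0}}+2\inp{\chi_{\phi_0}}{\chi_{\phi_1}}+\inp{\chi_{\phi_1}}{\chi_{\phi_1}}=2$), but your version makes the inequivalence of the two three-dimensional pieces $\phi_{(1,0)}$ and $\phi_{(0,1)}$ explicit, which the paper's proof does not spell out, and it generalizes more transparently to $\gr{C}_1^{\otimes q}$. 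One small factual slip: the image of $\gr{C}_1$ in $\mathrm{GL}_3(\md{R})$ is not the full group of signed permutations of the Bloch axes but its determinant-one subgroup of order $24$ (the rotation group of the octahedron); this does not affect your conclusion, since that subgroup still acts irreducibly on $\md{R}^3$, and your character-norm verification covers the point in any case.
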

\begin{proof}
We begin by noting that for all $G\in \gr{C}_1$ we have that $\mc{G}\ket{\sigma_0} = \ket{\sigma_0}$. This already implies that
\begin{equation}
\mc{C} \mc{P}_w = \mc{P}_w \mc{C},\;\;\;\;\;\;\;\;C\in \gr{C}_1\tn{2},\;\; w\in \{0,1\}^{\times 2}
\end{equation}
which means all $\phi_w$ defined in the lemma statement are subrepresentations of the PTM representation of $\gr{C}_1\tn{2}$. To see that they are also irreducible we calculate the character inner product of the PTM representation of $\gr{C}_1\tn{2}$. We have
\begin{equation}
\inp{\chi_{\mathrm{PTM}}}{\chi_{\mathrm{PTM}}} = \avg_{C_1,C_2\in \gr{C}_1} |\tr(\mc{C}_1\otimes\mc{C}_2)|^2 = \left(\avg_{C_1\in \gr{C}_1} |\tr(\mc{C}_1)|^2\right)^2.
\end{equation}
Because the single qubit Clifford group is a two-design we know that $\md{E}_{C_1\in \gr{C}_1} |\tr(\mc{C}_1)|^2  =2$~\cite{Dankert2009} and hence that $\inp{\chi_{\mathrm{PTM}}}{\chi_{\mathrm{PTM}}} =4$. Since characters are additive w.r.t. taking direct sums of representations and $\inp{\chi_{\phi}}{\chi_{\phi}}\geq 1$ with equality if and only if $\phi$ is irreducible we conclude that $\phi_w$ must also be irreducible for all $w\in \{0,1\}^{\times 2}$.

\end{proof}

The 2-for-1 interleaved benchmarking protocol consists of two parts; the reference experiment and the interleaved experiment. We now list the steps required to perform 2-for-1 interleaved benchmarking, making all aspects of it (such as character functions) explicit.\\

\noindent{\bf Reference experiment}\\

To perform the reference stage of two-for-one interleaved benchmarking we must perform the following set of steps

\begin{enumerate}
\item Choose $\gr{G} = \gr{C}_1\tn{2}$ the group of single qubit Cliffords on two qubits and choose $\gr{\hat{G}} = \gr{P}_2$ the two qubit Pauli group
\item Choose $\{Q,\id- Q\}$ a two component POVM with $Q= \dens{00}$ and choose $\rho = \dens{00}$
\item Choose $\phi_{\sigma}$ with $\sigma = (Z\otimes \id )/2$ an irreducible subrepresentation of the PTM representation of $\gr{P}_2$ with character function $\chi_{\sigma}$ (given explicitly in \cref{box:char})
\item Perform a character randomized benchmarking experiment (as given in \cref{def:char_randomized benchmarking}) $(\gr{C}_1\tn{2},\gr{P}_2, w, \md{M})$ (for suitably chosen $\md{M}$) with $\phi = \phi_\sigma$ to obtain the quality parameter $f_w$ with $w= (1,0)$
\item Choose $\phi_{\sigma}$ with $\sigma = (\id\otimes Z )/2$ an irreducible subrepresentation of the PTM representation of $\gr{P}_2$ with character function $\chi_{\sigma}$ (given explicitly in \cref{box:char})
\item Perform a character randomized benchmarking experiment $(\gr{C}_1\tn{2},\gr{P}_2, w, \md{M})$ (for suitably chosen $\md{M}$) with $\phi = \phi_\sigma$ to obtain the quality parameter $f_w$ with $w= (0,1)$
\item Choose $\phi_{\sigma}$ with $\sigma = (Z\otimes Z )/2$ an irreducible subrepresentation of the PTM representation of $\gr{P}_2$ with character function $\chi_{\sigma}$ (given explicitly in \cref{box:char})
\item Perform a character randomized benchmarking experiment $(\gr{C}_1\tn{2},\gr{P}_2, w, \md{M})$ (for suitably chosen $\md{M}$) with $\phi = \phi_\sigma$ to obtain the quality parameter $f_w$ with $w= (1,1)$
\end{enumerate}
Knowing that $f_w = 1$ for $w=(0,0)$ (assuming the noise affecting the gates is CPTP) we can use \cref{lem:trace_to_quality,lem:trace_to_fid} to obtain the average reference fidelity $F_{\mathrm{avg}}^{\mathrm{{ref}}}$ as
\begin{equation}
F_{\mathrm{avg}}^{\mathrm{{ref}}} = \frac{1}{5}\left(\frac{1}{4}\left(1 + 3 f_{(0,1)} + 3f_{(1,0)} + 9 f_{(1,1)}\right) + 1\right).
\end{equation}
The character functions $\chi_\sigma$ for $\sigma \in \{(Z\otimes \id )/2,(\id\otimes Z )/2,(Z\otimes Z )/2\}$ are given in \cref{box:char}.\\
\begin{table}
\begin{tabular}{|c|c|c|c|c|c|c|c|c|c|c|c|c|c|c|c|c|}
\hline
$\sigma\backslash P$ & $\id\id$ & $Z\id$ & $\id Z$ & $ZZ$ & $X\id$ & $\id X$ & $XX$ & $Y\id $ & $\id Y$ & $YY$ & $ZX$ & $XZ$ & $ZY$ & $YZ$ & $XY$ & $YX$\\
\hline\hline
$Z\id$ &  $1$ & $1$ & $1$ & $1$ & $-1$ & $1$ & $-1$ & $-1$ & $1$ & $-1$ & $1$ & $-1$ & $1$ & $-1$ & $-1$ & $-1$\\
\hline
$\id Z$ &  $1$ & $1$ & $1$ & $1$ & $1$ & $-1$ & $-1$ & $1$ & $-1$ & $-1$ & $-1$ & $1$ & $-1$ & $1$ & $-1$ & $-1$\\
\hline
$ZZ$ &  $1$ & $1$ & $1$ & $1$ & $-1$ & $-1$ & $1$ & $-1$ & $-1$ & $1$ & $-1$ & $-1$ & $-1$ & $-1$ & $1$ & $1$\\
\hline
\end{tabular}
\caption{Values for the character function $\chi_{\sigma}(P)$ for $P\in \gr{P}_2$ and $\sigma \in \{(Z \id )/2,(\id Z )/2,(Z Z )/2\}$, suppressing the tensor product.}\label{box:char}
\end{table}

\noindent{\bf Interleaved experiment}\\

To perform the interleaved stage of two-for-one interleaved benchmarking we must perform the following set of steps
\begin{enumerate}
\item Choose $\gr{G} = \gr{C}_1\tn{2}$ the group of single qubit Cliffords on two qubits and choose $\gr{\hat{G}} = \gr{P}_2$ the two qubit Pauli group
\item Choose $\{Q,\id- Q\}$ a two component POVM with $Q= \dens{00}$ and choose $\rho = \dens{00}$
\item Choose $\phi_{\sigma}$ with $\sigma = (Z\otimes \id )/2$ an irreducible subrepresentation of the PTM representation of $\gr{P}_2$ with character function $\chi_{\sigma}$ (given explicitly in \cref{box:char})
\item Perform an interleaved character randomized benchmarking experiment (as given in \cref{def:int_char_randomized benchmarking}) $(\gr{C}_1\tn{2},\gr{P}_2, w, \md{M},C)$ (for suitably chosen $\md{M}$) with $\phi = \phi_\sigma$ to obtain the quality parameter $f_w$ with $w= (1,0)$
\item Choose $\phi_{\sigma}$ with $\sigma = (\id\otimes Z )/2$ an irreducible subrepresentation of the PTM representation of $\gr{P}_2$ with character function $\chi_{\sigma}$ (given explicitly in \cref{box:char})
\item Perform an interleaved character randomized benchmarking experiment $(\gr{C}_1\tn{2},\gr{P}_2, w, \md{M},C)$ (for suitably chosen $\md{M}$) with $\phi = \phi_\sigma$ to obtain the quality parameter $f_w$ with $w= (0,1)$
\item Choose $\phi_{\sigma}$ with $\sigma = (Z\otimes Z )/2$ an irreducible subrepresentation of the PTM representation of $\gr{P}_2$ with character function $\chi_{\sigma}$ (given explicitly in \cref{box:char})
\item Perform an interleaved character randomized benchmarking experiment $(\gr{C}_1\tn{2},\gr{P}_2, w, \md{M},C)$ (for suitably chosen $\md{M}$) with $\phi = \phi_\sigma$ to obtain the quality parameter $f_w$ with $w= (1,1)$
\end{enumerate}
Knowing that $f_w = 1$ for $w=(0,0)$ (assuming the noise affecting the gates is CPTP) we can use \cref{lem:trace_to_quality,lem:trace_to_fid} to obtain the average interleaved fidelity $F_{\mathrm{avg}}^{\mathrm{{ref}}}$ as
\begin{equation}
F_{\mathrm{avg}}^{\mathrm{{int}}} = \frac{1}{5}\left(\frac{1}{4}\left(1 + 3 f_{(0,1)} + 3f_{(1,0)} + 9 f_{(1,1)}\right) + 1\right).
\end{equation}

\noindent{\bf Obtaining the gate average fidelity}\\

Given values for $F_{\mathrm{avg}}^{\mathrm{{ref}}}$ and $F_{\mathrm{avg}}^{\mathrm{{int}}}$ (estimated by the protocols above) we can place upper and lower bounds on the average fidelity $F_{\mathrm{avg}}(\mc{\widetilde{C}}\mc{C}\ct)$ of the gate $C$. We will use the optimal bounds derived in~\cite{dugas2016efficiently} which state that
\begin{equation}\label{eq:int_ref_bound}
|\psi^{(\mathrm{int})} - \psi^{(C)}\psi^{(\mathrm{ref})} + (1-\psi^{(C)})(1-\psi^{(\mathrm{ref})})| \leq \sqrt{\psi^{(C)}(1-\psi^{(C)})}\sqrt{\psi^{(\mathrm{ref})}(1-\psi^{(\mathrm{ref})})}
\end{equation}
where 
\begin{equation}
\psi^{(C)} = 2^{-q}((2^q-1)F_{\mathrm{avg}}(\mc{\widetilde{C}}\mc{C}\ct) -1)
\end{equation}
and similarly for $\psi^{(\mathrm{int})}$ and $\psi^{(\mathrm{ref})}$. We can numerically solve the above inequality to obtain lower and upper bounds on the value for $\psi^{(C)}$ given $\psi^{(\mathrm{int})}$ and $\psi^{(\mathrm{ref})}$ and thus for $F_{\mathrm{avg}}(\mc{\widetilde{C}}\mc{C}\ct)$ given $F_{\mathrm{avg}}^{\mathrm{{ref}}}$ and $F_{\mathrm{avg}}^{\mathrm{{int}}}$.\\

An often quoted number for the gate average fidelity $F_{\mathrm{avg}}(\mc{\widetilde{C}}\mc{C}\ct)$ is the `interleaved gate fidelity estimate' $F^{\mathrm{est}}$, given by~\cite{Magesan_2012_interleaved}
\begin{equation}
F^{\mathrm{est}} = 1- \frac{(2^q-1)}{2^q}\left(1-\frac{2^qF_{\mathrm{avg}}^{\mathrm{{ref}}} - 1}{2^q F_{\mathrm{avg}}^{\mathrm{{int}}} -1}\right)
\end{equation}
which can also be estimated using 2-for-1 interleaved benchmarking. We however stress that this number, without further knowledge of the underlying noise process, has no interpretation as a point estimate of $F_{\mathrm{avg}}(\mc{\widetilde{C}}\mc{C}\ct)$ (apart from being a point in the interval given by solving \cref{eq:int_ref_bound}).\\

\noindent{\bf Comparing standard interleaved randomized benchmarking and 2-for-1 interleaved randomized benchmarking}\\

Note that in \cref{eq:int_ref_bound} higher values for  $F_{\mathrm{avg}}^{\mathrm{{ref}}}$ and $F_{\mathrm{avg}}^{\mathrm{{int}}}$ lead to sharper bounds on $F_{\mathrm{avg}}(\mc{\widetilde{C}}\mc{C}\ct)$. This is, apart from lower resource cost, the main advantage of 2-for-1 character randomized benchmarking. In a typical quantum computing platform the single qubit gate fidelity is much higher than the two qubit gate fidelity. Since a typical $2$ qubit Clifford gate is composed of two layers of single qubit gates and a single two qubit gate~\cite{corcoles2013process} the expected reference fidelity in 2-for-1 interleaved randomized benchmarking is much higher than the reference fidelity in standard interleaved randomized benchmarking, thus leading to much sharper bounds on the average fidelity of the interleaved gate. To illustrate this we have simulated 2-for-1 interleaved randomized benchmarking and standard interleaved randomized benchmarking using realistic values for single qubit gate fidelities and two qubit gate fidelities~\cite{watson2018programmable}. In particular we have chosen the single qubit average gate fidelity to be $F_{\mathrm{avg}}^{(1)} = 0.99$ and the two qubit gate fidelity to be $F^{(2)}_{\mathrm{avg}} = 0.898$. In \cref{fig:simulation} we show the result of a simulated experiment using these values. We see that the reference fidelity in 2-for-1 interleaved benchmarking is significantly higher ($F_{\mathrm{avg}}^{\mathrm{ref}} \approx 0.98$) than the reference fidelity of standard interleaved benchmarking ($F_{\mathrm{avg}}^{\mathrm{ref}} \approx 0.87$). This in turn leads to a significantly higher lower bound for the average fidelity of the interleaved gate ($F_{\mathrm{avg}}(\mc{\widetilde{C}}\mc{C}\ct)\gtrsim 0.79$ for 2-for-1 interleaved benchmarking and $F_{\mathrm{avg}}(\mc{\widetilde{C}}\mc{C}\ct)\gtrsim 0.62$ for standard interleaved benchmarking). 
\begin{figure}
\hspace*{-0.7cm}
\includegraphics[scale=0.5]{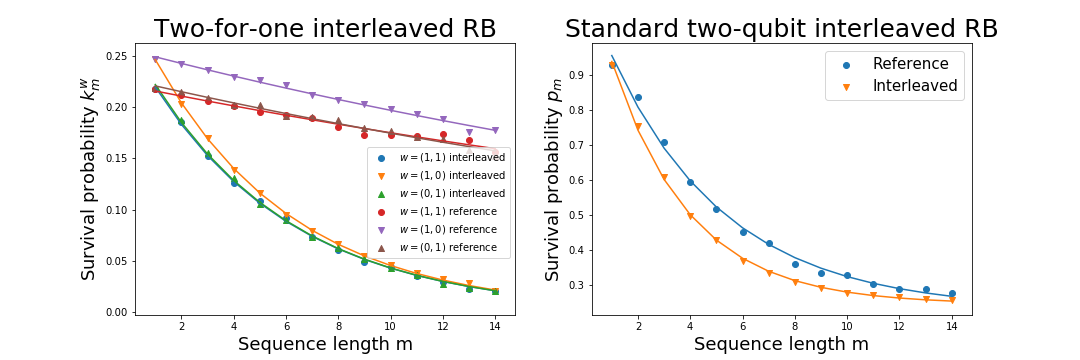}
\caption{Simulation of 2-for-1 interleaved randomized benchmarking (left) and standard two-qubit interleaved randomized benchmarking (right). Inspired by the experimental results of~\cite{watson2018programmable} we chose single qubit gate average fidelities of $F_{\mathrm{avg}} = 0.987$ (on both qubits) and two-qubit gate average fidelities of $F_{\mathrm{avg}}= 0.898$, explicitly realized by a random unitary error map (corresponding to an error model dominated by calibration errors). Also following ~\cite{watson2018programmable} we simulated a measurement fidelity of $F = 0.8$ and a state preparation fidelity of $F= 0.99$. Both experiments sampled 100 random sequences per sequence length for sequence lengths in the interval $[1:15]$. The 2-for-1 interleaved experiment produces a reference fidelity of $F_{\mathrm{ref}}\approx0.98$ and an interleaved fidelity of $F_{\mathrm{int}}\approx0.87$. This leads to an 'interleaved gate fidelity estimate' of $F_{\mathrm{est}}=0.89$ with a guaranteed lower bound of  $F_{\mathrm{avg}}(\mc{\widetilde{C}}\mc{C}\ct)\gtrsim 0.79$.
 On the other hand the standard interleaved randomized benchmarking experiment produces a reference fidelity of $F_{\mathrm{ref}} \approx 0.86$ and an interleaved fidelity of $F_{\mathrm{int}}=0.78$. This leads to an 'interleaved gate fidelity estimate' of $F_{\mathrm{est}}\approx0.9$ with a guaranteed lower bound of $F_{\mathrm{avg}}(\mc{\widetilde{C}}\mc{C}\ct)\gtrsim 0.62$. Note that the lower bound produced by the standard interleaved randomized benchmarking experiment is significantly worse than the lower bound produced by 2-for-1 interleaved benchmarking. (Note that that we have not included error estimates for the fitted values as we are only interested in the qualitative behavior of the experiment here.)  }\label{fig:simulation}
\end{figure}

\section{Supplementary Methods VI: Finite sampling}\label{sec:finite_sampling}
In this section we elaborate on the statistical aspects of character randomized benchmarking. We will denote probability distributions by capital Greek letters (such as $\Lambda$) and their means by the letter $\mu$ subscripted with the corresponding distribution. The character randomized benchmarking protocol requires one to calculate the means of probability distributions. This is however impossible to do exactly using only a finite amount of samples drawn from the probability distribution. Instead one must rely on empirical estimates of these means. The reliability of these estimates is expressed by \emph{confidence intervals}. Imagine being given a distribution with mean $\mu$ and an empirical estimate $\mu_N = \frac{1}{N}\sum_{x\in R_N}x$ where $R_N$ is a set of $N$ samples drawn independently from the distribution.  Now a confidence interval (around $\mu_N$) is a pair of real numbers $(\epsilon, \delta)$ such that 
\begin{equation}
\mathrm{Pr}(|\mu_N-\mu|\geq \epsilon)\leq 1-\delta,
\end{equation}
where the probability is taken with respect to the distribution being sampled from. Even though confidence intervals seem to require knowledge of the distribution being sampled from they can in fact be constructed using only very limited knowledge of the distribution. In particular, if one knows that the distribution being sampled from is bounded, that is it only takes value inside an interval $[a,b]$ for $a,b\in \md{R}$ then we can use Hoeffding's concentration inequality~\cite{Hoeffding1963}, given by
\begin{equation}\label{eq:concentration}
\mathrm{Pr}(|\mu_N-\mu|\geq \epsilon)\leq 1-2\exp\left(\frac{-N\epsilon^2}{(a-b)^2}\right).
\end{equation}
Plugging in $\delta$ and inverting this equation we get a relation between the confidence interval $(\epsilon, \delta)$ and the number of samples $N$ from the distribution we need to construct this interval. We have
\begin{equation}
N\geq \frac{\log(2/\delta)(a-b)^2}{\epsilon^2}.
\end{equation}
Note that this equation is completely generic, it can be used to empirically estimate the mean of any probability distribution, as long as this distribution is bounded.\\

With the above we can analyze the character randomized benchmarking protocol for finite sampling. The main question we aim to answer here is how many samples are required to accurately estimate the character average $k_m^{\lambda}$ for fixed $m$ and $\lambda$. There are $3$ sources of randomness in the character randomized benchmarking protocol.
\begin{enumerate}
	\item The first source of randomness comes from sampling sequences uniformly at random from the set $\gr{G}^{\times m}$
	\item The second source of randomness comes from sampling an element from $\gr{\hat{G}}$ uniformly at random.
	\item The last source of randomness is quantum mechanics itself. In general we can perform the following sequence of events
	\begin{enumerate}
		\item Prepare a system in a state $\rho$
		\item Apply some quantum operation $\mc{E}$
		\item Measure using some two-component POVM $\{Q, \id - Q\}$
	\end{enumerate}
	At the end of this sequence we will get a single bit of information $x$ which takes the value $0$ (measure $Q$) or $1$ (measure $\id -Q$). We can think of $x$ as the being an instance of a random variable $X$ which follows a Bernoulli distribution $\Lambda_{\mathrm{Bern}}$ with mean $ \mu_{\Lambda_{\mathrm{Bern}}}=\bra{Q}\mc{E}\ket{\rho}$.
\end{enumerate}
As mentioned in the main text, one of the key challenges of character randomized benchmarking lies in estimating the mean of the distribution induced by uniform random sampling from the group $\gr{\hat{G}}$ (the second source of randomness). Formally we have
\begin{equation}
k_m^{\lambda}(\vec{G}) = \avg_{\hat{G}\in \gr{\hat{G}}} {\chi_{\hat{\phi}}(\hat{G})}|{\hat{\phi}}|\bra{Q}\widetilde{\mc{G}}_{\mathrm{inv}} \widetilde{\vec{\mc{G}}}\hat{\mc{G}}\ket{\rho}.
\end{equation}
Note that this quantity mixes two of the above types of randomness as $k_m^{\lambda}(\vec{G})$ is an average of quantities $\bra{Q}\widetilde{\mc{G}}_{\mathrm{inv}} \widetilde{\vec{\mc{G}}}\hat{\mc{G}}\ket{\rho}$ which are themselves means of Bernoulli distributions.\\

The naive way of estimating $k_m^{\lambda}(\vec{G})$ would be to first estimate the means $\bra{Q}\widetilde{\mc{G}}_{\mathrm{inv}} \widetilde{\vec{\mc{G}}}\hat{\mc{G}}\ket{\rho}$ by performing the associated measurement procedure $N$ times and using the concentration inequality given above to construct an (accurate) estimate of $\bra{Q}\widetilde{\mc{G}}_{\mathrm{inv}} \widetilde{\vec{\mc{G}}}\hat{\mc{G}}\ket{\rho}$. We can then multiply each estimate by $\chi_{\hat{\phi}}(\hat{G})|{\hat{\phi}}|$ and average them to obtain an estimate for $k_m^\lambda(\vec{G})$.\\

 However, to calculate $k_m^\lambda(\vec{G})$ we would have to perform this procedure for every $\hat{G}\in \gr{\hat{G}}$, which would require $|\gr{\hat{G}}|N$ samples in total. This is not a good approach when performing character randomized benchmarking on more than a few qubits. The reason for this is that typically the size of $\gr{\hat{G}}$ will grow exponentially with the number of qubits. For instance, if $\gr{\hat{G}}$ is the Pauli group we have $|\gr{\hat{G}}|=|\gr{P}| = 4^q$ for $q$ qubits.\\

A second method, which will be more efficient when $|\gr{\hat{G}}|$ is very big, is to not try to estimate all means $\bra{Q}\widetilde{\mc{G}}_{\mathrm{inv}} \widetilde{\vec{\mc{G}}}\hat{\mc{G}}\ket{\rho}$ individually. Instead we will perform an empirical estimate of $k_m^{\lambda}(\vec{G})$ directly by the following procedure.
\begin{enumerate}[leftmargin=*]
	\setlength\itemsep{-0.2em}
	\item Sample $\hat{G}\in \gr{\hat{G}}$ uniformly at random 
	\item Prepare the state $\mc{G}_{\mathrm{inv}}\mc{G}_m\cdots \mc{G}_1\mc{\hat{G}}\ket{\rho}\vspace{0.5mm}$ and measure it once obtaining a result $b(\hat{G}) \in\{0,1\}$
	\item Compute  $x(\hat{G})~\!=~\!{\chi}_{\hat{\phi}}(\hat{G})|{\hat{\phi}}|b(\hat{G})~\!\in~\!\{0,{\chi}_{\hat{\phi}}(\hat{G})|{\hat{\phi}}|\}$
	\item Repeat sufficiently many times and compute the empirical average of $x(\hat{G})$
\end{enumerate}
Every time we perform steps (1)-(3) we are are drawing a single sample from a certain probability distribution. This probability distribution is a \emph{mixture distribution}. Mixture distributions are defined as linear combinations of probability distributions. Note that there there is a difference between a mixture of distributions and an linear combination of random variables~\cite{titterington1985statistical}. Formally the mixture distribution induced by the procedure outlined above will be defined as
\begin{equation}
\Lambda_\lambda = \avg_{\hat{G}\in \gr{\hat{G}}} |{\hat{\phi}}|{\chi_{\hat{\phi}}(\hat{G})}\Lambda_{\mathrm{Bern}, \hat{G}}
\end{equation}
where $\Lambda_{\mathrm{Bern}, \hat{G}}$ is a Bernoulli distribution with mean $\mu_{\Lambda_{\mathrm{Bern}, \hat{G}}}=\bra{Q}\widetilde{\mc{G}}_{\mathrm{inv}} \widetilde{\vec{\mc{G}}}\hat{\mc{G}}\ket{\rho}$. The distribution $\Lambda_\lambda$ will in general be rather complex (as it is the mixture of $|\gr{\hat{G}}|$ Bernoulli distributions). A useful feature of mixture distributions however, is that their mean is given by the weighted average the means of the mixing distributions with the weights precisely given by the weights in the mixture~\cite{titterington1985statistical}. In particular that means we have for $\mu_{\Lambda_\lambda}$ that 
\begin{align}
\mu_{\Lambda_\lambda} &= \avg_{\hat{G}\in \gr{\hat{G}}} |{\hat{\phi}}|{\chi_{\hat{\phi}}(\hat{G})}\mu_{\Lambda_{\mathrm{Bern}, \hat{G}}}\\
&= \avg_{\hat{G}\in \gr{\hat{G}}} |{\hat{\phi}}|{\chi_{\hat{\phi}}(\hat{G})}\bra{Q}\widetilde{\mc{G}}_{\mathrm{inv}} \widetilde{\vec{\mc{G}}}\hat{\mc{G}}\ket{\rho}\\
&= k^\lambda_m(\vec{G}).
\end{align}

 Moreover the distribution $\Lambda_\lambda$ is upper and lower bounded by $\pm|{\hat{\phi}}|\chi*_\lambda$ where  $\chi^*_\lambda = \max_{\hat{G}}|\chi_{\hat{\phi}}(\hat{G})|$. This means that we can use the concentration inequality \cref{eq:concentration} to bound the number of times we need to sample from $\Lambda_\lambda$ (via the procedure above) in order to estimate $k^\lambda_m(\vec{G})$. Note that the number of samples that need to be taken will now not depend on $|\gr{\hat{G}}|$ at all. \\

 As an illustration consider the follow example. Let $\gr{\hat{G}}$ be the Pauli group $\gr{P}_q$ on $q$ qubits. This group is of size $|\gr{P}|=4^q$. However, as discussed above, the subrepresentations of of the Pauli transfer matrix representation $\mc{P}$ are all of dimension one and are indexed by the normalized Pauli matrices $\sigma\in \{\sigma_0\}\cup \bsq$. Let's perform character randomized benchmarking where $\hat{\phi}=\phi_\sigma$ for some normalized Pauli matrix $\sigma$. Since the representation $\phi_\sigma$ is one dimensional we have $|\phi_\sigma| =1$. Moreover we have that the character $|\chi_{\sigma}(P)|=1$ for all $P\in \gr{P}_q$. This means that the distribution $\Lambda_\sigma$ is upper and lower bounded by $\pm 1 $. If we now want to estimate the mean $k_m^\lambda(\vec{G})$ for a particular sequence $\vec{G}$ we can perform the procedure above to sample from $\Lambda_\sigma$. Using the concentration inequality \cref{eq:concentration} see that for a confidence interval of size $\epsilon =0.02$ and confidence $\delta = 0.99$  around the mean $\mu_{\Lambda_\sigma} = k^\lambda_m(\vec{G})$ we need to draw 
 \begin{equation}
 N \geq \frac{\log(2/0.99)(1- (-1))^2}{0.02^2} = 1769
\end{equation}
samples. Note that this number is both `reasonable' and completely independent of the number of qubits $q$. It is moreover an overestimate which could be easily improved using more knowledge of the underlying probability distribution\\

We make a final note about step (1) in the procedure for estimating $\bra{Q}\mc{E}\ket{\rho}$, that is the preparation of the state $\rho$. It will often be the case that the optimal state for a character randomized benchmarking procedure, is not a pure state but rather represented by a density matrix of high rank. This introduces further experimental difficulties as an experimental setup usually only gives access to pure states (by design). We can overcome this difficulty by realizing that every density matrix $\rho$ can be written as a probability distribution over pure states, that is
\begin{equation}
\rho = \sum_{\psi} p^\rho_\psi\dens{\psi},\;\;\;\;\;\;p^\rho_\psi\geq 0,\;\;\;\;\;\;\sum_{\psi}p^\rho_\psi =1.
\end{equation}
This means that $\bra{Q}\mc{E}\ket{\rho}$ is also the mean of a mixture distribution that takes values in the set $\{0,1\}$ (so the mixture is still a Bernoulli distribution). In particular it is a mixture of Bernoulli distributions with mean $\bra{Q}\mc{E}\ket{\psi}$. This means that in the case of non-pure $\rho$ we can update our sampling procedure to be
\begin{enumerate}[leftmargin=*]
	\setlength\itemsep{-0.2em}
	\item Fix a decomposition $\rho = \sum_{\psi} p^\rho_\psi\dens{\psi}$
	\item Sample $\psi$ according to $\{ p^\rho_\psi\}_\psi$
	\item Sample $\hat{G}\in \gr{\hat{G}}$ uniformly at random 
	\item Prepare the state $\mc{G}_{\mathrm{inv}}\mc{G}_m\cdots \mc{G}_1\mc{\hat{G}}\ket{\psi}\vspace{0.5mm}$ and measure it once obtaining a result $b(\hat{G}) \in\{0,1\}$
	\item Compute$\vspace{1mm}$\\ $x(\hat{G})~\!=~\!{\chi}_{\hat{\phi}}(\hat{G})|{\hat{\phi}}|b(\hat{G})~\!\in~\!\{0,{\chi}_{\hat{\phi}}(\hat{G})|{\hat{\phi}}|
\} \vspace{1mm}$
	\item Repeat sufficiently many times and compute the empirical average of $x(\hat{G})$.
\end{enumerate}
This means we are now sampling from the mixture distribution
\begin{equation}
\Lambda_\lambda = \avg_{\hat{G}\in \gr{\hat{G}}} \sum_{\psi}p^\rho_\psi|{\hat{\phi}}|{\chi_{\hat{\phi}}(\hat{G})}\Lambda_{\mathrm{Bern}, \hat{G},\psi}
\end{equation}
where $\Lambda_{\mathrm{Bern}, \hat{G},\psi}$ is now a Bernoulli distribution with mean $\bra{Q}\widetilde{\mc{G}}_{\mathrm{inv}}\widetilde{\mc{G}}_m\cdots \widetilde{\mc{G}_1\mc{\hat{G}}}\ket{\psi}$. However the same reasoning as above holds and the number of samples (repetitions of the above procedure) required to obtain an estimate for the mean of $\Lambda_\lambda$ still only depends on the interval on which $\Lambda_\lambda$ is defined, yielding no increase in the number of samples needed even when the ideal input state $\rho$ is very non-pure (has high rank).

\end{document}